\documentclass[sigconf]{acmart}
\usepackage[ruled,vlined,linesnumbered]{algorithm2e}
\usepackage{amsmath,amssymb,amsfonts}
\usepackage{graphicx}
\usepackage{textcomp}
\usepackage{tabularray}
\usepackage{xcolor}
\usepackage{subfigure}
\usepackage{stfloats}
\usepackage{multicol}
\usepackage{booktabs}
\newtheorem{theorem}{Theorem}
\usepackage{amsthm}
\usepackage{tabularx}
\usepackage{enumitem}
\usepackage{xcolor}

\setcopyright{none}
\AtBeginDocument{%
  }

\setcopyright{acmlicensed}
\copyrightyear{2018}
\acmYear{2018}
\acmDOI{XXXXXXX.XXXXXXX}
\acmConference[WWW '26]{Make sure to enter the correct
  conference title from your rights confirmation email}{April 13–17, 2026}{Dubai, United Arab Emirates.}

\acmISBN{978-1-4503-XXXX-X/2018/06}

\begin{document}
\begin{sloppypar}

%%
%% The "title" command has an optional parameter,
%% allowing the author to define a "short title" to be used in page headers.
\title{Cardinality is Not Enough: Super Host Detection via Segmented Cardinality Estimation}

%\settopmatter{authorsperrow=4}

\author{Yilin Zhao}
\affiliation{
\institution{Central South University}
\city{Changsha}
\country{China}
}
\email{yilinzhao@csu.edu.cn}

\author{Jiawei Huang}
\affiliation{
\institution{Central South University}
\city{Changsha}
\country{China}
}
\email{jiaweihuang@csu.edu.cn}

\author{Xianshi Su}
\affiliation{
\institution{Central South University}
\city{Changsha}
\country{China}
}
\email{suxianshi@csu.edu.cn}

\author{Weihe Li}
\affiliation{
\institution{The University of Edinburgh}
\city{Edinburgh}
\country{United Kingdom}
}
\email{weihe.li@ed.ac.uk}

\author{Xin Li}
\affiliation{
\institution{Central South University}
\city{Changsha}
\country{China}
}
\email{xinli@csu.edu.cn}

\author{Yan Liu}
\affiliation{
\institution{Central South University}
\city{Changsha}
\country{China}
}
\email{yliu20974@gmail.com}

\author{Jiacheng Xie}
\affiliation{
\institution{Central South University}
\city{Changsha}
\country{China}
}
\email{jiachengxie@csu.edu.cn}

\author{Qichen Su}
\affiliation{
\institution{Central South University}
\city{Changsha}
\country{China}
}
\email{qichensu@csu.edu.cn}

\author{Jin Ye}
\affiliation{
\institution{Guangxi University}
\city{Nanning}
\country{China}
}
\email{qichensu@csu.edu.cn}

\author{Wanchun Jiang}
\affiliation{
\institution{Central South University}
\city{Changsha}
\country{China}
}
\email{jiangwc@csu.edu.cn}

\author{Jianxin Wang}
\affiliation{
\institution{Central South University}
\city{Changsha}
\country{China}
}
\email{jxwang@mail.csu.edu.cn}

\iffalse
\author{
  \IEEEauthorblockN{
    Yilin Zhao\textsuperscript{1},
    Jiawei Huang\textsuperscript{1}, 
    Xianshi Su\textsuperscript{1},
    Weihe Li\textsuperscript{2},
    Xin Li\textsuperscript{1},
    Yan Liu\textsuperscript{1},
    Jiacheng Xie\textsuperscript{1},
    Qichen Su\textsuperscript{1},
    Jin Ye\textsuperscript{3}, 
    Wanchun Jiang\textsuperscript{1},
    Jianxin Wang\textsuperscript{1}
  }\\
  \IEEEauthorblockA{\textsuperscript{1}School of Computer Science and Engineering, Central South University, Changsha, China}\\
  \IEEEauthorblockA{\textsuperscript{2}School of Informatics, The University of Edinburgh, Edinburgh, United Kingdom}\\
  \IEEEauthorblockA{\textsuperscript{3}School of Computer, Electronics and Information, Guangxi University, Nanning, China}\\
  \IEEEauthorblockA{\{yilinzhao, jiaweihuang, suxianshi, xinli, jiachengxie, qichensu, jiangwc\}@csu.edu.cn},
  \IEEEauthorblockA{\{weihe.li\}@ed.ac.uk}, 
  \IEEEauthorblockA{\{yliu20974\}@gmail.com},
  \IEEEauthorblockA{\{yejin\}@gxu.edu.cn},
  \IEEEauthorblockA{\{jxwang\}@mail.csu.edu.cn}
}
\fi
  
\settopmatter{printacmref=true} % Removes citation information below abstract

\copyrightyear{2026}
\acmYear{2026}
\setcopyright{cc}
\setcctype{by}
\acmConference[WWW '26]{Proceedings of the ACM Web Conference 2026}{April 13--17, 2026}{Dubai, United Arab Emirates}
\acmBooktitle{Proceedings of the ACM Web Conference 2026 (WWW '26), April 13--17, 2026, Dubai, United Arab Emirates}
\acmPrice{}
\acmDOI{10.1145/3774904.3792071}
\acmISBN{979-8-4007-2307-0/2026/04}

\begin{abstract}
Accurately detecting super host that establishes connections to a large number of distinct peers is significant for mitigating web attacks and ensuring high quality of web service. Existing sketch-based approaches estimate the number of distinct connections called flow cardinality according to full IP addresses, while ignoring the fact that a malicious or victim super host often communicates with hosts within the same subnet, resulting in high false positive rates and low accuracy. Though hierarchical-structure based approaches could capture flow cardinality in subnet, they inherently suffer from high memory usage. To address these limitations, we propose SegSketch, a segmented cardinality estimation approach that employs a lightweight halved-segment hashing strategy to infer common prefix lengths of IP addresses, and estimates cardinality within subnet to enhance detection accuracy under constrained memory size. Experiments driven by real-world traces demonstrate that, SegSketch improves F1-Score by up to $8.04\times$ compared to state-of-the-art solutions, particularly under small memory budgets. 
\end{abstract}
    
%%
%% The code below is generated by the tool at http://dl.acm.org/ccs.cfm.
%% Please copy and paste the code instead of the example below.
%%
\begin{CCSXML}
<ccs2012>
 <concept>
  <concept_id>00000000.0000000.0000000</concept_id>
  <concept_desc>Do Not Use This Code, Generate the Correct Terms for Your Paper</concept_desc>
  <concept_significance>500</concept_significance>
 </concept>
 <concept>
  <concept_id>00000000.00000000.00000000</concept_id>
  <concept_desc>Do Not Use This Code, Generate the Correct Terms for Your Paper</concept_desc>
  <concept_significance>300</concept_significance>
 </concept>
 <concept>
  <concept_id>00000000.00000000.00000000</concept_id>
  <concept_desc>Do Not Use This Code, Generate the Correct Terms for Your Paper</concept_desc>
  <concept_significance>100</concept_significance>
 </concept>
 <concept>
  <concept_id>00000000.00000000.00000000</concept_id>
  <concept_desc>Do Not Use This Code, Generate the Correct Terms for Your Paper</concept_desc>
  <concept_significance>100</concept_significance>
 </concept>
</ccs2012>
\end{CCSXML}
     
\iffalse
\ccsdesc[500]{Do Not Use This Code~Generate the Correct Terms for Your Paper}
\ccsdesc[300]{Do Not Use This Code~Generate the Correct Terms for Your Paper}
\ccsdesc{Do Not Use This Code~Generate the Correct Terms for Your Paper}
\ccsdesc[100]{Do Not Use This Code~Generate the Correct Terms for Your Paper}
\fi

%%
%% Keywords. The author(s) should pick words that accurately describe
%% the work being presented. Separate the keywords with commas.

%\keywords{sketch, cardinality, subnet}
%% A "teaser" image appears between the author and affiliation
%% information and the body of the document, and typically spans the
%% page.

%\received{20 February 2007}
%\received[revised]{12 March 2009}
%\received[accepted]{5 June 2009}

%%
%% This command processes the author and affiliation and title
%% information and builds the first part of the formatted document.

%%%%\ccsdesc[500]{Web mining and content analysis~Web measurements}

%\ccsdesc{Do Not Use This Code~Generate the Correct Terms for Your Paper}
%\ccsdesc[100]{Do Not Use This Code~Generate the Correct Terms for Your Paper}

\begin{CCSXML}
<ccs2012>
   <concept>
       <concept_id>10003033.10003079.10011704</concept_id>
       <concept_desc>Networks~Network measurement</concept_desc>
       <concept_significance>500</concept_significance>
       </concept>
 </ccs2012>
\end{CCSXML}

\ccsdesc[500]{Networks~Network measurement}

%%
%% Keywords. The author(s) should pick words that accurately describe
%% the work being presented. Separate the keywords with commas.
\keywords{super host; web anomaly detection; sketch; cardinality; subnet}

\renewcommand{\shortauthors}{Yilin Zhao et al.}
\maketitle

\newcommand\webconfavailabilityurl{https://doi.org/10.5281/zenodo.17284567}
\ifdefempty{\webconfavailabilityurl}{}{
\begingroup\small\noindent\raggedright\textbf{Resource Availability:}\\
The source code of this paper has been made publicly available at \url{\webconfavailabilityurl}.\cite{seg}
\endgroup
}

\section{Introduction}
In recent years, super spreader and super receiver (collectively called super host) have become increasingly prevalent in web attacks \cite{network, spreadsketch}. A super spreader is typically a host that connects to a large number of distinct destinations and launches various types of attacks such as IP scanning \cite{port, internet}, spam distribution \cite{spam, report}, and Carpet Bombing DDoS attacks \cite{13, 7, 11, fast, 41, dollm, gmcb}. For instance, the Mirai botnet infected nearly 65,000 IoT devices within the first 20 hours of its outbreak and later stabilized at approximately 200,000 $\sim$ 300,000 active infections \cite{infection}. In contrast, a super receiver is a host receiving traffic from numerous distinct sources, such as traditional DDoS attacks from botnets \cite{modern, anomaly, 2025report}. These super-host attacks degrade web service quality and inflate operations costs \cite{fs}, but remain difficult to detect under constrained monitoring resources and high traffic load \cite{geometric, spike}.
    
Owing to achieving a good trade-off among accuracy, speed and memory usage, sketch has emerged as the canonical solution for approximate flow estimation, especially in identifying super hosts \cite{spreadsketch, spike, geometric, kjoin, couper, supersketch}. These approaches typically employ cardinality estimators (e.g., Linear Counting \cite{lc} or HyperLogLog \cite{hll} estimators) to identify super hosts. For example, SpreadSketch \cite{spreadsketch} finds super hosts by estimating the number of flows (i.e., flow cardinality) from a host to different destinations. However, these approaches struggle to distinguish malicious hosts from benign ones, which may also have large flow cardinality. For instance, some web servers or DNS resolvers also exhibit high flow cardinality due to their huge number of legitimate connections with diverse peers.
%For instance, some analytics services or metadata catalogs also exhibit high access cardinality due to their large number of legitimate queries with diverse clients.

%    

Essentially, almost all state-of-the-art theories and practices on super-host detection have focused on single-dimension statistics of flow cardinality in isolation from other informative traffic patterns, causing the false positive problem and low detection accuracy. In this paper, through empirical study, we reveal that the flows sent to or from a super host usually have the same network address in source or destination IP addresses, since attacking flows are usually generated by a large number of zombie hosts within the same subnet or are sent to many victim hosts located in the same subnet. %, which is a pattern commonly observed in clustered database deployments where contiguous IP addresses are assigned to adjacent nodes or partitions. 
These facts indicate that the number of flows (called subnet cardinality) with the same subnet address can be used to assist super-host detection. For instance, contemporary web services are typically deployed as geographically co-located clusters where hosts share a common network address. Consequently, subnet-oriented attacks are far more devastating than per-IP assaults, disabling the whole set of web-service instances serving a particular application.
     
However, the IP prefix length is unknown in advance and can vary across networks. The intuitive solution to estimate subnet cardinality is employing the hierarchical structure with multiple layers to cover all possible IP prefix lengths (e.g., /8, /16, /24) and respectively count corresponding subnet cardinality \cite{sigcomm17, sigmod04, vldb, alenex, tkdd08, itc09}. However, this hierarchical structure introduces prohibitive memory overhead, making it impractical in real-world deployments.

To tackle this challenge, we propose a novel super-host detection approach that accurately estimates the subnet cardinality under constrained memory. Our design introduces a halved-segment hashing strategy to map flows with the same prefix into compact region of bitmap, enabling lightweight estimation of subnet length. Moreover, we incorporate the bitmap-based cardinality estimation to quantify the number of distinct connections sharing the same subnet address. Therefore, our approach accurately identifies flows of super host characterized by same network address and high subnet cardinality, while maintaining much lower memory overhead compared to approaches based on the hierarchical structure.

% In summary, our contributions are as follows:
% \begin{itemize}
%     \item   We propose SegSketch, a memory-efficient sketch that integrates cardinality estimation with a segment-wise guided hashing mechanism to identify the super hosts. It achieves a good balance between memory overhead and detection accuracy for super hosts characterized by both same IP prefix and high subnet cardinality (\S\ref{sec:design}). 
%     \item 	We construct mathematical models to theoretically analyze %the performance of SegSketch, %including %the memory complexity under segment-wise bitmap organization, the time complexity of per-packet updates and halved-segment hashing
%     the error bound of subnet cardinality estimation associated with SegSketch (\S\ref{sec:theoretical analysis}).
%     \item 	We conduct evaluation based on real-world datasets. The results show that SegSketch achieves higher detection accuracy and more efficient memory utilization than state-of-the-art approaches. Furthermore, we implement SegSketch using P4\cite{p4} and deploy it on a programmable switch with small hardware resource usage (\S\ref{sec:evaluation}, \S\ref{sec:P4 implementation}).
% \end{itemize}

In summary, our contributions are as follows:

% \vspace{-4mm}
% \begin{tabularx}{0.95\linewidth}{@{}p{1.5em}@{}X@{}}
% \bigdot & We propose SegSketch, a memory-efficient sketch that integrates cardinality estimation with a segment-wise guided hashing mechanism to identify the super hosts. It achieves a good balance between memory overhead and detection accuracy for super hosts characterized by both same IP prefix and high subnet cardinality (§3). \\[0.8ex]
% \bigdot & We construct mathematical models to theoretically analyze the error bound of subnet cardinality estimation associated with SegSketch (§4). \\[0.8ex]
% \bigdot & We conduct evaluation based on real-world datasets. The results show that SegSketch achieves higher detection accuracy and more efficient memory utilization than state-of-the-art approaches. Furthermore, we implement SegSketch using P4~\cite{p4} and deploy it on a programmable switch with small hardware resource usage (§5, §6). \\
% \end{tabularx}
% \vspace{-3mm}
\begin{itemize}[leftmargin=2em,labelsep=0.5em,topsep=-0.5ex,itemsep=0ex,parsep=0pt] % 缩进4个字符宽度
\item We propose SegSketch, a memory-efficient sketch that integrates cardinality estimation with halved-segment hashing to identify super hosts. It achieves a good balance between memory overhead and detection accuracy for super hosts characterized by same IP prefix and high subnet cardinality (§3).
\item We establish a mathematical model to analyze the error bound for subnet cardinality estimation of SegSketch. Furthermore, we prove that using host-address hashing consistently yields smaller error than using full-address hashing (§4).
%We establish a theoretical model of SegSketch with error bounds for subnet cardinality estimation, and prove that host-address hashing yields lower error than full-address hashing (§4).
%We establish a mathematical model to analyze the error bound for subnet cardinality estimation of SegSketch. Furthermore, we theoretically demonstrate that using host-address hashing consistently yields smaller estimation error of subnet cardinality than using full-address hashing (§4).
%\item We construct a theoretical model to analyze the error bound of subnet cardinality estimation of SegSketch (§4). 
\item %We conduct evaluation based on real-world datasets. The results show that SegSketch achieves higher detection accuracy and more efficient memory utilization, with up to $8.97\times$ F1-Score improvement over state-of-the-art approaches under the same memory budget. Furthermore, we implement SegSketch using P4~\cite{p4} and deploy it on a programmable switch with small hardware resource usage (§5, §6).
We conduct trace-driven evaluation, and deploy SegSketch on a programmable switch using P4 \cite{p4} with low hardware overhead, requiring only 1.77\% of SRAM. SegSketch achieves higher detection accuracy and more efficient memory utilization, reaching up to $8.04\times$ improvement in F1-Score (§5, §6).
\end{itemize}

%The remainder of this paper is structured as follows. Section~\ref{sec:motivation} outlines the motivation behind our work. Section~\S\ref{sec:design} details the design of the proposed method. In Section~\S\ref{sec:theoretical analysis}, we provide the mathematical analyses. Section~\S\ref{sec:evaluation} presents the trace-driven performance evaluation, while Section~\S\ref{sec:P4 implementation} describes the P4-based implementation. Section~\ref{sec:related work} reviews related studies. Finally, Section~\ref{sec:conclusion} concludes the paper.
\vspace{-2mm}
\section{Motivation}
\label{sec:motivation}

\subsection{Existing Solutions and Limitations}\label{AA}

A super host sends or receives a huge number of flows with distinct IP addresses. If the estimated flow cardinality of a host exceeds a given threshold, it will be identified as a super host. To address the challenge of limited memory size, fast processing speed and high accuracy requirement, sketch-based approaches use compact data structures to estimate flow cardinality \cite{spreadsketch, couper, supersketch, extended, geometric, cds}. These approaches typically hash IP addresses into a bitmap. Since the same IP address will only be hashed into one bit in the bitmap, one flow will be recorded in the same bit, no matter how many packets it delivers. Due to hash collisions, however, the flow cardinality may be under-estimated. To address this issue, the advanced algorithms such as Linear Counting \cite{lc}, HyperLogLog \cite{hll}, and Multi-Resolution Bitmap \cite{mrb}, employ various mathematical techniques, such as Bernoulli trials, to estimate the true flow cardinality. %of the flow. 

However, most super-host detection approaches only focus on the flow cardinality estimation of full IP address, without awareness of the fact that connections of a super host usually share the same subnet address. We measure the percentage of super hosts whose connections are within a single subnet across three attack datasets: SDN-DDoS traffic dataset \cite{sdn}, MACCDC2012 dataset \cite{maccdc}, and UNSW-NB15 dataset \cite{unsw}. The results show that nearly every super host establishes connections with a single subnet, with SDN-DDoS exhibiting the highest ratio at 100\%, followed by MACCDC2012 at 98.11\%, and UNSW-NB15 at 95.45\%.

The unawareness of this feature results in low accuracy of super host detection. Specifically, as illustrated in Figure~\ref{fig:ss}, host 1 with IP address $src_1$ connects to many hosts within a single subnet, while host 2 with IP address $src_2$ connects to hosts across different subnets. All the destination IP addresses of each host are hashed in one bitmap. According to the number of hashed bits, the flow cardinality can be estimated. As a result, both hosts have equal estimated flow cardinality and are identified as super hosts. 
%Host 1 is the real attacker that lanuchs IP scanning in one subnet. However, host 2 is the benign host that connects to many destinations in the whole network. 
However, host 1 is the real attacker that launches IP scanning in one subnet, while host 2 is the benign host that connects to many destinations in the whole network. This result indicates that, without the subnet information, current flow cardinality estimators easily cause the false positive problem in detection of real attacker.

\vspace{-1mm}  
 \begin{figure}[htbp]
\vspace{-3mm}
\centering  
\includegraphics[width=0.83\linewidth]{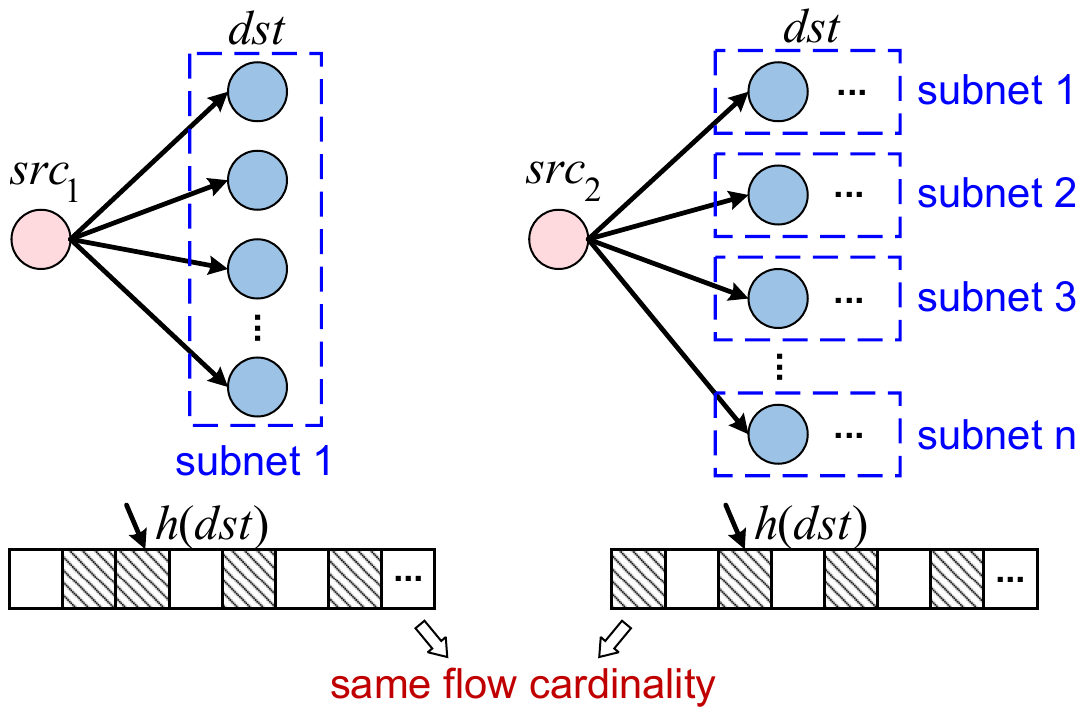}
\vspace{-2mm}
\caption{Sketch-based flow cardinality estimation approach.}
\vspace{-1mm}
 \label{fig:ss}
\end{figure}

To address this issue, the hierarchical cardinality estimators \cite{sigcomm17, sigmod04, vldb, alenex, tkdd08, itc09} have been proposed. For example, Randomized Hierarchical Heavy Hitter (RHHH) \cite{sigcomm17} has a hierarchical structure with multiple layers. As shown in Figure~\ref{fig:hhh}, each layer uses one cardinality estimator to track the flow cardinality for a specific subnet length. Specifically, for an incoming packet, the host address with different lengths in its destination address is hashed into the corresponding layers. For example, layer 1 estimates subnet cardinality with the last 24 bits of address. Hosts whose flow cardinality exceeds a given threshold in the layer are classified as super hosts. 
   
\vspace{-2mm}
\begin{figure}[htbp]
\centering  
\includegraphics[width=0.74\linewidth]{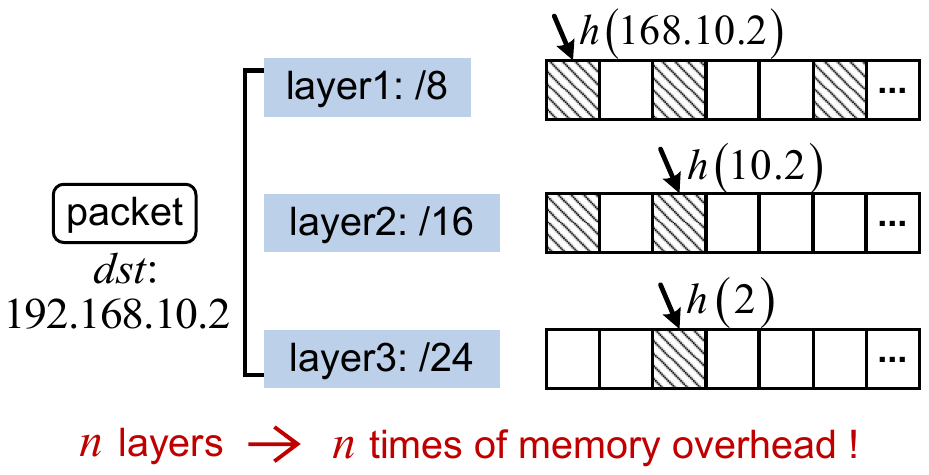}
\vspace{-1mm}
	\caption{Hierarchical approach for cardinality estimation.}
   \vspace{-2mm}
 \label{fig:hhh}
\end{figure}

Unfortunately, since the subnet length is unknown in advance and can vary across networks, the hierarchical cardinality estimator typically faces a dilemma between detection accuracy and memory usage. Increasing the number of layers can improve detection accuracy by detecting more subnet lengths, but this comes with a dramatic increase in memory usage, making it difficult to be deployed on network devices with small on-chip memory.
 
%\vspace{-1mm}

\subsection{Empirical Study}\label{AA}
To gain deep insights, we adopt two single-layer cardinality estimation approaches including SpreadSketch \cite{spreadsketch} and Couper \cite{couper}, together with the multi-layer cardinality estimator RHHH to conduct empirical tests on a combined dataset of UNSW-NB15 \cite{unsw} and CAIDA2016 \cite{caida}. This mixed dataset includes realistic normal and attack flows, such as IP scanning and worm propagation.

\vspace{-3mm}
\begin{figure}[htbp]
    \centering
    % Case1-1
    \subfigure[Precision]{
        \includegraphics[width=0.475\linewidth]{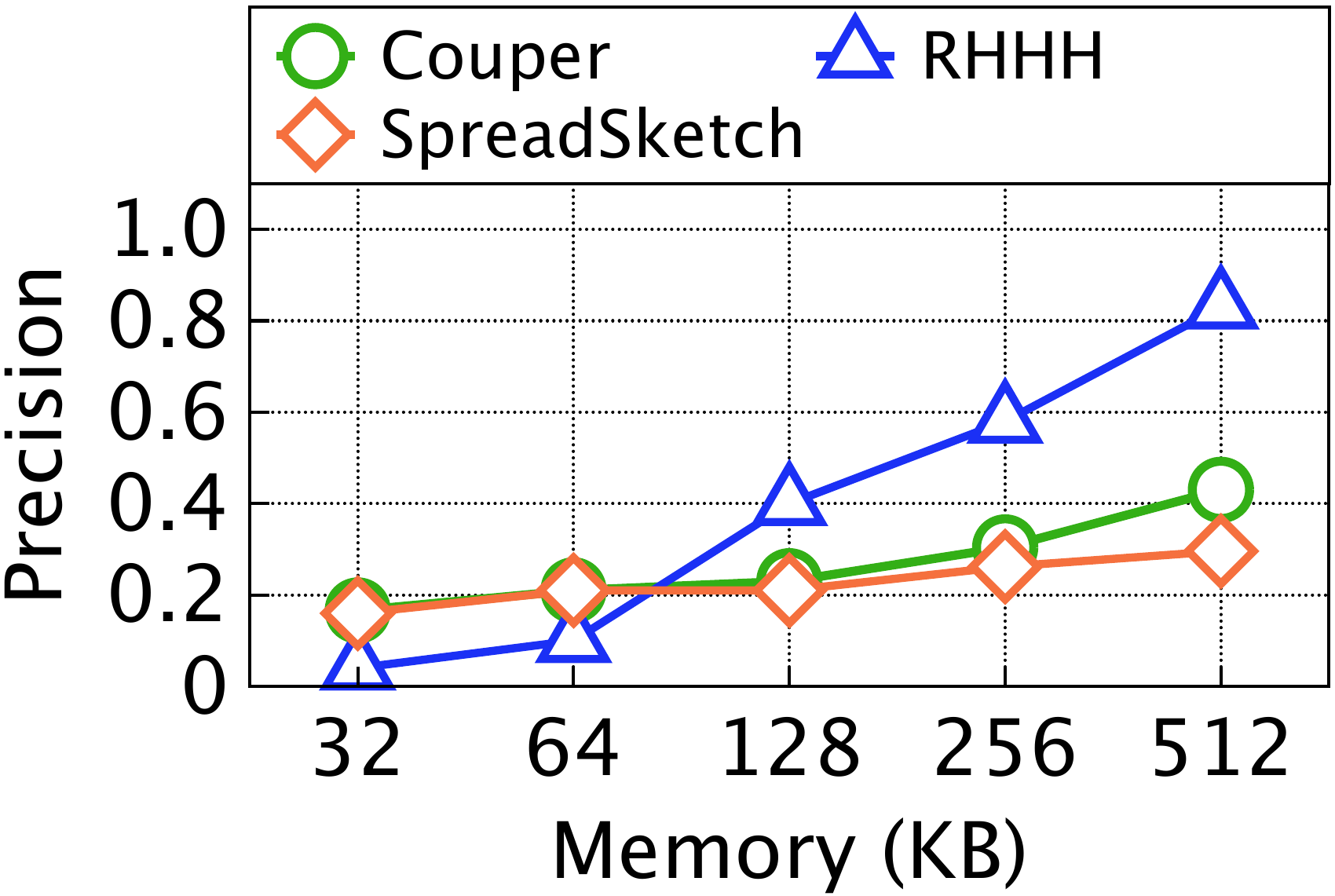}
     \label{fig:a}
    }
    \subfigure[Recall]{
        \includegraphics[width=0.475\linewidth]{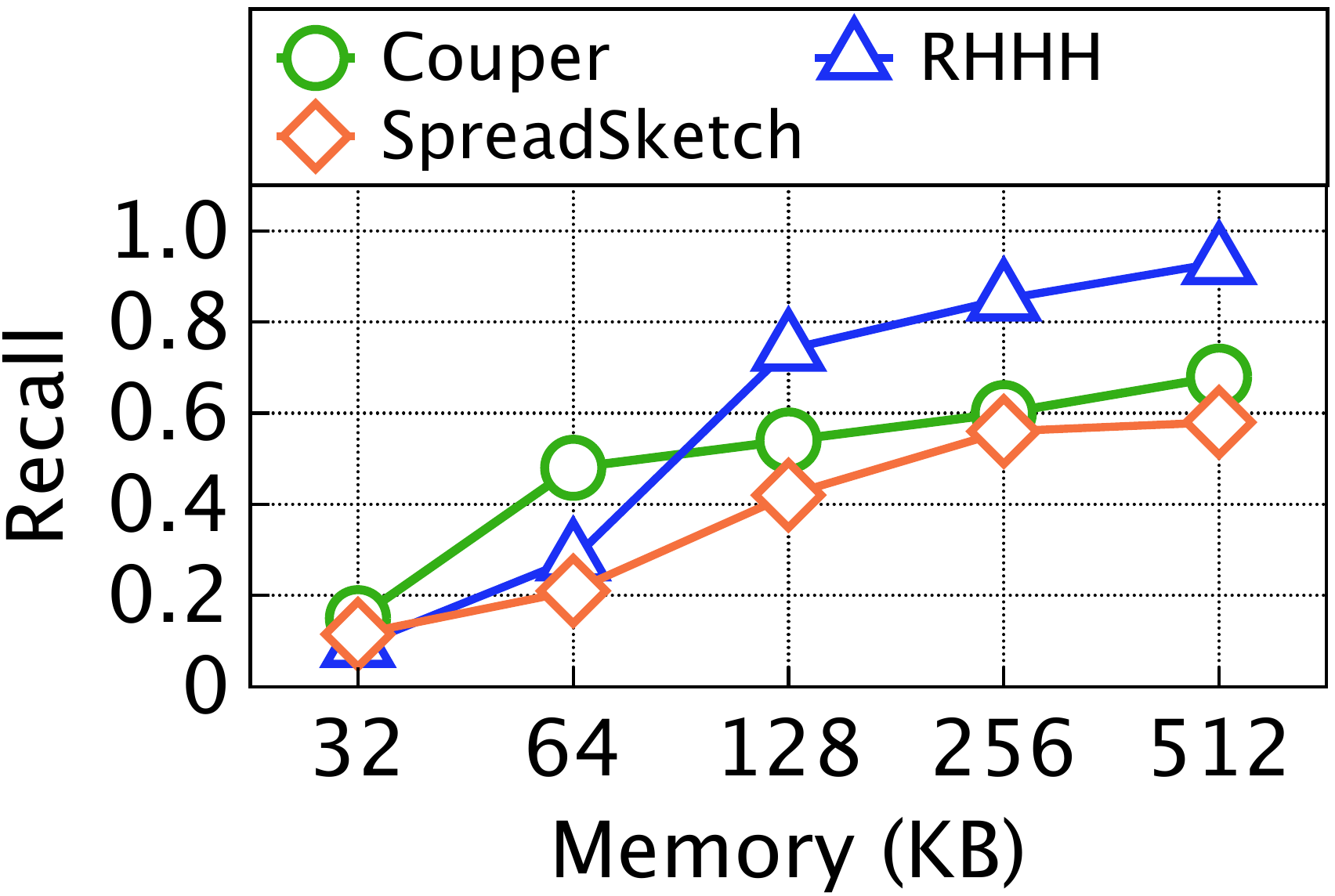}
     \label{fig:b}
    }        
    \\
    % Case2
    \subfigure[F1-Score]{
        \includegraphics[width=0.475\linewidth]{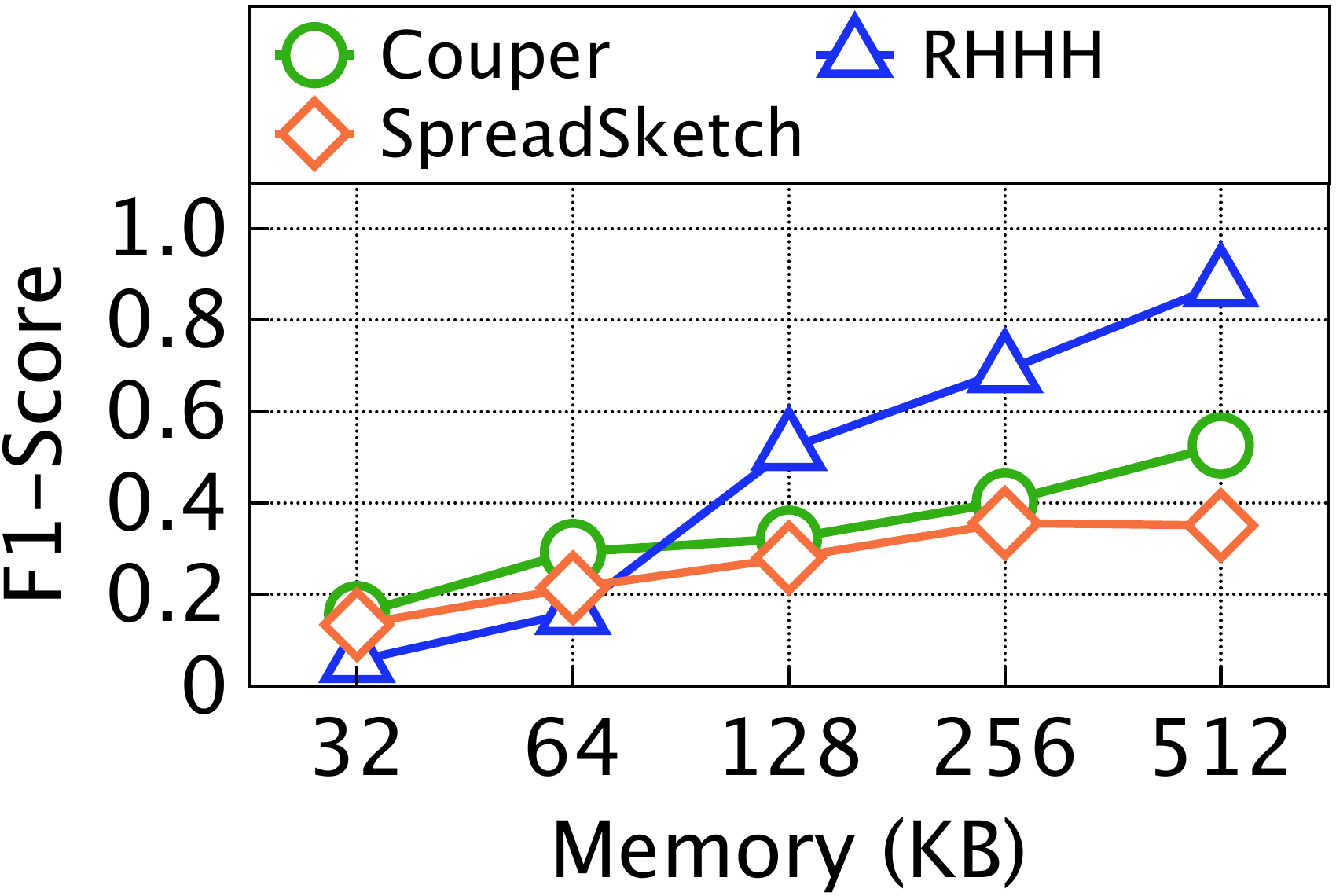}
    \label{fig:c}

    }
    % Case3
    \subfigure[ARE]{
        \includegraphics[width=0.475\linewidth]{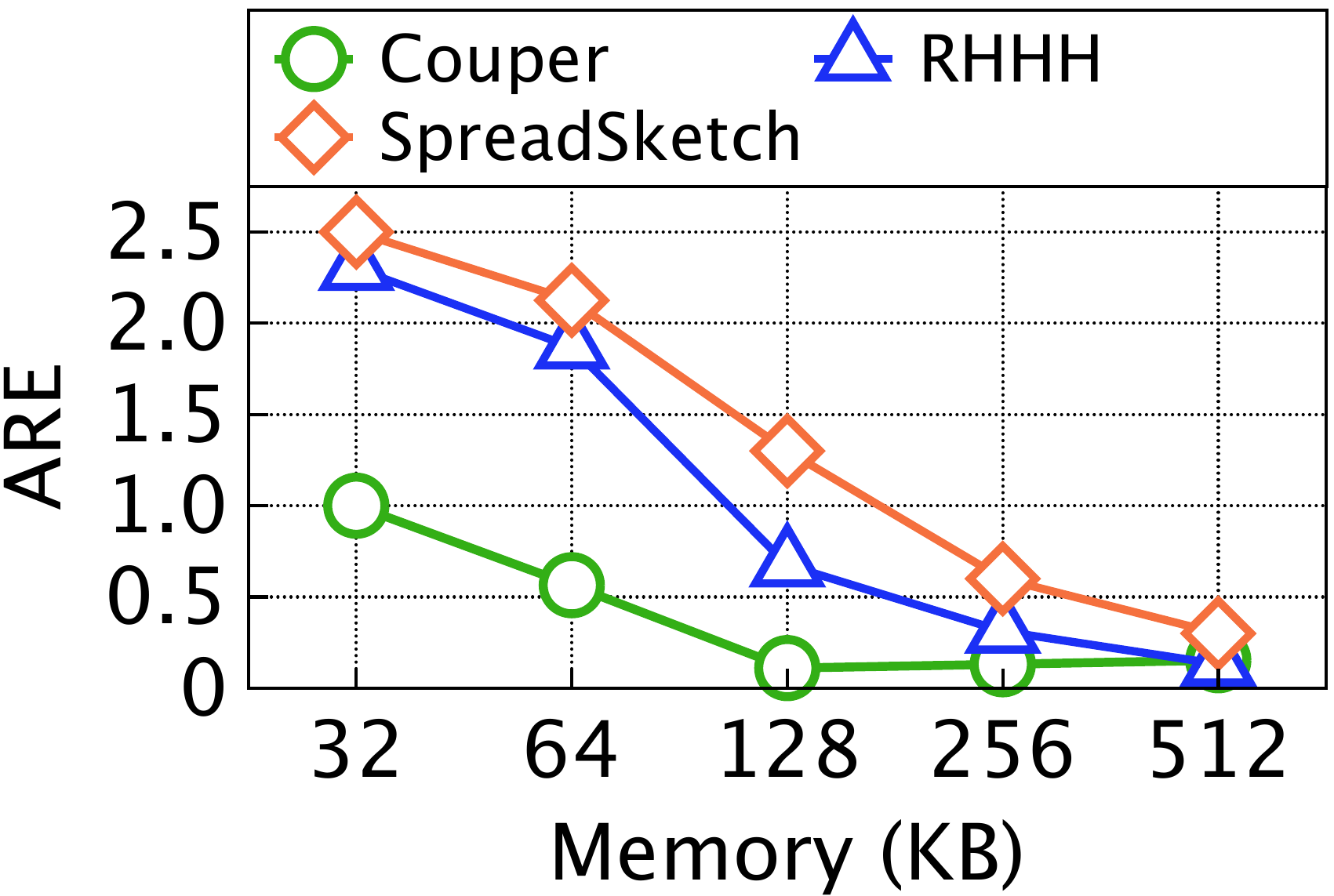}
    \label{fig:d}
    }
  \vspace{-3mm}
    \caption{Super spreader detection performance.}
    \vspace{-2mm}
    \label{fig:mawii}
\end{figure}
    
The experimental results in Figure~\ref{fig:mawii} show that all these approaches demonstrate relatively low precision in identifying attack flows across all memory sizes, with RHHH exhibiting higher accuracy compared to the two sketch-based cardinality estimation methods. This is due to the inability of sketch-based cardinality estimation methods to distinguish whether the high-cardinality flows are within a single subnet or distributed over the entire network. On the other hand, despite its higher precision, RHHH struggles to achieve optimal performance under tight memory budgets due to its high overhead introduced by the multi-layer structure.

   %\vspace{-1mm}
   
Based on the above investigation, we reveal that, under the small memory limitation, it is hard for existing sketch-based or hierarchical cardinality estimation approaches to identify super hosts with huge number of connections sharing the same subnet addresses. These findings motivate us to design a novel approach to accurately identify super hosts based on the common IP prefix identification.

\section{Design of SegSketch}
\label{sec:design}

To address the above challenge, we propose a segmented cardinality estimation scheme called SegSketch that utilizes a compact structure and efficiently identifies super hosts via common IP prefix identification and cardinality estimation.

\subsection{Data Structure}\label{AA}
The data structure of SegSketch is shown in Figure~\ref{fig:data structure}, which consists of $r$ rows, each containing $c$ buckets. Each row is associated with a pairwise-independent hash function $h_1, ..., h_r$. $B(i, j)$ represents the bucket at the $i^{th}$ row and $j^{th}$ column, where $1 \leq i \leq r$ and $1 \leq j \leq c$. Each bucket consists of three parts: (i) the host key; (ii) the subnet bitmap aiming for estimating the common IP prefix length; and (iii) the host bitmap used to estimate the subnet cardinality. 
          
\vspace{-1mm}
\begin{figure}[!h]
\vspace{-2mm}
    \centering
    \includegraphics[width=0.94\linewidth]{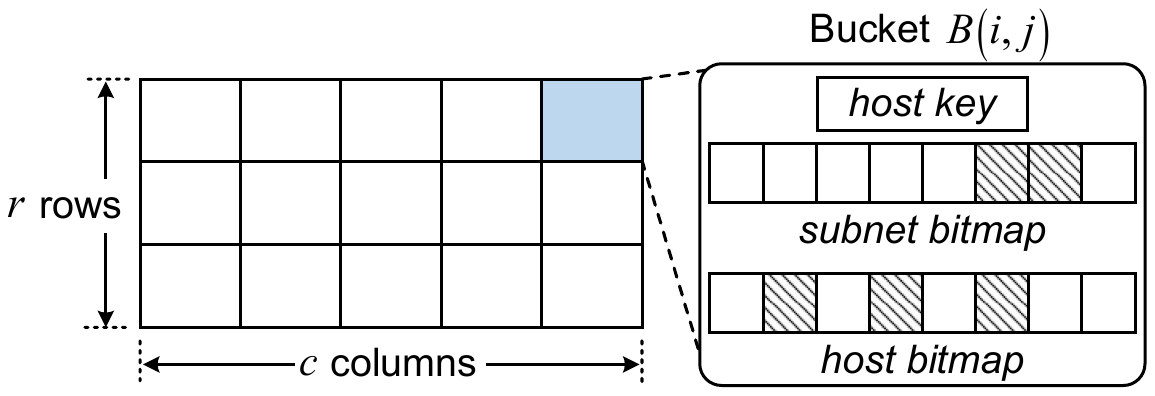}
   \vspace{-2mm}
    \caption{Data structure.}
    \vspace{-2mm}
    \label{fig:data structure}
\end{figure}
\vspace{-2mm}

\subsection{Key Design}\label{AA}
The subnet cardinality is an important feature to assist super-host detection. However, it is hard to get the prior knowledge of subnet information at the measurement nodes, making the subnet cardinality estimation impractical. Different from the hierarchical cardinality estimator requiring large memory usage, the core idea of SegSketch is to leverage a lightweight halved-segment hashing strategy to infer the subnet address length, thereby enabling accurate estimation of flow cardinality within the subnet.

{
\renewcommand{\baselinestretch}{0.9}\selectfont  % 调整为你想要的值，比如 0.85～0.95
\begin{algorithm}
\caption{Halved-Segment Hashing}
\label{alg:HalvedSegmentHashing}
\KwIn{IP address, subnet bitmap}
\KwOut{$k*G$} %\tcp{the subnet length} %with segment width $G$

$V \gets$ number of segments of the IP address\;
$L[1..V] \gets$ segment\ set\;

$left \gets 0$\;
$right \gets$ size of the subnet bitmap $- 1$\;
$k=1$;

\While{$left < right$}{
    $h \gets hash(L[k]) \bmod 2$\;
    $mid \gets (left + right) / 2$\;
    
    \eIf{$h == 1$}{
    %\tcp{the 2-value hash result chooses the right half of the subnet bitmap}
        \If{find(subnet bitmap, left, mid) }{
        %\tcp{search the left half to check if any position is set to 1}
            subnet bitmap$[mid] \gets 1$\;
            \Return{$k$}\;
        }
        $left \gets mid$\;
    }{
        \If{find(subnet bitmap, mid, right)}{
            subnet bitmap$[mid] \gets 1$\;
            \Return{$k$}\;
        }
        $right \gets mid$\;
    }
    
    $k \gets k + 1$\;
}
subnet bitmap$[left] \gets 1$\;
\Return{$k$}\;
\end{algorithm}
}

\textbf{Halved-segment hashing strategy}. As shown in Algorithm 1, the halved-segment hashing strategy firstly divides each IP address into $V$ segments (denoted as $L[1..V]$) (Lines 1 $\sim$ 2), each consisting of a fixed length of $G$ bits. Once a packet arrives, the 2-value hash operation is triggered to hash $G$ bits in the $1^{st}$ segment of its IP address to 1 or 0 (Line 7). Then the subnet bitmap is halved into two parts and only one half part is selected according to the hash result (Lines 8 $\sim$ 9, 14). This operation will be recursively conducted from the $1^{st}$ to the $(V-1)^{th}$ segment of IP address (Lines 6, 13, 18 $\sim$ 19). For all packets of a super host, if one segment of IP addresses across all packets is same, the hash results of this segment will also be the same, leading to selection of the same half in the current subnet bitmap (Line 20). Otherwise, the hash results of this segment across all packets are different, and two half parts of the current subnet bitmap will be selected (Lines 10 $\sim$ 11, 15 $\sim$ 16). Then, this binary search recursive process will stop (Lines 12, 17).
    
\vspace{-2mm}
 \begin{figure}[htbp]
    \centering
    \includegraphics[width=0.82\linewidth]{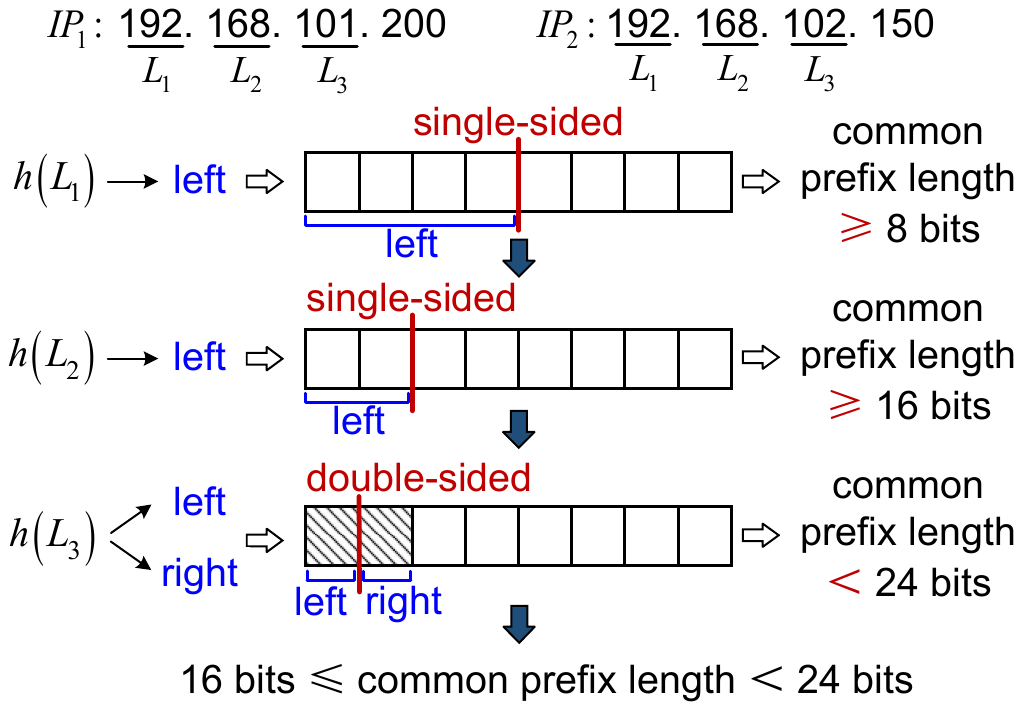}
   \vspace{-1mm}
    \caption{Example of common prefix length estimation using halved-segment hashing with IP segment width $G = 8$ bits.}
    \label{fig:8bit}
\end{figure}
 \vspace{-2mm}

Figure~\ref{fig:8bit} illustrates an example of the halved-segment hashing strategy. We assume the 32-bit destination IP is segmented into four 8-bit segments $L_1\sim L_4$. If the 2-value hash results of the first segment $L_1$ across all packets are same, then only one half of the bitmap will be hashed (i.e., left side). For the second segment $L_2$, if the hash results of all packets are still same, then the hashed part of bitmap is halved again (i.e., left side again). For the third segment $L_3$, however, if the hash results of all packets are different, then both halves of current part %hashed part of bitmap 
will be hashed (i.e., both sides). This result indicates that the third segment $L_3$ is different across all packets. Thus, the common prefix length falls in the range [16, 24) bits.

\textbf{Subnet cardinality estimation}. After the estimation of the common prefix length of IP address, the subnet address length of packets to or from the candidate super host is obtained. With the subnet address length, we then estimate the number of distinct IPs (i.e., subnet cardinality) within the subnet to identify super host using the Linear Counting estimator.
\setcounter{figure}{5}
\vspace{-7mm}
\begin{figure}[htbp]
    \centering
    \includegraphics[width=0.93\linewidth]{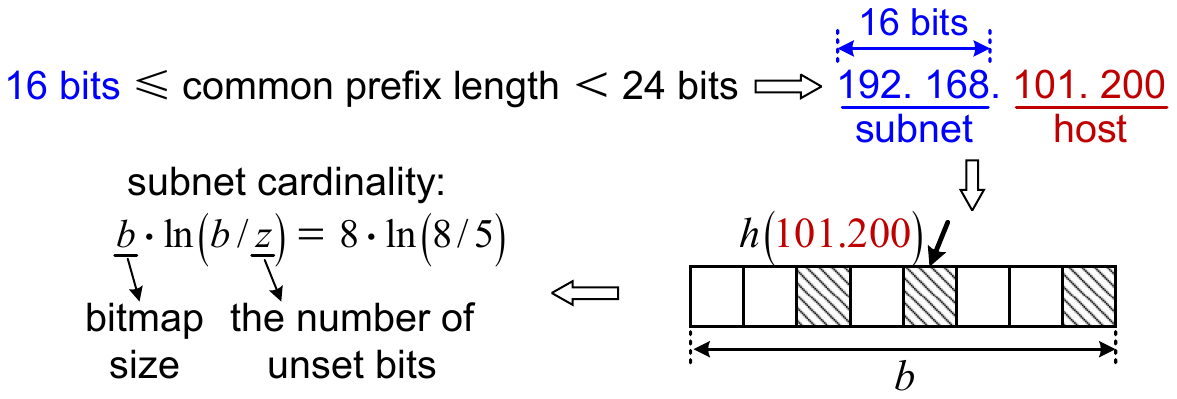}
   \vspace{-1mm}
    \caption{Example of subnet cardinality estimation.}
    \vspace{-3mm}
    \label{fig:BCC}
\end{figure}
%\vspace{-1mm}

As shown in Figure~\ref{fig:BCC}, the inferred subnet address length of packets falls in [16, 24). To ensure that the inferred host address covers the actual one, we take the low bound 16 bits of the range as the subnet address length and the remaining 16 bits are considered as the host address. Subsequently, we hash the host address into the host bitmap and use Linear Counting to estimate the flow cardinality as \( b \cdot \ln \left( {b}/{z} \right) \), where $b$ denotes the size of the host bitmap, and $z$ represents the number of unset bits in the host bitmap. Finally, since we estimate the cardinality according to the host address within the same subnet, rather than the whole IP address, we can effectively distinguish malicious hosts from benign ones, which may also have very large flow cardinality covering a wide range of subnets.

\subsection{Basic Operations}\label{AA}

SegSketch supports two basic operations: (i) the update operation selectively inserts each incoming packet into the sketch; (ii)
the query operation estimates subnet cardinality of an input host key.

\textbf{Update}. For each arrival packet with a source–destination pair $(x, y)$, SegSketch uses $x$ and $y$ as the host key for super spreader detection and super receiver detection, respectively. This section uses super spreader detection as an example to describe how SegSketch works. The host key is mapped using $r$ independent hash functions $h(\cdot)$ to select an available bucket, denoted as $\mathit{B(i, h_i(x))}$ for $i = 1, 2, \dots, r$. 

\begin{figure*}[htbp]
    \centering
    % Case1-1
    \subfigure[$x$ already exists in a hashed bucket.]{
        \includegraphics[width=0.31\linewidth]{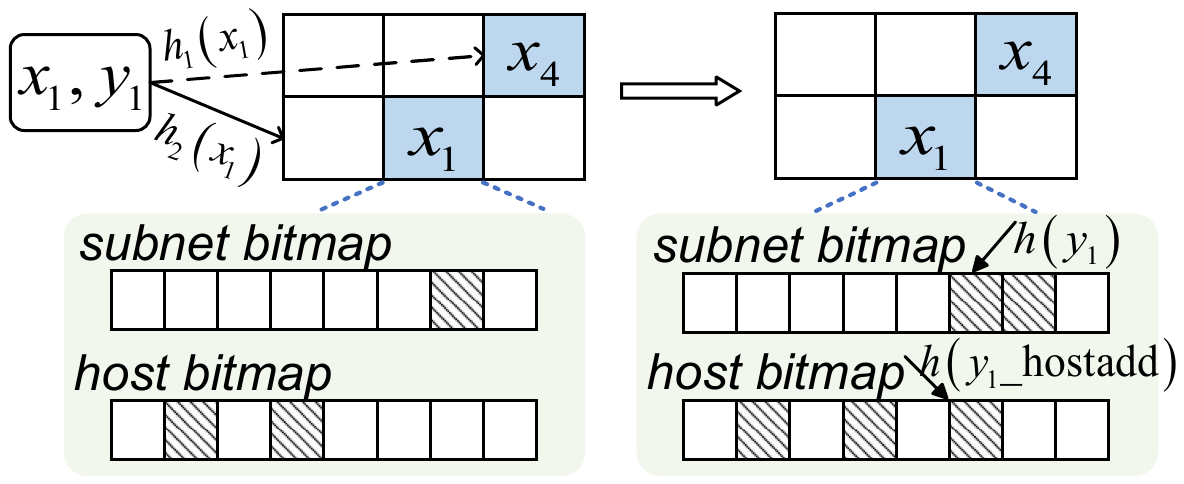}
        \vspace{-2mm}
        \label{fig:a}
    }
    \subfigure[$x$ is not stored and at least one hashed bucket is empty.]{
        \includegraphics[width=0.31\linewidth]{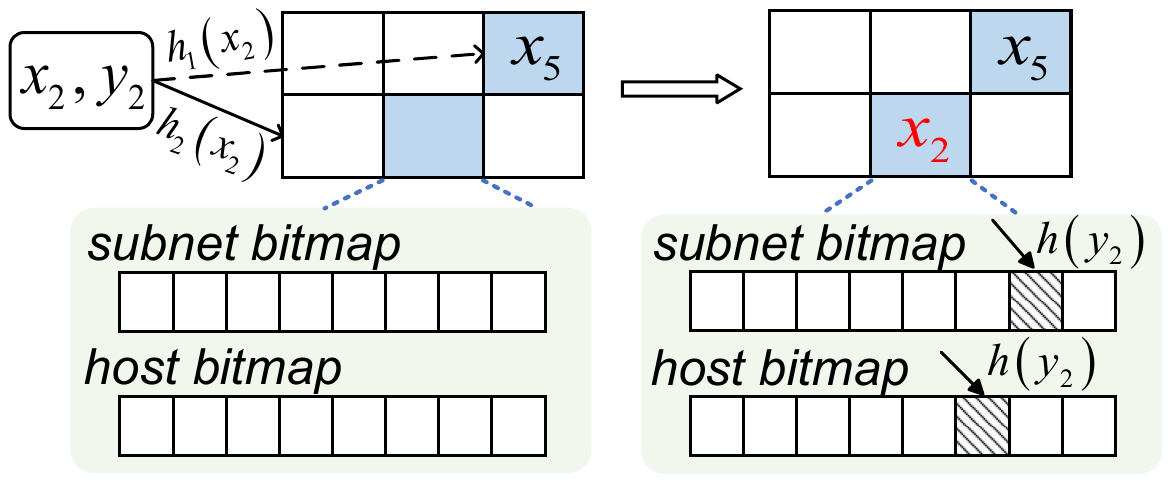}
        \vspace{-2mm}
        \label{fig:b}
    }        
    \subfigure[$x$ is not stored and all hashed buckets are not empty.]{
        \includegraphics[width=0.31\linewidth]{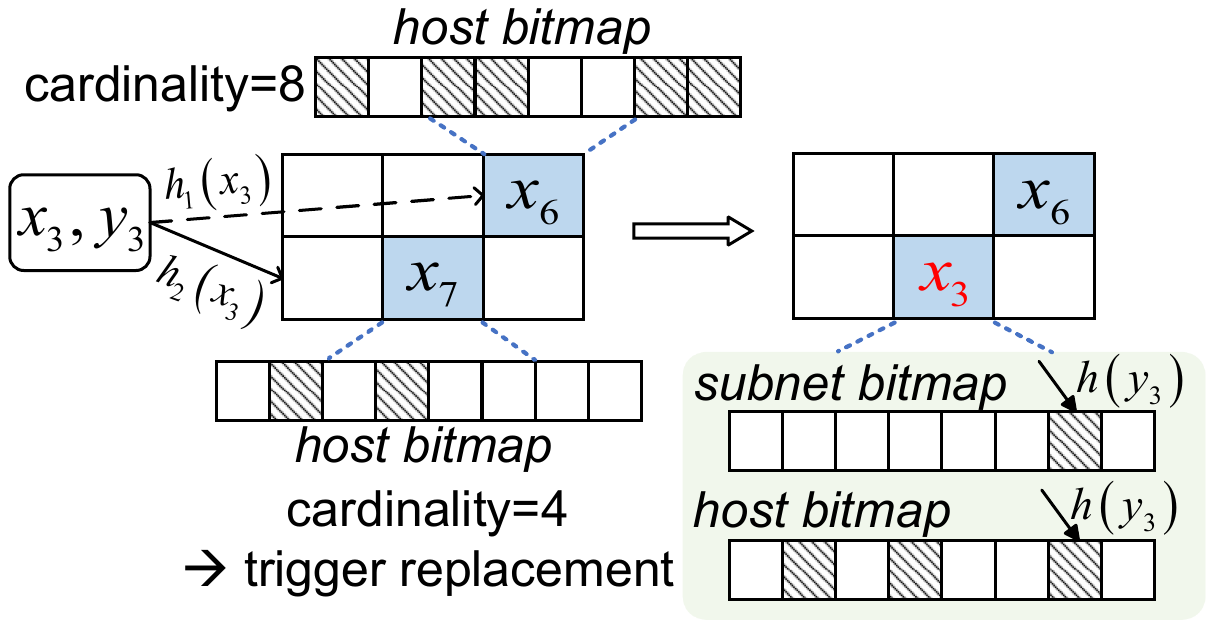}
        \vspace{-2mm}
        \label{fig:c}
    }
    \vspace{-3mm}
    \caption{Examples of the update procedure for super spreader detection.}
    \vspace{-2mm}
    \label{fig:abc}
\end{figure*}

\textbf{Case 1: $x$ already exists in a bucket}. If the host key stored in $B(i, h_i(x))$ is $x$, then the subnet bitmap is updated by adopting the halved-segment hashing strategy to $y$, and the host bitmap is updated through hashing the host address of $y$.

As shown in Figure~\ref{fig:abc}(a), upon the arrival of packet $(x_1, y_1)$, a bucket with host key $x_1$ is found. The subnet bitmap is updated by applying the halved-segment hashing to $y_1$, and the inferred common prefix length is used to extract the host address of $y_1$, which is then hashed to update the host bitmap.

\textbf{Case 2: $x$ is not found and at least one hashed bucket is empty}. If $x$ is not found in any of the $r$ hashed buckets and at least one of them is empty, it is inserted into the first empty bucket. The sketch sets the host key to $x$, and initializes both the subnet bitmap and host bitmap. The subnet bitmap is initialized via halved-segment hashing on $y$, while the host bitmap is updated by hashing $y$, as the subnet prefix cannot yet be determined on the first arrival.

As shown in Figure~\ref{fig:abc}(b), when a packet $(x_2, y_2)$ arrives, none of the hashed buckets contains host $x_2$, and then it is inserted into the first empty bucket among the hashed buckets. The host key is set to $x_2$, the subnet bitmap is initialized via the halved-segment hashing of $y_2$, and a bit in the host bitmap is set by hashing the full $y_2$.

\textbf{Case 3: $x$ is not stored and all hashed buckets are not empty}. If $x$ does not appear in any of $r$ hashed buckets and all hashed buckets are not empty, the sketch performs a probability-based replacement strategy on the bucket with the smallest subnet cardinality among hashed buckets. The replacement probability is defined as
\begin{equation}
\Pr = \frac{1}{b \cdot \ln \left( b/z \right) + 1}, 
\end{equation} 
where $b \cdot \ln(b/z)$ estimates the subnet cardinality of the host stored in the bucket. This design ensures that a host already stored in the bucket with higher subnet cardinality is less likely to be replaced, while a newly arriving host $x$ that sends packets more frequently has a greater chance of replacing it through repeated replacement attempts.
If $x$ successfully triggers the replacement operation, the corresponding bucket is updated as follows: (i) the host key field is set to $x$; (ii) the subnet bitmap is cleared to eliminate residual prefix information from the previous host that may interfere with the subnet inference of the new host, and is reinitialized via the halved-segment hashing on $y$; and (iii) the entire $y$ is hashed to update the host bitmap. Otherwise, $x$ is discarded.

%As shown in Figure~\ref{fig:abc}(c), a packet $(x_3, y_3)$ arrives and attempts to replace the host of the bucket containing the smallest subnet cardinality based on the replacement probability. Suppose the replacement condition is met, the bucket's host key is updated to $x_3$, the subnet bitmap is cleared and reinitialized via halved-segment hashing on $y_3$, and the host bitmap is updated by hashing $y_3$.

As shown in Figure~\ref{fig:abc}(c), upon the arrival of packet $(x_3, y_3)$, $x_3$ attempts to replace the host  of the bucket that contains the smallest subnet cardinality based on the replacement probability. Suppose the replacement condition is met, the bucket's host key is updated to $x_3$, the subnet bitmap is cleared and initialized via halved-segment hashing on $y_3$, and the host bitmap is updated by hashing $y_3$.
 
%As shown in Figure~\ref{fig:abc}(c), upon the arrival of packet $(x_3, y_3)$, $x_3$ attempts to replace the host in one of the hashed buckets based on the replacement probability. Suppose the replacement condition is met for $x_7$, the bucket's host key is updated to $x_3$, the subnet bitmap is cleared and reinitialized via halved-segment hashing on $y_3$, and the host bitmap is updated by hashing the entire $y_3$.

\textbf{Query}. The query operation is invoked
for a given input host key $x$, and the corresponding estimated subnet cardinality will be returned. The subnet cardinality is estimated through the Linear Counting algorithm.

\subsection{Identification of Super Hosts}\label{AA}
At the end of each measurement epoch, all buckets of SegSketch are checked to recover super hosts. 
A recorded host is reported if its estimated subnet cardinality exceeds the detection threshold $T(p)$, where $p$ is the inferred prefix length of the host's connected subnet. 
Specifically, the threshold is defined as:  
\begin{equation}
T(p) = \theta \cdot 2^{32 - p},
\end{equation}
where $\theta\in(0,1]$ is a scaling factor. 
This design ensures that the threshold increases with the number of IPs increasing in the associated subnet,
reflecting the fact that a host connected to a larger subnet will contact more distinct IPs within it. Our method elastically estimates subnet cardinality for subnet-based attacks and degrades to per-host cardinality estimation for distributed attacks with no common or short prefixes.

%will contact more distinct IPs within it.

%This design ensures that the threshold increases with subnet size, reflecting that a host must contact more distinct IPs within a larger subnet to be identified as a super host.

\section{Theoretical Analysis}
\label{sec:theoretical analysis}

\makeatletter
\thm@headfont{\sc}
\makeatother

\begin{theorem}
%For any host $x$ connecting to a targeted subnet, the error between its true subnet cardinality $C(x)$ and the estimated subnet cardinality $\hat{C}(x)$ is bounded.
%obtained via Linear Counting is bounded. Specifically, for a expected error $\varepsilon$ of the estimated cardinality introduced by the Linear Counting algorithm, we 
%For any host $x$ connecting to a targeted subnet, the probability that the error between its true and estimated subnet cardinality $C(x)$ and $\hat{C}(x)$ exceeds the expected error $\varepsilon$ introduced by the Linear Counting algorithm is bounded as follows:
For any host $x$ connecting to a targeted subnet, the error between its true and estimated subnet cardinality $C(x)$ and $\hat{C}(x)$ may exceed the expected error $\varepsilon$ introduced by the Linear Counting algorithm. The probability of this event is bounded as follows:
\begin{align}
\Pr \left\{ |\hat{C}(x) - C(x)| \geq \varepsilon \right\}
\;\le\;
\varepsilon^{-1}\,
M\left[\,1 - \bigl(1 - M^{-1}\bigr)^{U}\,\right],
\label{eq:unified-bound}
\end{align}
%where $M$ is the effective bitmap domain size determined by the hashing strategy, and $U$ is the number of flows that are not in the targeted subnet but are misclassified into it.
where $\varepsilon$, $M$ and $U$ are three hash-strategy-dependent variables that satisfy Table~\ref{tab:math}. In Table~\ref{tab:math}, $N(x)$ denotes the flow cardinality of host $x$, $l$ represents the true subnet prefix length, and $G$ is the IP segment width under the halved-segment hashing strategy.
\end{theorem}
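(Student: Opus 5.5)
The bound has the form $\varepsilon^{-1}\,\mathbb{E}[\text{extra set bits}]$, so I would prove it with Markov's inequality applied to the error caused by misclassified flows. Let $S$ be the bitmap domain into which $x$'s flows are hashed; its effective size is $M$ (e.g.\ $M=b$ for full-address hashing, or the host-address range $2^{32-l}$ or $2^{32-G\lfloor l/G\rfloor}$ truncated to the bitmap, as listed in Table~\ref{tab:math}). I would decompose the flows recorded for $x$ into two groups. The first group is the $C(x)$ flows that genuinely belong to the targeted subnet. The second is the $U$ flows outside that subnet which are nevertheless hashed into the same domain. These arise because the inferred prefix is rounded down to a multiple of $G$, or because full-address hashing ignores the prefix altogether. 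The table presumably gives $U$ in terms of $N(x)$, $l$ and $G$.

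The first step is to fix what "expected error $\varepsilon$ of Linear Counting" means. I take $\varepsilon$ to be the tolerance beyond which any additional deviation must come from contaminating flows. So it suffices to bound the probability that the contamination-induced deviation is at least $\varepsilon$, namely $\Pr\{D\ge\varepsilon\}$, where $D$ is the number of extra bits set by the $U$ misclassified flows. (Bits translate into cardinality roughly one-for-one in the sparse regime; the $b\ln(b/z)$ map is convex but locally has slope at least 1, which I would state as the approximation used.)

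The second step computes $\mathbb{E}[D]$. Treat each misclassified flow as hashing uniformly and independently into one of the $M$ cells; the paper's pairwise-independent hashing is idealized to full independence here. A fixed cell stays untouched by all $U$ flows with probability $(1-M^{-1})^U$. By linearity of expectation, the expected number of cells hit is $M[1-(1-M^{-1})^U]$, and $D$ is at most this count of hit cells. Markov's inequality $\Pr\{D\ge\varepsilon\}\le \mathbb{E}[D]/\varepsilon$ then yields exactly \eqref{eq:unified-bound}.

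The third step instantiates the bound for each hashing strategy by substituting the values of $M$ and $U$ from Table~\ref{tab:math}. For full-address hashing, $U=N(x)-C(x)$ and $M=b$. For halved-segment hashing, $U$ counts only the flows in the $G$-granularity superset of the subnet that fall outside the true prefix.

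I expect the main obstacle to be the rigorous link between the error of the Linear Counting estimate $|\hat C-C|$ and the bit-count contamination $D$. The first thing I would try is to define $\varepsilon$ as the deviation already attributable to Linear Counting's own variance on the $C(x)$ true flows. I would then argue monotonicity: extra set bits only decrease $z$, and therefore only increase $\hat C$. I would bound the increment by the number of extra bits, for example via $\ln(z/(z-D))\ge D/z$ reversed under the assumption that $z$ stays large. Because the statement is loose about this link, I would state that assumption explicitly rather than attempt a sharp analysis.
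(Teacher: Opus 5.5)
Your Markov-plus-occupancy argument is the paper's proof. The paper computes the expected number of occupied cells, $M[1-(1-M^{-1})^{R}]$, by linearity over bit indicators, and then applies Markov's inequality. The paper does not resolve the link you flag as the main obstacle. It simply posits $\mathbb{E}[|\hat{C}(x)-C(x)|]=M[1-(1-M^{-1})^{U}]$, i.e.\ it identifies the estimation error with the number of cells occupied by the $U$ misclassified flows. Your version uses $\mathbb{E}[D]\le M[1-(1-M^{-1})^{U}]$ plus an explicit assumption that $|\hat{C}(x)-C(x)|\ge\varepsilon$ forces $D\ge\varepsilon$. It rests on the same modelling step and is no weaker.

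One caution: your slope remark points the wrong way for this purpose. Adding $D$ set bits raises the Linear Counting estimate by $b\ln(z/(z-D))\ge bD/z\ge D$, so the induced error is at least $D$, not at most. To get ``roughly one-for-one'' you need an upper estimate such as $b\ln(z/(z-D))\le bD/(z-D)$, together with a sparsity assumption $z\approx b$, as you half-note at the end.

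The genuine gap is your third step, which does not reproduce Table~\ref{tab:math}, and that table is part of the statement.

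\textbf{The value of $M$.} In the table, $M$ is the size of the address domain being hashed, not the physical bitmap. It is $M=2^{32}$ for full-address hashing and $M=2^{32-\lfloor l/G\rfloor G}$, untruncated, for host-address hashing. With $M=b$ you prove a different bound.

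\textbf{The value of $U$.} More importantly, you are missing where $U=(N(x)-C(x))(1/2)^{\lfloor l/G\rfloor}$ comes from. In the paper, misclassification is caused by collisions of the 2-value segment hash in Algorithm~\ref{alg:HalvedSegmentHashing}. For an address outside the target subnet, each of the $\lfloor l/G\rfloor$ prefix segments is sent to the same half as the target's with probability $1/2$. The address is therefore wrongly inferred to share the prefix with probability $\alpha=2^{-\lfloor l/G\rfloor}$, giving $U=\alpha\,(N(x)-C(x))$. Your mechanism counts out-of-subnet flows inside the $G$-aligned supernet obtained by rounding $l$ down. That is a different quantity, it depends on how $x$'s other destinations are distributed, and it does not produce this factor.

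Once you adopt the paper's mechanism, your framework handles it cleanly. Let $U'$ be the random number of misclassified flows. Then $\mathbb{E}[D\mid U']\le M[1-(1-M^{-1})^{U'}]$, and $\mathbb{E}[U']=\alpha\,(N(x)-C(x))$ by linearity. Since $u\mapsto M[1-(1-M^{-1})^{u}]$ is concave, Jensen gives $\mathbb{E}[D]\le M[1-(1-M^{-1})^{\mathbb{E}[U']}]$, which is exactly the table's value.

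\textbf{The value of $\varepsilon$.} The table fixes $\varepsilon$ as the expected collision loss $R-M[1-(1-M^{-1})^{R}]$, with $R=N(x)$ for full-address hashing and $R=C(x)+U$ for host-address hashing. Markov's inequality holds for every $\varepsilon>0$, so your reinterpretation does not break the inequality. Still, you should substitute these values rather than redefine $\varepsilon$.
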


\vspace{-2mm}
\renewcommand{\arraystretch}{0.9} 
\begin{table}[htbp]
\centering
\caption{Variable values under different hashing strategies.}
\vspace{-2mm}
\label{tab:math}
\begin{tblr}{
  width = \linewidth, 
  colspec = {
    X[0.7, l]  % 第一列（变量名）压缩
    X[1.1, l]  % 第二列（Full-address）
    X[1.1, l]  % 第三列（Host-address）
  },
  cell{1}{1} = {r=2}{},   % 第一列跨两行
  cell{1}{2} = {c=2}{},   % 第二行前两列合并
  cells = {c},            % 单元格内容居中
  rowsep = 0.9pt,
  colsep = 2pt,           % 减小列与列之间x轴间距（默认6pt）
  hline{1,3,6} = {-}{},
  hline{2} = {2-3}{},
}
\textbf{Variables} & \textbf{Hashing Strategies} & \\
                   & \textbf{Full-address}       & \textbf{Host-address} \\
$\varepsilon$      & $N(x)-M\left[1 - (1 - M^{-1})^{N(x)}\right]$ 
                   & $C(x)+U-M\left[1 - (1 - M^{-1})^{C(x)+U}\right]$ \\
$M$                & $2^{32}$                    
                   & $2^{32 - \lfloor l/G \rfloor  G}$ \\
$U$                & $N(x)-C(x)$                 
                   & $(N(x)-C(x))\left(\tfrac{1}{2}\right)^{\lfloor l/G \rfloor}$
\end{tblr}
\end{table}

\iffalse
\begin{table}
\centering
\caption{Variable values under different hashing strategies}aption{Variable values under different hashing strategies}
\vspace{-2mm}
\begin{tblr}{
  width = \linewidth, 
  colspec = {
    Q[c] 
    Q[l] 
    Q[l]
  },
  cells = {c},
  rowsep = 0.9pt,
  cell{1}{1} = {r=2}{},
  cell{1}{2} = {c=2}{},
  hline{1,3,6} = {-}{},
  hline{2} = {2-3}{},  % 第二条横线扩展到第2列和第3列
}
\textbf{Variables} & \textbf{Hashing Strategies} & \\
                   & \textbf{Full-address}       & \textbf{Host-address} \\
$\varepsilon$      & $N(x)-M\left[\,1 - \bigl(1 - M^{-1}\bigr)^{N(x)}\,\right]$
                   & $C(x)+U-M\left[\,1 - \bigl(1 - M^{-1}\bigr)^{C(x)+U}\,\right]$ \\
$M$                & $2^{32}$
                   & $2^{32 - \lfloor l/G \rfloor \cdot G}$ \\
$U$                & $N(x)-C(x)$
                   & $(N(x)-C(x))\left(\tfrac{1}{2}\right)^{\lfloor l/G \rfloor}$
\label{tab:math}
\end{tblr}
\end{table}
\fi

\vspace{-4mm}
\begin{proof}
Please refer to Appendix A.1 for details. 
\end{proof}
\vspace{-4mm}

\begin{theorem}
The expected estimation error of subnet cardinality through hashing the full IP addresses is larger than that obtained by hashing only the host addresses, that is,
\begin{align}
\mathbb{E}\!\left[ |\hat{C}_{{full}}(x) - C(x)| \right] >\mathbb{E}\!\left[ |\hat{C}_{{host}}(x) - C(x)|\right],
\end{align}
where $\hat{C}_{{full}}(x)$ and $\hat{C}_{{host}}(x)$ are the estimated subnet cardinality through hashing full and host addresses, respectively, and ${C}(x)$ is the true subnet cardinality.% $Q = N(x) - C(x)$ represents the number of distinct flows out of the targeted subnet, and $r = \left\lfloor \frac{l}{G} \right\rfloor \cdot G<32$ denotes the estimated prefix length with $l$ and $G$ representing the true subnet length and IP segment width. %Moreover, the smaller the segment width $G$, the larger the error gap between the two hashing strategies.
\end{theorem}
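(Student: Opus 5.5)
We plan to model the expected absolute error under each hashing strategy using the same Linear Counting model that underlies Theorem 1. For a strategy with effective domain size $M$ and $C(x)+U$ distinct items landing in the bitmap (the $C(x)$ genuine in-subnet flows plus $U$ misclassified ones), the expected number of set bits is $M[1-(1-M^{-1})^{C(x)+U}]$. We will take the error decomposition used in Appendix A.1: the estimator is biased upward by the $U$ intruding flows and downward by collisions. This gives the characterization
\begin{align}
\mathbb{E}\!\left[|\hat{C}(x)-C(x)|\right] \approx U + \Bigl(C(x)+U - M\bigl[1-(1-M^{-1})^{C(x)+U}\bigr]\Bigr),
\end{align}
which is $U+\varepsilon$ with $\varepsilon,M,U$ read off Table~\ref{tab:math}. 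For full-address hashing this reads $(N(x)-C(x)) + \varepsilon_{full}$. For host-address hashing it reads $U_{host}+\varepsilon_{host}$, with $U_{host}=(N(x)-C(x))2^{-\lfloor l/G\rfloor}$.

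The comparison then splits into two monotonicity facts. First, the intrusion term satisfies $U_{host}< U_{full}$ strictly whenever $\lfloor l/G\rfloor\ge 1$ and $N(x)>C(x)$. This holds because each common segment passes an out-of-subnet flow into the same half only with probability $1/2$, independently per segment. Second, we need to compare the collision terms. Define $f(n,M)=n-M[1-(1-1/M)^n]$, the expected number of collisions among $n$ balls in $M$ bins. We will show that $f$ is increasing in $n$, since $\partial_n f = 1+M(1-1/M)^n\ln(1-1/M)\ge 1-(1-1/M)^{n-1}\ge 0$. Then $\varepsilon_{host}=f(C+U_{host},M_{host})$ and $\varepsilon_{full}=f(N,2^{32})$, with $C+U_{host}\le N$.

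The main obstacle is that $M_{host}=2^{32-\lfloor l/G\rfloor G}$ is smaller than $2^{32}$, and $f$ is \emph{decreasing} in $M$. A smaller domain therefore produces more collisions, so the collision term alone need not favor host hashing. We plan to argue that this term is negligible relative to the intrusion gap. The key step is the bound $f(n,M)\le n^2/(2M)$, which follows from $(1-1/M)^n\le 1-n/M+\binom n2/M^2$. With this bound it suffices to show
\begin{align}
\frac{(C+U_{host})^2}{2M_{host}} < (N-C)\bigl(1-2^{-\lfloor l/G\rfloor}\bigr) + f(N,2^{32}).
\end{align}
Since in-subnet flows are distinct host addresses, $C\le M_{host}$ up to the floor slack. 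This lets us control the left side by $O(C+U_{host})$ times the load factor. We expect to need an explicit regime assumption at this point, namely that the host bitmap is not saturated and that $N-C$ dominates. This is the step most likely to require an extra hypothesis, and we would state it as part of the model, as the paper implicitly does through Table~\ref{tab:math}.

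Finally, we will conclude strictness from the strict inequality in the intrusion term. Degenerate cases ($l<G$, so that $\lfloor l/G\rfloor=0$) make the two strategies coincide, and we would note that these are excluded by the assumption that $x$ targets a subnet of prefix at least $G$.
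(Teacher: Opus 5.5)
\textbf{There is a genuine gap, and it comes from your error model, not from the final estimate.} The proof of Theorem~1 in Appendix A.1 does not use the decomposition you attribute to it. There the expected error is $\mathbb{E}[|\hat{C}(x)-C(x)|]=M[1-(1-M^{-1})^{U}]$, the expected number of bits occupied by the $U$ misclassified flows alone. The $\varepsilon$ of Table~\ref{tab:math} enters only as the threshold in the Markov inequality. You instead take the error to be $U+f(C(x)+U,M)$, which adds the collision deficit of all $C(x)+U$ items. It also sums two biases of opposite sign, which gives at best an upper bound rather than an approximation.

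That extra term depends on $C(x)$ and grows as the domain shrinks. Under your own model the theorem is then false in exactly the regime SegSketch targets. Take $l=24$ and $G=4$, so $M_{host}=2^{8}$, with $C(x)=200$ (a scan of most of a /24) and $N(x)-C(x)=10$. Then $f(C+U_{host},M_{host})\approx 61$, while $(N-C)(1-2^{-6})+f(N,2^{32})\approx 9.8$. So the ``regime assumption'' you plan to add is not a technicality. Without it the argument fails, and with it you prove only a conditional statement, not the unconditional inequality claimed.

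\textbf{The missing point is that, in the paper's model, the smaller domain helps rather than hurts.} Let $Q=N(x)-C(x)$ and $r=\lfloor l/G\rfloor$ with $rG<32$. The paper writes the difference as $2^{32}[1-(1-2^{-32})^{Q}]-2^{32-rG}[1-(1-2^{rG-32})^{Q2^{-r}}]$. It truncates both powers at second order and obtains $Q^{2}2^{-33}(2^{rG-2r}-1)+Q(1-2^{-r}-2^{rG-r-33}+2^{-33})>0$.

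You could equally argue by monotonicity. The quantity $M[1-(1-M^{-1})^{U}]$, the expected number of distinct bits hit by $U$ flows, increases with $U$ and, for $U\ge 1$, also with $M$. Hence $U_{host}<U_{full}$ and $M_{host}<2^{32}$ push in the same direction. Your first monotonicity fact, the intrusion comparison, is essentially all that is needed. The collision term over the $C(x)$ in-subnet flows, which is where your proof breaks, does not appear in this model at all.
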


\vspace{-2mm}
\vspace{-2mm}
\begin{proof}
Please refer to Appendix A.2 for details.
\end{proof}

%Moreover, we plot the trend of $\mathbb{E}\!\left[\hat{C}_{\mathrm{host}}(x) - C(x)\right]$ under different prefix lengths $l$ and $\mathbb{E}\!\left[\hat{C}_{\mathrm{full}}(x) - C(x)\right]$ with respect to segment width $G$, as illustrated in Figure~\ref{fig:math}. The monotonic decrease of $\mathbb{E}\!\left[\hat{C}_{\mathrm{host}}(x) - C(x)\right]$ with smaller segment widths indicates that host-address hashing achieves lower estimation error and increasing advantage over full-address hashing as $G$ becomes smaller.
%It can be observed that $F(G)$ decreases monotonically with increasing $G$, which confirms that smaller segment widths lead to larger error gaps between full-address and host-address hashing.
  
%\vspace{-5mm}

%The monotonic decrease of $\mathbb{E}[\hat{C}_{{host}}(x) - C(x)]$ as $G$ decreases shows that estimating subnet cardinality though host-address hashing achieves lower estimation error and increasing advantage over full-address hashing with finer segmentation.

% \vspace{-7mm}
\section{Trace-Driven Evaluation}
\label{sec:evaluation}
\subsection{Experimental Setup}\label{AA}

\textbf{Datasets}. 
%We utilize two combined datasets constructed by mixing real-world traffic traces, where the UNSW-NB15\cite{unsw} dataset is respectively merged with the MAWI2021\cite{mawi} dataset and the CAIDA2016\cite{caida} dataset to conduct super spreader detection\footnote{We take super spreader detection as an example. For super receiver detection, the destination IP is used as the flow ID, and the halved-segment hashing approach is applied to the source IPs to identify common prefixes and estimate, for each destination, the number of distinct sources residing within the same subnet.}. Each packet in the above datasets provides source and destination IP addresses (4 bytes each). For super spreader detection, we set source IP as the flow ID.
We utilize two datasets constructed by merging malicious and benign traffic traces: the UNSW-NB15 \cite{unsw} dataset is mixed with MAWI2021 \cite{mawi} and CAIDA2016 \cite{caida}, respectively, to evaluate the performance of super-host detection. (1) The UNSW-NB15 dataset is generated by UNSW Canberra using the IXIA PerfectStorm tool, containing benign traffic and nine types of attacks, including super spreader attacks (e.g., worm propagation, reconnaissance) and super receiver attacks (e.g., DDoS). (2) The MAWI dataset is anonymized traffic trace captured by MAWI on the trans-Pacific backbone link. (3) The CAIDA dataset consists of anonymized IP traces collected by CAIDA through the Equinix-Chicago monitor. 

%; the SDN-DDoS\cite{sdn} dataset is mixed with MAWI2021\cite{mawi} and CAIDA2016\cite{caida}, respectively, to evaluate super receiver detection.
% While we use super spreader detection as an example, the methods also apply to super receiver detectioin.

%The UNSW-NB15 dataset is created by UNSW Canberra using the IXIA PerfectStorm tool to generate mixed traffic of benign behaviors and nine types of attacks, including super spreader attacks (e.g., worm propagation and reconnaissance) and super receiver attacks (e.g., DDoS).

%\textbf{2) SDN-DDoS dataset:} The SDN-DDoS dataset is generated in a Software-Defined Networking (SDN) environment using two network topologies with 12 switches and 24 hosts, orchestrated by a Ryu controller. It contains over 1 million labeled traffic flows comprising both benign and multiple DDoS attack scenarios. 

%(\romannumeral 2) The MAWI dataset is anonymized traffic trace captured by MAWI on the trans-Pacific backbone link.

%(\romannumeral 3) The CAIDA dataset consists of anonymized IP traces collected by CAIDA through the Equinix-Chicago monitor. 

Since the MAWI and CAIDA datasets do not contain labeled super host attacks, we use the super host attacks from UNSW-NB15 as ground truth, just as same as previous works \cite{dollm, gmcb, robust}. %that construct datasets by combining attack flows with benign traces. 
Using UNSW-NB15 alone limits evaluation to a single benign background, while MAWI and CAIDA offer diverse traffic characteristics, evaluating our method's adaptability to different distributions.
%Experiments using UNSW-NB15 alone are shown in Appendix B.1.}

%\textcolor{blue}{We combine MAWI and CAIDA datasets with UNSW-NB15 separately, using the latter's super host attacks as ground truth, as done in previous works \cite{dollm, gmcb, robust}. This mixing introduces diverse background traffic characteristics, allowing evaluating our method's adaptability to varying attack-to-background ratios and traffic distributions.}

\textbf{Implementation}. SegSketch is compared with Couper \cite{couper}, SpreadSketch \cite{spreadsketch}, and RHHH \cite{sigcomm17}. 
All of these schemes are implemented in C++. 
In the experiments, we focus on super spreader detection by default, where the source IP is used as the host key (4 bytes). For super receiver detection, the destination IP will be used as the host key (4 bytes). %and a halved-segment hashing approach is applied to the source IPs to identify common prefixes and estimate, for each destination, the number of distinct sources within the same subnet. 
The default number of rows is 3 in SegSketch. The threshold scaling factor is set as $\theta=0.5$ \cite{characteristics}, with the experiments testing the impact of different $\theta$ on detection performance shown in Appendix B.1.
The hash functions of SegSketch are implemented using Bob Hash \cite{bob} with different random seeds. 
For other schemes, we use their recommended parameter settings.

\textbf{Methodology}. (1) Couper \cite{couper} proposes a two-layer design. %The first layer uses bitmaps for estimating cardinality of small flows, while the second layer applies Multi-Resolution Bitmaps or HyperLogLog estimators to track large flows.
The first layer uses bitmaps to estimate the number of distinct IPs contacted by low-cardinality hosts, while the second layer tracks high-cardinality hosts using Multi-Resolution Bitmaps or HyperLogLog estimators. Couper detects super hosts by inserting hosts into a candidate table whenever their estimated cardinalities increase. At the end of each measurement epoch, it reports those whose estimated cardinalities exceed a predefined threshold. %However, it does not check whether the connections of a host share the same subnet, and may misidentify benign hosts with many distinct connections as super hosts.

(2) SpreadSketch \cite{spreadsketch} stores candidate host key, Multi-Resolution Bitmap for cardinality estimation, and the maximum number of leading zero bits in the IP hash string for replacement decision. %Upon packet arrival, it updates the corresponding Multi-Resolution Bitmap or compares the leading zero count of the incoming packet’s hash string with the maximum recorded in the bucket to decide whether to replace. 
After each epoch, it traverses all buckets and reports any host whose estimated cardinality exceeds the threshold as a super host. %However, it does not distinguish whether the flows of a host are within the same subnet, which can lead to false positives in super-host detection when benign hosts connect to many different IP addresses.

(3) RHHH \cite{sigcomm17} maintains a multi-level prefix hierarchy, where each layer tracks flows with a specific prefix length, and uses Linear Counting to estimate cardinality within the corresponding subnet. 

\textbf{Platform}. The experiments are conducted on a Mac laptop with an Intel Core i7 CPU running at 2.30GHz, providing four physical cores and supporting simultaneous multithreading. It is equipped with 32GB of RAM. Each CPU core incorporates a multi-level cache hierarchy that includes a 64KB L1 cache (divided into separate instruction and data caches), a private 256KB L2 cache, and a shared 6MB L3 cache. Ubuntu 18.04.3 is used as the operating system. %for running the experiments.

\textbf{Metrics}. 
%We define a flow as a Carpet Bombing DDoS attack source (or traditional DDoS attack victim destination) if it is ultimately retained in the sketch with spatial skewness shown in the bitmap contributed by segment-wise binary hashing technique, and its cardinality exceeds a threshold $\theta D$ ($\theta S$), where $D$ (or $S$) is the total number of distinct destination (or source) IPs in the monitoring epoch, and $\theta$ is a tunable ratio ($0<\theta<1$). By default, $\theta$ is initialized to ***. 
We evaluate whether SegSketch achieves a favorable balance between accuracy and speed. To assess measurement accuracy, we employ Precision, Recall, F1-Score, and Average Relative Error (ARE). Processing speed is quantified by throughput. 
\begin{itemize}[leftmargin=2em,labelsep=0.5em,topsep=-0.5ex,itemsep=0ex,parsep=0pt]
  %\item {False positive rate}: the proportion of benign flows incorrectly identified as super hosts.
  \item {Precision}: the fraction of the number of true super hosts reported over the number of all super hosts reported.
  \item {Recall}: the fraction of the number of true super hosts reported over the number of real super hosts.
  \item {F1-Score}: \( \frac{2 \times \textit{Precision} \times \textit{Recall}}{\textit{Precision} + \textit{Recall}} \), the harmonic mean of Precision and Recall.
  \item {Average Relative Error (ARE)}: \( \frac{1}{\Psi} \sum_{x_i \in \Psi} \frac{|C(x_i) - \hat{C}(x_i)|}{C(x_i)} \), where \( \Psi \) is the set of reported super hosts, $ C(x_i)$ and $\hat{C}(x_i)$ are the real and estimated subnet cardinalities of host $x_i$, respectively. %\( C(f_i) \) is the real subnet cardinality of super host \( f_i \), and \( \hat{C}(f_i) \) is the estimated subnet cardinality of super host \( f_i \).
  \item {Throughput}: the ratio of the number of inserted packets to the entire monitoring time. Million packets inserted per second (Mpps) is used as the unit. 
\end{itemize}
 \vspace{-1mm}

\subsection{Accuracy of Super Host Detection }\label{AA}
\textbf{Super spreader detection}. 
%To evaluate the performance of different algorithms in detecting super spreaders, we conduct experiments on the aforementioned mixed MAWI and CAIDA dataset (each combined with the UNSW-NB15 dataset), with results presented in Figure~\ref{fig:mawi} and Figure~\ref{fig:caidaa}. The results show that SegSketch outperforms SpreadSketch, Couper, and RHHH in most scenarios. As shown in Figure~\ref{fig:mawi}, with the minimum memory of 32KB, SegSketch outperforms SpreadSketch, Couper, and RHHH in terms of Precision by 353.15\%, 444.72\% and 545.68\%, Recall by 160.21\%, 244.82\% and 100.40\%, and F1-Score by 245.13\%, 257.79\% and 529.03\%, respectively, while simultaneously reducing ARE by 88.57\%, 18.18\%, and 65.38\% compared to these methods. 
We evaluate super spreader detection on the mixed MAWI and CAIDA datasets, each combined with UNSW-NB15. The results on the mixed MAWI dataset are provided in Appendix B.2. 
As shown in Figure~\ref{fig:caidaa}, SegSketch consistently outperforms SpreadSketch, Couper, and RHHH. Under 32KB memory, it achieves Precision improvements of $4.63\times$, $4.42\times$, and $21.50\times$, Recall gains of $2.01\times$, $1.31\times$, and $2.84\times$, and F1-Score increases of $2.73\times$, $2.18\times$, and $8.04\times$, while reducing ARE by 86.08\%, 68.00\%, and 87.20\%, compared to SpreadSketch, Couper, and RHHH, respectively.
The performance gains of SegSketch arise from its integration of common prefix inference and subnet cardinality estimation. By adopting a lightweight halved-segment hashing strategy, SegSketch avoids explicit bitwise recording and comparison on IPs, significantly reducing memory overhead and per-packet processing cost.

%The performance improvements mainly stem from SegSketch's ability to identify super spreaders by combining the identification of the common prefix length of flows and the estimation of subnet cardinality. Moreover, leveraging the lightweight and time-efficient halved-segment hashing strategy for common prefix length estimation, SegSketch avoids explicitly recording and matching each bit of IP addresses, thus reducing memory usage and per-packet comparison overhead.

%The inferior performance of SpreadSketch and Couper primarily arises from their inability to distinguish whether the connections of each host belong to the same subnet. Specifically, they treat all distinct IPs equally, regardless of whether they belong to the same subnet. As a result, benign hosts communicating with diverse peers are likely to be falsely identified as super hosts, leading to high false positive rates and reduced precision. Additionally, since they estimate the flow cardinality rather than the subnet cardinality, these approaches incur higher ARE.

\vspace{-3mm}
\begin{figure}[htbp]
    \centering
    % Case1-1
    \subfigure[Precision]{
        \includegraphics[width=0.465\linewidth]{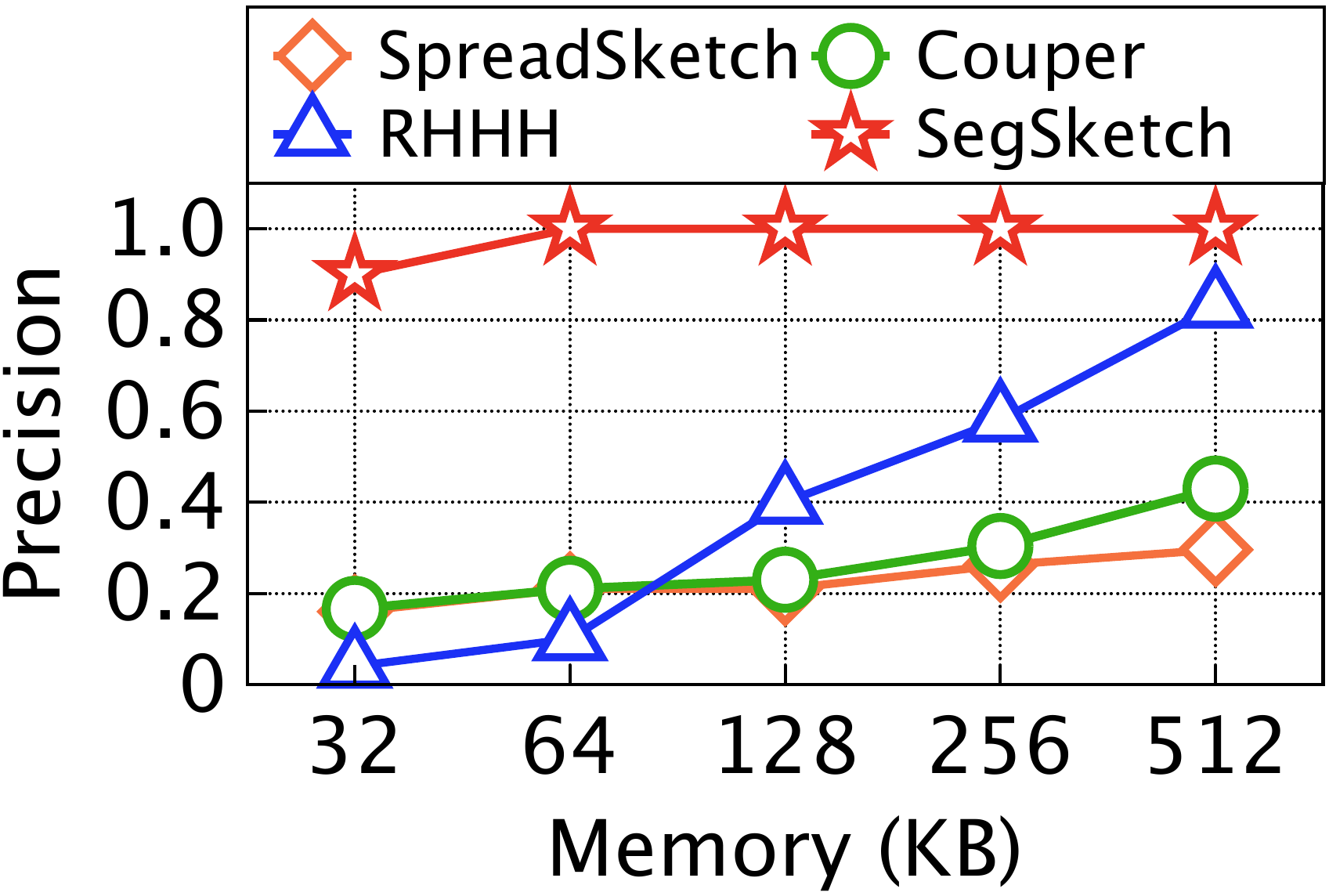}
     \label{fig:a}
    }
    \subfigure[Recall]{
        \includegraphics[width=0.465\linewidth]{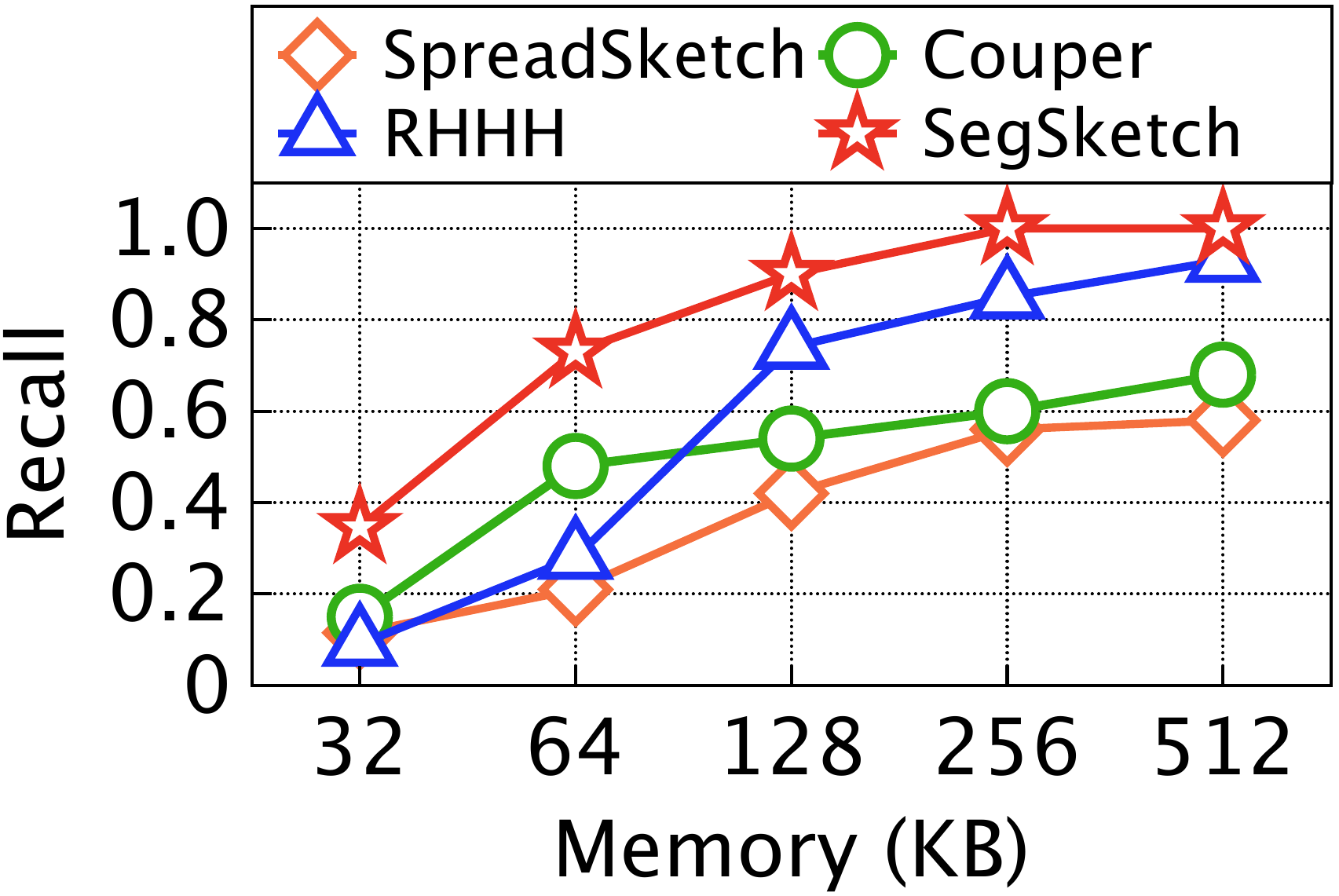}
     \label{fig:b}
    }
    \\
    % Case2
    \subfigure[F1-Score]{
        \includegraphics[width=0.465\linewidth]{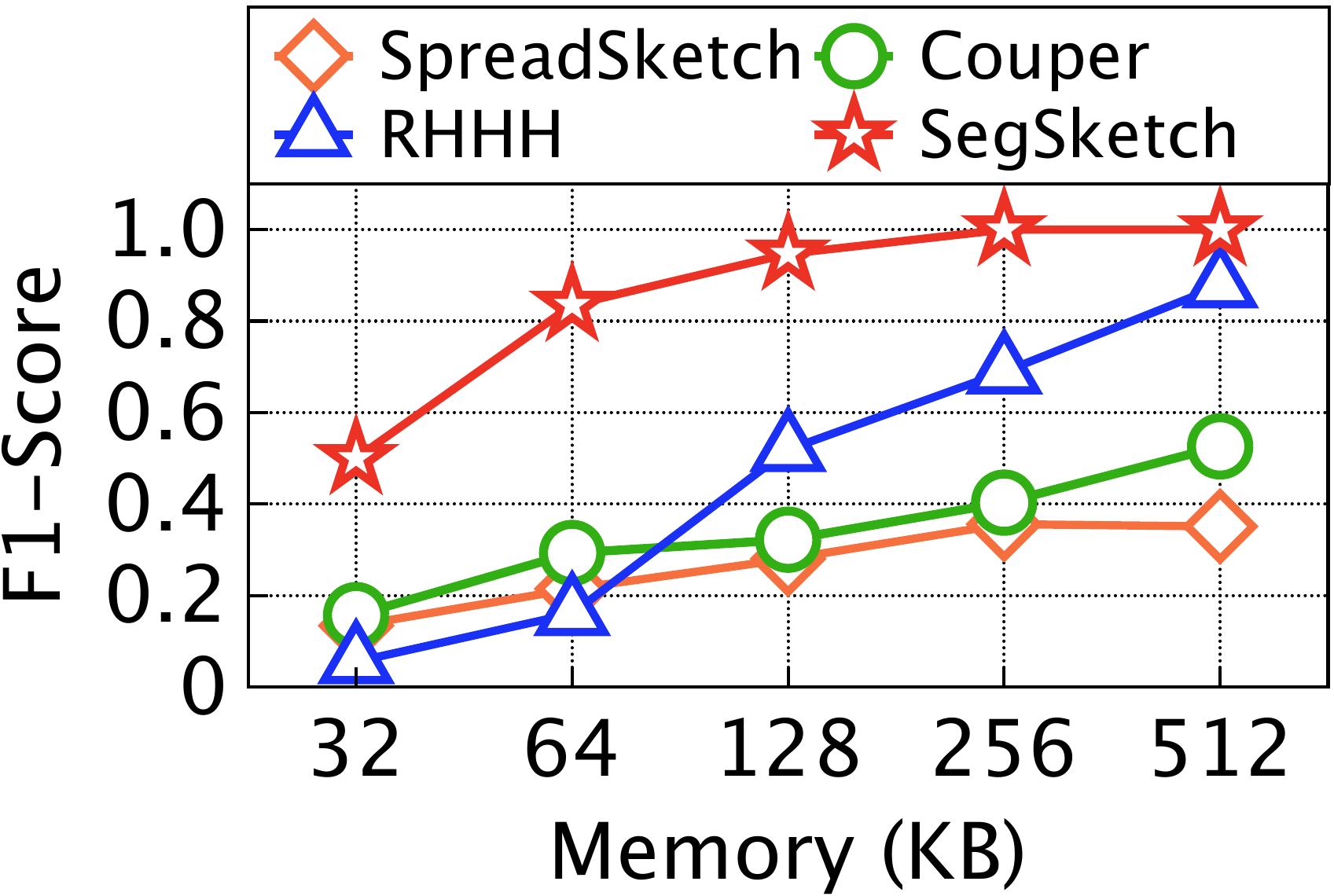}
    \label{fig:c}

    }
    % Case3
    \subfigure[ARE]{
        \includegraphics[width=0.465\linewidth]{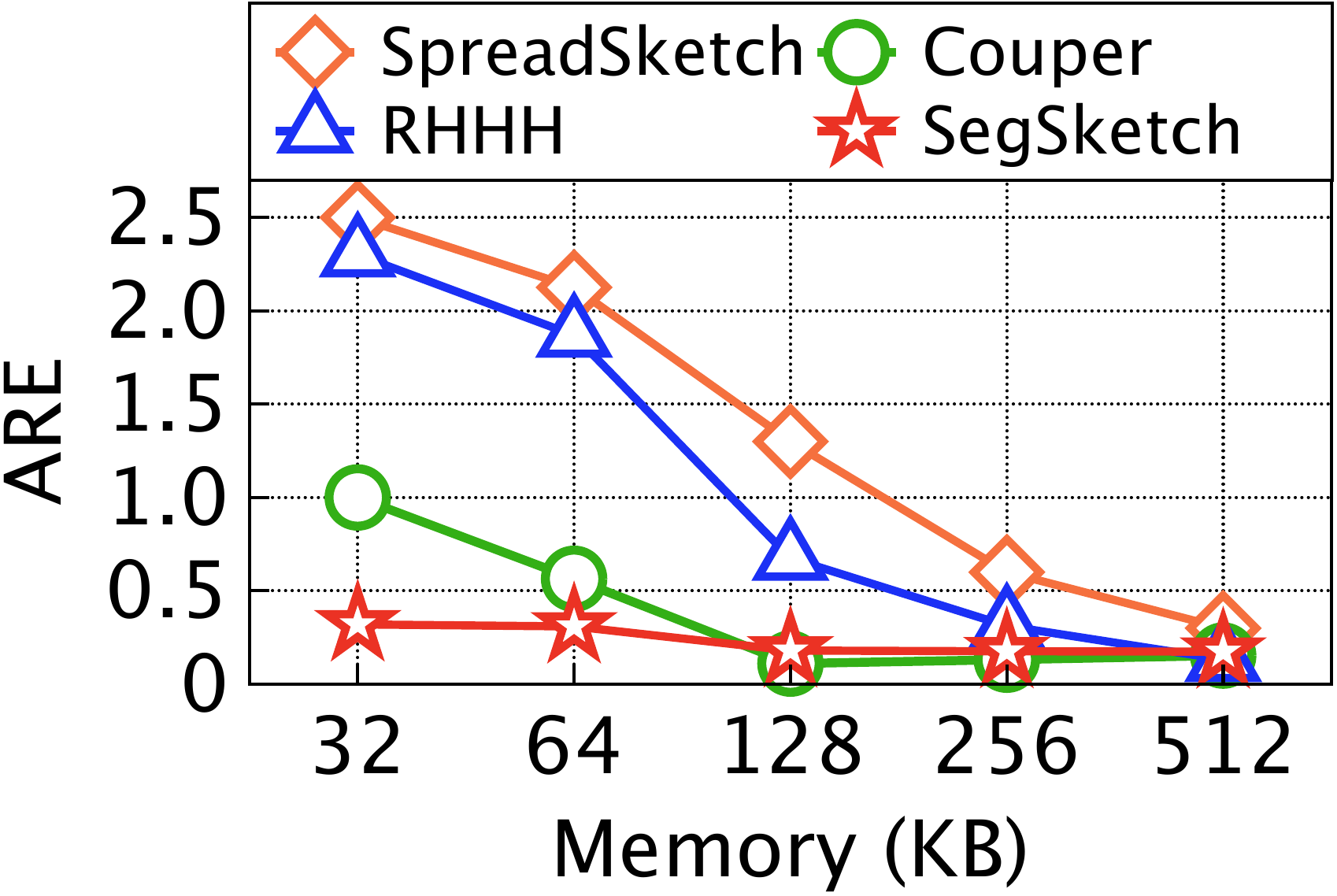}
    \label{fig:d}
    }
  \vspace{-3mm}
    \caption{Super spreader detection performance on the dataset mixing UNSW-NB15 with CAIDA2016.}
   \vspace{-1mm}
    \label{fig:caidaa}
\end{figure}

The inferior performance of SpreadSketch and Couper stems from their inability to distinguish whether a host’s connections reside within the same subnet. By treating all destination IPs equally, they often misclassify benign hosts communicating with a large number of diverse peers as super hosts, resulting in high false positives and reduced precision. Moreover, their estimation of flow cardinality, rather than subnet cardinality, leads to higher ARE.

%In contrast, RHHH identifies hosts with common prefixes using multiple prefix-level estimators. However, due to the unknown subnet lengths in advance, it has to set up multiple estimators for each host to track all possible prefix lengths simultaneously, which incurs significant memory overhead, hurting its accuracy in memory-constrained scenarios. Furthermore, under constrained memory, RHHH often faces a memory allocation dilemma. Specifically, increasing the number of estimators for different prefix lengths within a fixed total memory limits the size of each estimator, leading to more frequent hash collisions within each estimator and inaccurate subnet cardinality estimation. Alternatively, reducing the number of estimators to save memory results in coarse common prefix detection. As a result, choosing a shorter common prefix than the actual subnet aggregates unrelated addresses and overestimates the subnet cardinality, while the opposite incurs longer common prefixes, leading to underestimation.

RHHH identifies hosts with common prefixes using multiple prefix-level estimators. However, lacking prior knowledge of subnet lengths, it must maintain estimators for all possible prefix lengths per host, resulting in substantial memory overhead and reduced accuracy under limited memory. Increasing the number of estimators shrinks their individual sizes, causing frequent hash collisions within each estimator and inaccurate cardinality estimation, while reducing the number of estimators impairs prefix resolution. Choosing a shorter prefix aggregates unrelated addresses and overestimates cardinality, while the opposite leads to underestimation.

%In contrast, RHHH is capable of identifying flows with common prefixes by maintaining multiple prefix-level estimators. However, because subnet lengths are unknown in advance, it must track all predefined prefix lengths simultaneously, which under a fixed memory budget creates an inherent dilemma. On the one hand, allocating memory across many prefix levels limits the number and size of estimators in each layer.
%This leads to frequent hash collisions: between different flows mapping to the same bucket, which contaminates the bitmaps and incurs inaccurate subnet cardinality estimation, and within the same flow, where different connections are hashed to the same bits of the bitmap, resulting in underestimation of subnet cardinality. On the other hand, reducing the number of prefix levels to save memory yields coarse prefix resolution. In this case, actual subnet lengths may be missed entirely: choosing a shorter prefix than the real subnet aggregates unrelated addresses and overestimates the subnet cardinality, while a longer prefix fragments the subnet and undercounts its distinct connections. Consequently, Enhanced RHHH struggles to achieve satisfactory performance under constrained memory.

\textbf{Super receiver detection}. 
%To evaluate the performance of different approaches in detecting super receivers, we conduct experiments on the mixed MAWI and CAIDA dataset, with results shown in Figure~\ref{fig:MAWI} and Figure~\ref{fig:CAIDA}. Unlike super spreader detection, all four algorithms are modified to estimate the cardinality of source IPs. The results show that SegSketch still outperforms all the other methods. Specifically, as shown in Figure~\ref{fig:MAWI}, in the memory scenario of 32 KB, SegSketch exceeds the performance of SpreadSketch, Couper, and RHHH in F1-Score by 897.05\%, 374.12\%, and 707.14\%, respectively. Similar to super spreader detection, the improvements mainly stem from SegSketch’s ability to identify super receivers by combining the examination of the common prefix length of flows and the estimation of subnet cardinality. These experimental results further confirm the effectiveness of SegSketch in the super receiver detection task.
To evaluate the performance of different approaches in detecting super receivers, we conduct experiments on the mixed MAWI and CAIDA datasets. The results on the mixed MAWI are in Appendix B.2. All four algorithms are adapted to estimate the cardinality of source IPs. SegSketch still surpasses the baselines. As shown in Figure~\ref{fig:CAIDA}, with 32KB memory, SegSketch improves F1-Score by $5.08\times$, $4.17\times$, and $2.77\times$ over SpreadSketch, Couper, and RHHH, respectively. Similar to super spreader detection, the gains come from SegSketch’s combination of common prefix inference and subnet cardinality estimation. These results validate the effectiveness of SegSketch in detecting super receivers.

\vspace{-2mm}
\begin{figure}[htbp]
    \centering
    % Case1-1
    \subfigure[Precision]{
        \includegraphics[width=0.465\linewidth]{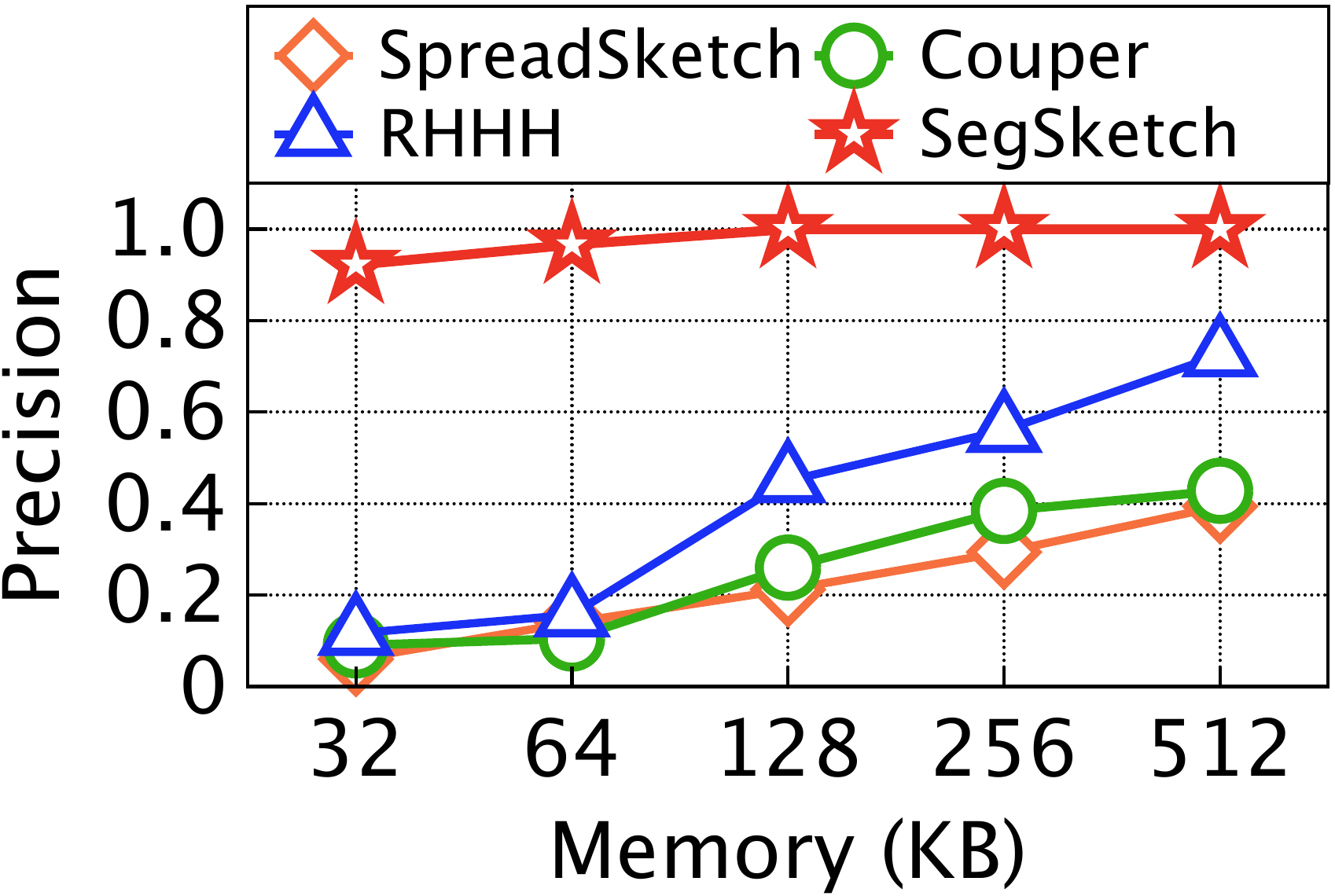}
     \label{fig:a}
    }
    \subfigure[Recall]{
        \includegraphics[width=0.465\linewidth]{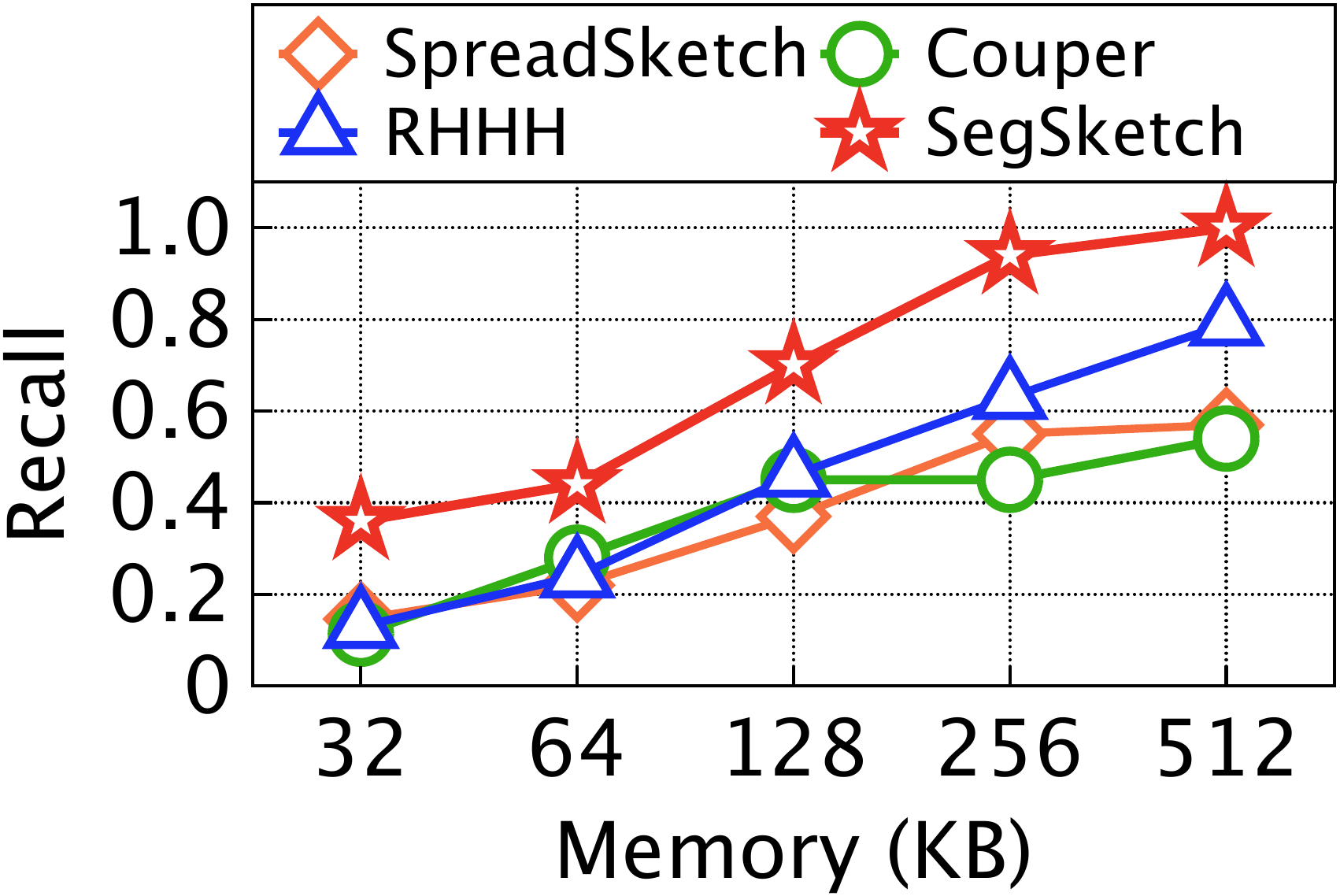}
     \label{fig:b}
    }
    \\
    % Case2
    \subfigure[F1-Score]{
        \includegraphics[width=0.465\linewidth]{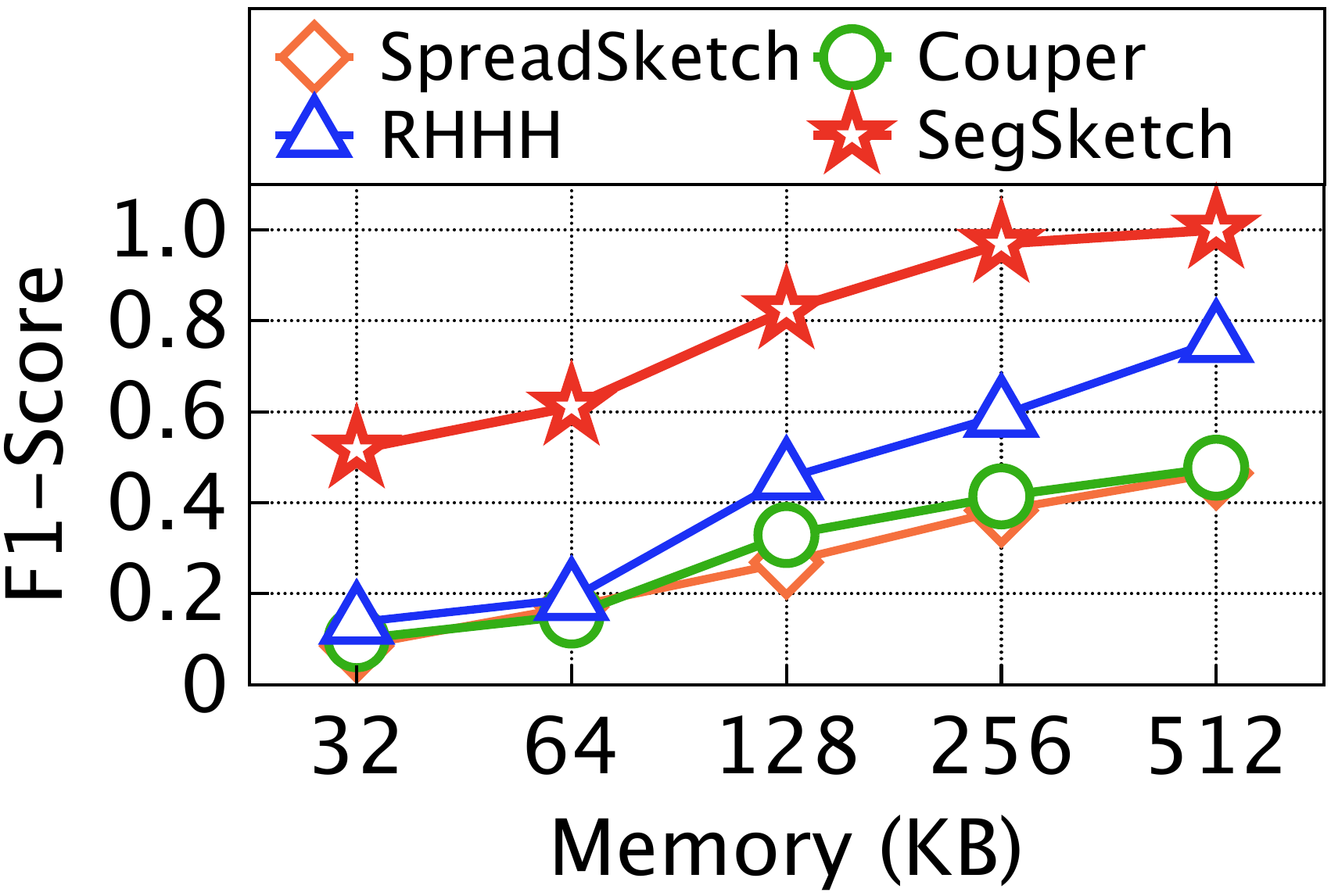}
    \label{fig:c}

    }
    % Case3
    \subfigure[ARE]{
        \includegraphics[width=0.465\linewidth]{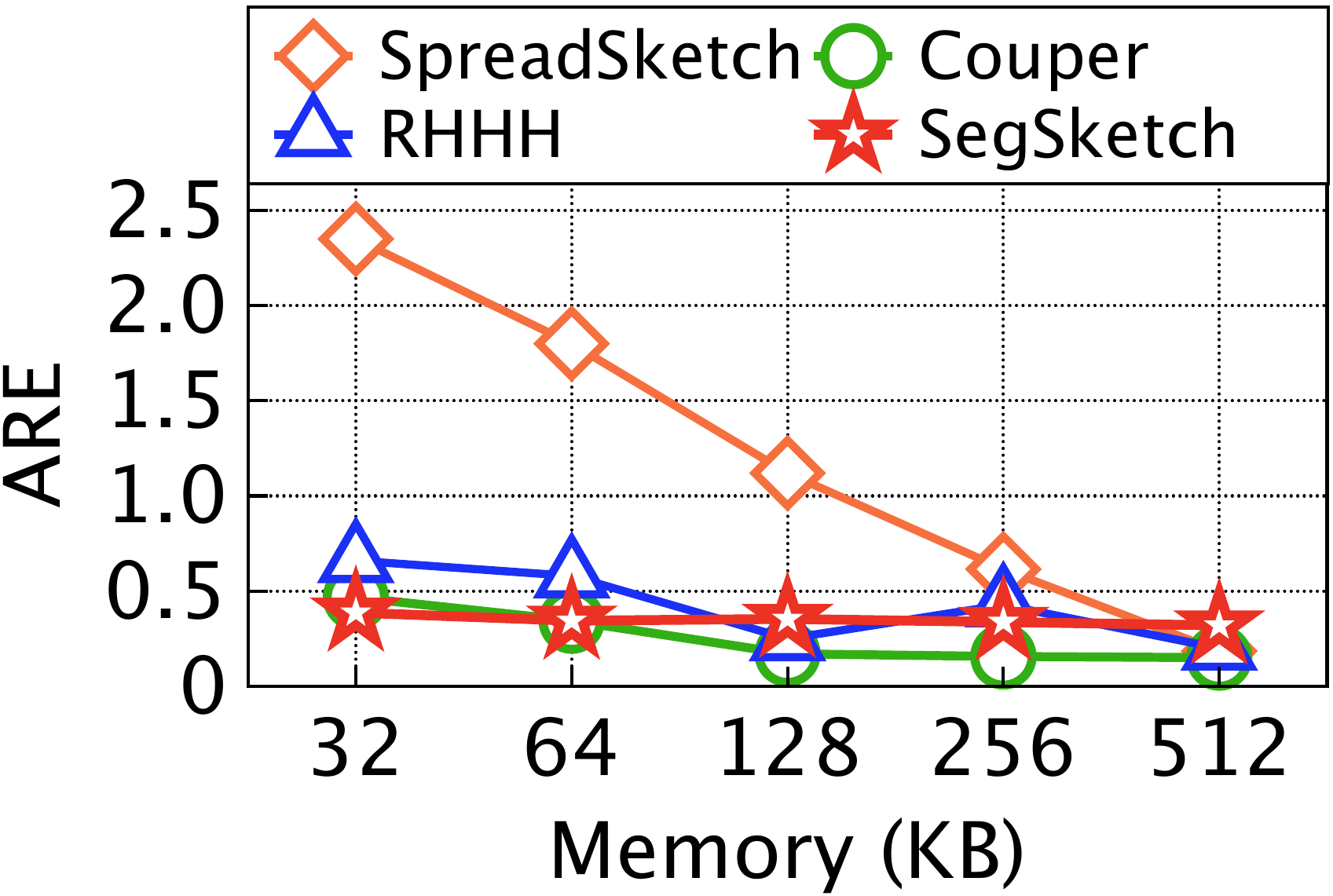}
    \label{fig:d}
    }
    \vspace{-2mm}
    \caption{Super receiver detection performance on the dataset mixing UNSW-NB15 with CAIDA2016.}
   \vspace{-2mm}
    \label{fig:CAIDA}
\end{figure}

\vspace{-2mm}
\subsection{Scalability}\label{AA}

To evaluate robustness under varying super host ratios, we test four methods with 64KB memory on the CAIDA2016 dataset, injected with different ratios of super spreaders from UNSW-NB15. The evaluated ratios of super spreaders to benign hosts are approximately 1:20, 1:25, 1:33, and 1:50. Results are shown in Figure~\ref{fig:Pof}.

%Note that we only present results for super spreader detection, since the distinction between detecting super spreaders and super receivers lies merely in whether the host key is the source or destination IP, and thus the performance is equivalent.

%\vspace{-1mm}
  
 %\setcounter{figure}{12}

%The results show that as the ratio of super spreaders increases, the Precision, Recall, and F1-Score of all methods gradually decline and the ARE increases. This degradation is caused by more super spreaders entering the sketch, which leads to heavier hash collisions among the buckets. As more super spreaders compete for the limited number of available buckets, the likelihood of erroneous evictions in the sketch rises. 
%However, despite the performance degradation, SegSketch still achieves the best results among all these methods, maintaining an F1-Score of 0.79 and an ARE of 0.32 even when the ratio of super spreaders to benign hosts reaches 1:20. This robustness is attributed to SegSketch’s halved-segment hashing strategy in the subnet bitmap and its approach of hashing only the host address in the host bitmap. These strategies significantly enhance both memory efficiency and the accuracy of subnet cardinality estimation. The results confirm that SegSketch can effectively handle high ratios of super hosts, maintaining high detection accuracy even under limited memory conditions.

%\vspace{-1mm}
The results show that as the ratio of super spreaders increases, Precision, Recall, and F1-Score of all methods decline, while ARE rises due to intensified hash collisions and more frequent evictions. However, SegSketch consistently outperforms other schemes, sustaining an F1-Score of 0.79 and an ARE of 0.32 even at 1:20 ratio. This robustness is attributed to halved-segment hashing in the subnet bitmap and host-address hashing in the host bitmap, which together improve memory efficiency and enhance the accuracy of subnet cardinality estimation. These results demonstrate that SegSketch effectively maintains high detection accuracy under constrained memory even in scenarios with a high ratio of super hosts. The 1:33 ratio is used across the other experiments.

 \setcounter{figure}{10}
\vspace{-2mm}
\begin{figure}[htbp]
\vspace{-2mm}
    \centering
 
    % Case2
    \subfigure[F1-Score]{
        \includegraphics[width=0.465\linewidth]{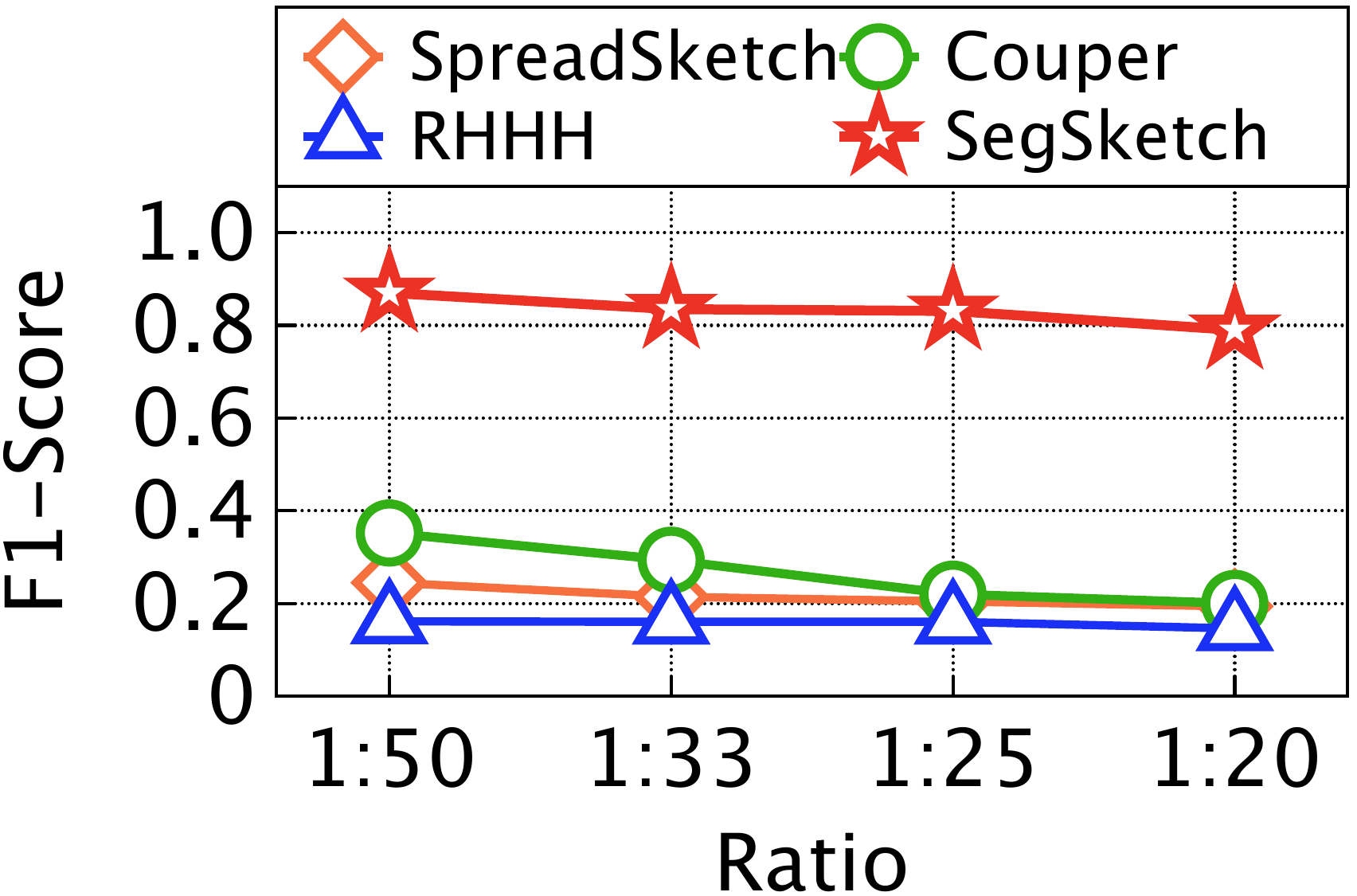}
    \label{fig:c}

    }
    % Case3
    \subfigure[ARE]{
        \includegraphics[width=0.465\linewidth]{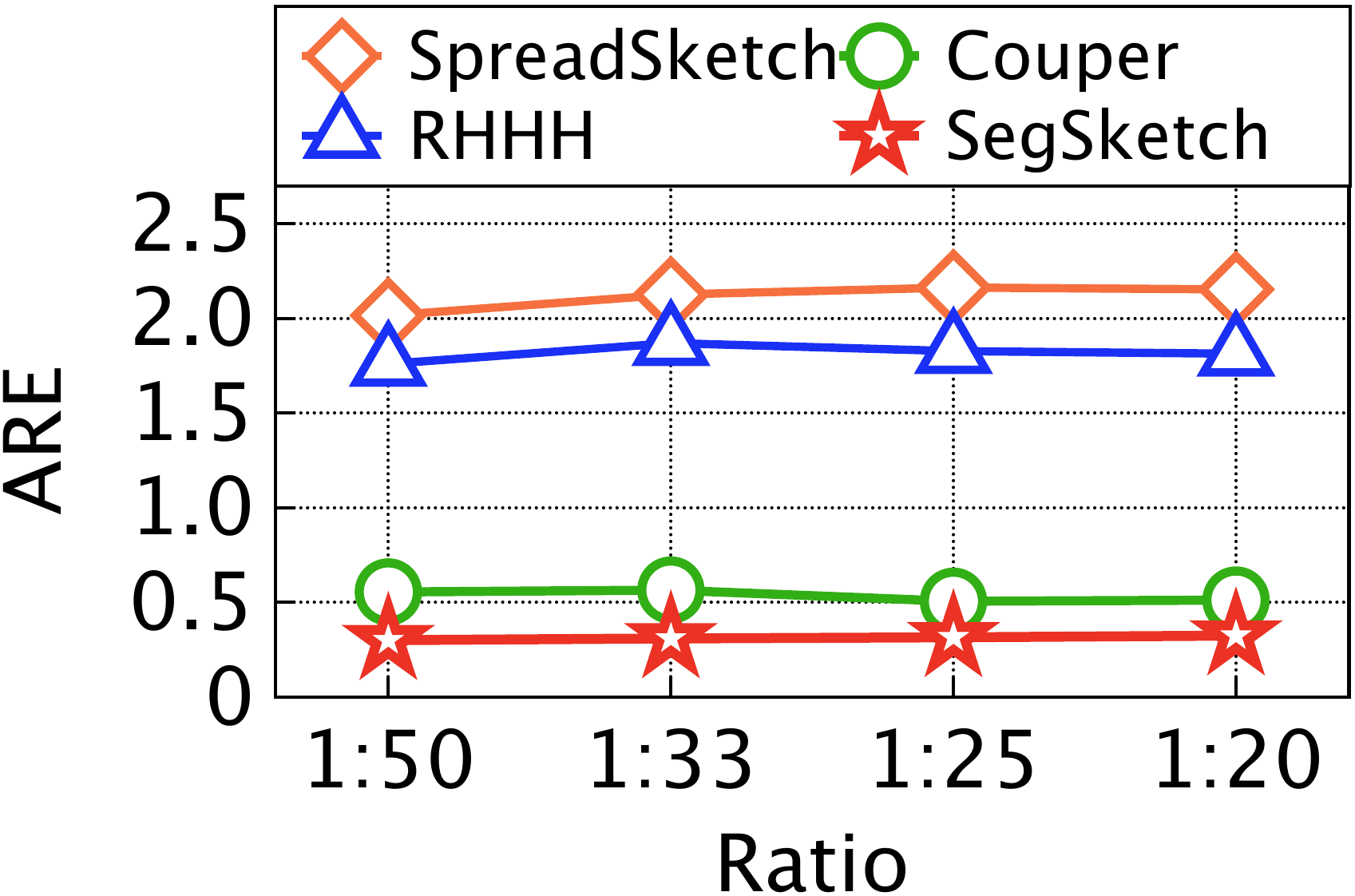}
    \label{fig:d}
    }
    \vspace{-3mm}
    \caption{Impact of varying ratios of
super spreaders.}
    \vspace{-4mm}
    \label{fig:Pof}
\end{figure}

\subsection{Parameter Evaluation}\label{AA}
%We perform experiments to investigate how critical parameters of SegSketch impact its performance, including the IP segment width $G$ used in the halved-segment hashing strategy and the size of the bitmap $BC$ employed for subnet cardinality estimation. 

%To validate the impact of key parameters on SegSketch's performance, we conduct experiments with various configurations on the mixed MAWI and CAIDA datasets. We examine how IP segment width $G$ affects prefix length estimation error and detection accuracy, and how host bitmap size impacts detection accuracy.

%These experiments assess the influence of different IP segment widths $G$ on common prefix length estimation and super host detection, and the impact of varying sizes of the host bitmap used for subnet cardinality estimation on super host detection. 

%We likewise present only the super spreader results, as the difference from super receiver detection is solely the choice of source or destination IP as the host key, yielding similar performance.%Finally, we also validated the impact of different ratios of super hosts in the dataset on SegSketch's detection performance.

%\textbf{Common Prefix Length Estimation Accuracy.} 

\textbf{Varying the IP segment width}. %To validate the impact of different segment widths on common prefix length estimation accuracy, we conduct experiments using SegSketch configured with segment widths $G$ of 2, 4, 6, and 8 bits on the mixed MAWI and CAIDA datasets. The experimental results, shown in Figure~\ref{fig:prefixll}, illustrate the error between the true and estimated common prefix lengths. Across both mixed datasets, when $G=2$ and $G=4$, the prefix estimation error remains very low at approximately 0.3, indicating highly accurate common prefix length inference under fine-grained segmentation. Although the estimation error increases as $G$ grows to 6 and 8, the error values are still low, remaining below 1.5. These results confirm that the halved-segment hashing achieves robust and reliable common prefix length estimation across a wide range of segment widths. It maintains accurate estimation even with moderately large segment widths, avoiding the high hash overhead associated with excessively small $G$. 
To assess the impact of segment width on prefix length estimation, we evaluate SegSketch with $G$ set to 2, 4, 6, and 8 bits using the mixed MAWI and CAIDA datasets. As shown in Figure~\ref{fig:prefixll}, when $G=2$ or $G=4$, the average absolute estimation error remains near 0.3, demonstrating highly accurate inference under fine-grained segmentation. Although the error increases with $G=6$ and $G=8$, it remains below 1.5 across both datasets. This confirms that halved-segment hashing provides robust and accurate prefix estimation across a broad range of segment widths.
We further evaluate the effect of varying $G$ on super host detection, with the results in Appendix B.3. Given the trade-off between accuracy and computational overhead, we set $G=4$.

\vspace{-2mm}
\begin{figure}[htbp]
    \centering  
    \includegraphics[width=0.63\linewidth]{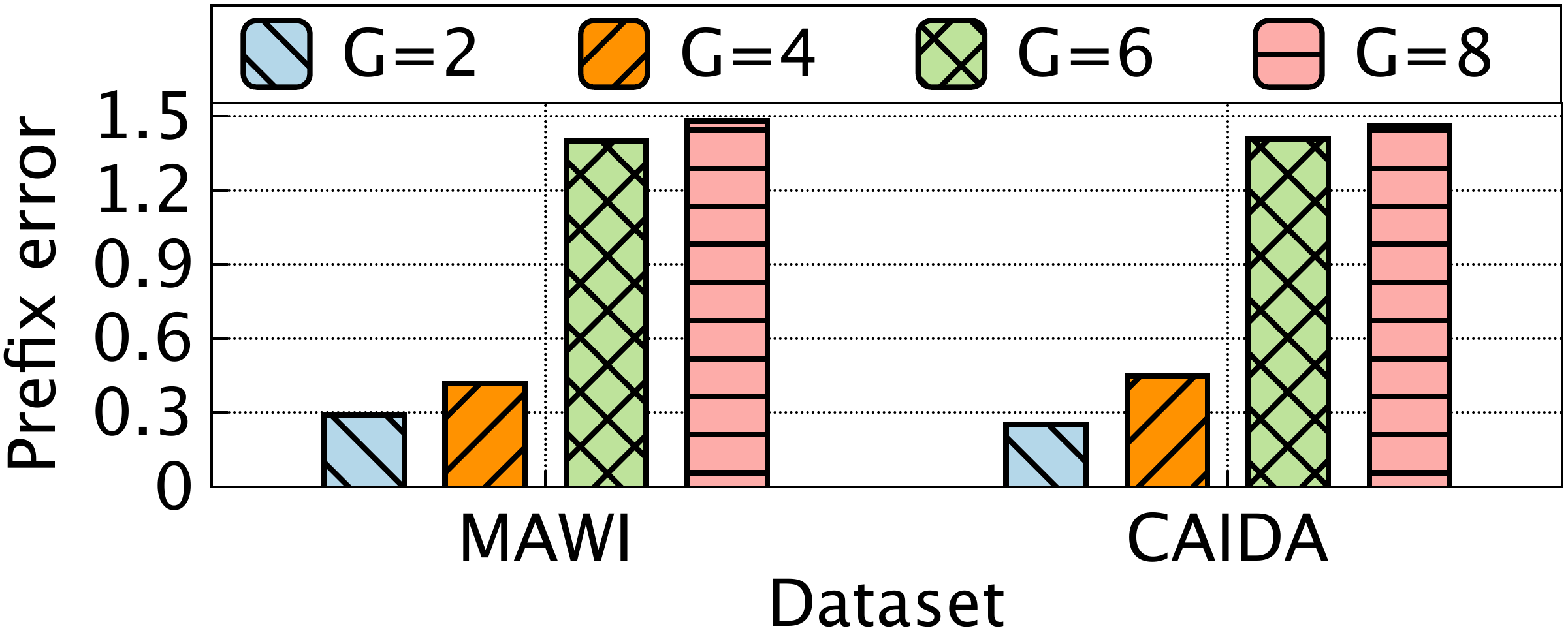}
    \vspace{-2mm}
    \caption{Common prefix length estimation error.}
     %\vspace{-4mm}
    \label{fig:prefixll}
\end{figure}
\vspace{-2mm}

\textbf{Varying the size of the host bitmap}. 
To evaluate the impact of various host bitmap sizes on super host detection, we configure SegSketch with host bitmap sizes ranging from 0.2KB to 1KB and conduct experiments on the mixed CAIDA dataset.  

%As illustrated in Figure~\ref{fig:size}, when the total memory is small, configurations with relatively large (0.75KB and 1KB) host bitmaps cause insufficient sketch buckets to store all true super hosts. This leads to intense hash collisions among hosts, and thus true super hosts are frequently evicted, causing a notably low F1‑Score under small total memory conditions. For example, with a total sketch memory of 32KB, SegSketch configured with a 0.5KB bitmap achieves an F1‑Score of 0.78, outperforming the 0.75KB and 1KB configurations by 45.52\% and 85.71\%, respectively. 
%Conversely, a small bitmap size, such as 0.25KB, provides insufficient bitmap cells relative to cardinality of the host. This results in severe hash collisions among connections of the same host, degrading cardinality estimation accuracy and yielding the highest ARE. Inaccurate cardinality estimation further causes erroneous bucket replacements between hosts, preventing accurate identification of true super hosts. Consequently, even with the maximum sketch memory of 512KB, the configuration of 0.25KB bitmap fails to achieve an F1-score of 1.0. 
%These observations demonstrate that both too small and too large host bitmap sizes are suboptimal, emphasizing the necessity of balanced configuration. Therefore, we set each host bitmap as 0.5KB.
\vspace{-3mm}
 \begin{figure}[htbp]
    \centering
   
    % Case2
    \subfigure[F1-Score]{
        \includegraphics[width=0.47\linewidth]{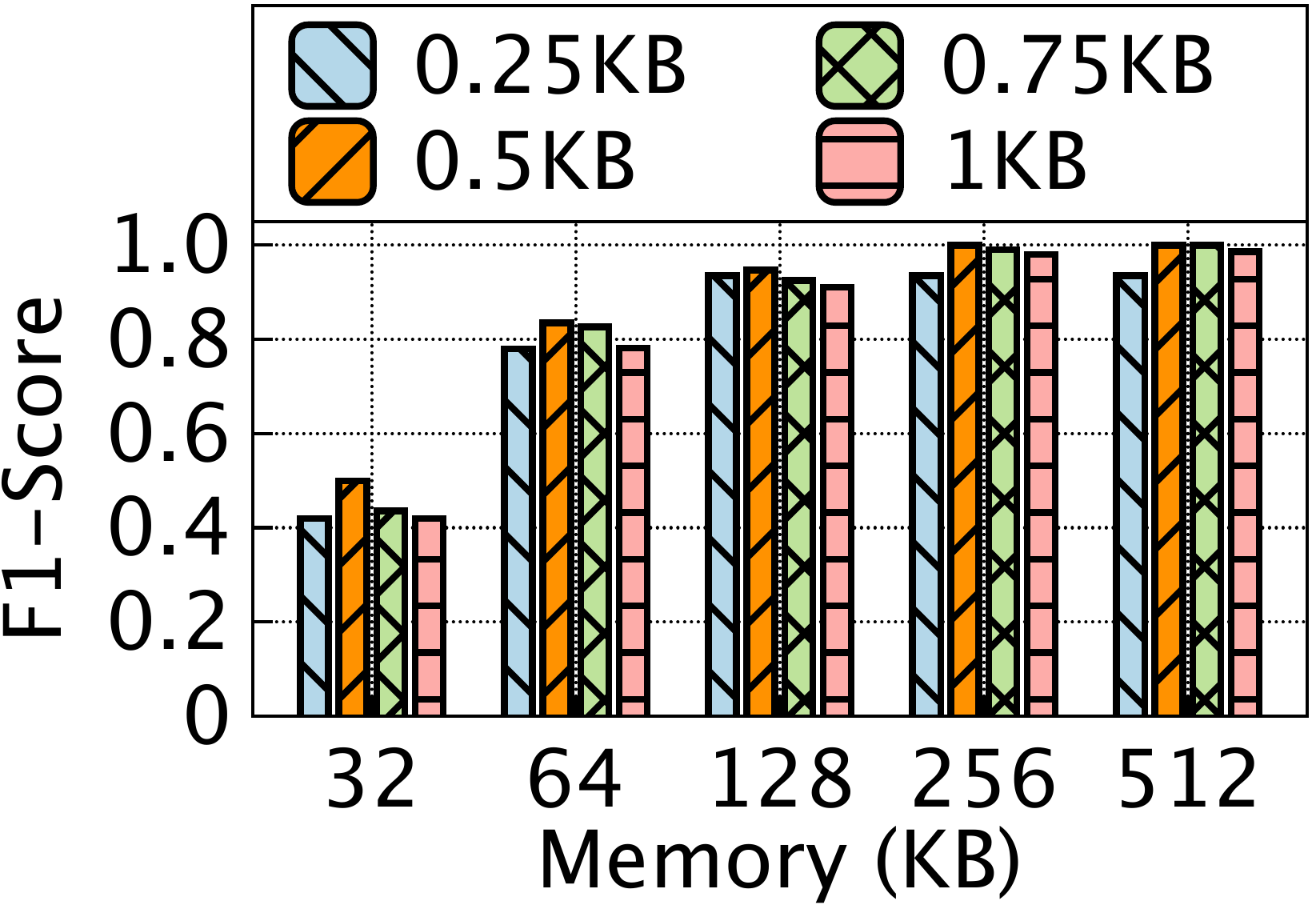}
    \label{fig:c}

    }
    % Case3
    \subfigure[ARE]{
        \includegraphics[width=0.47\linewidth]{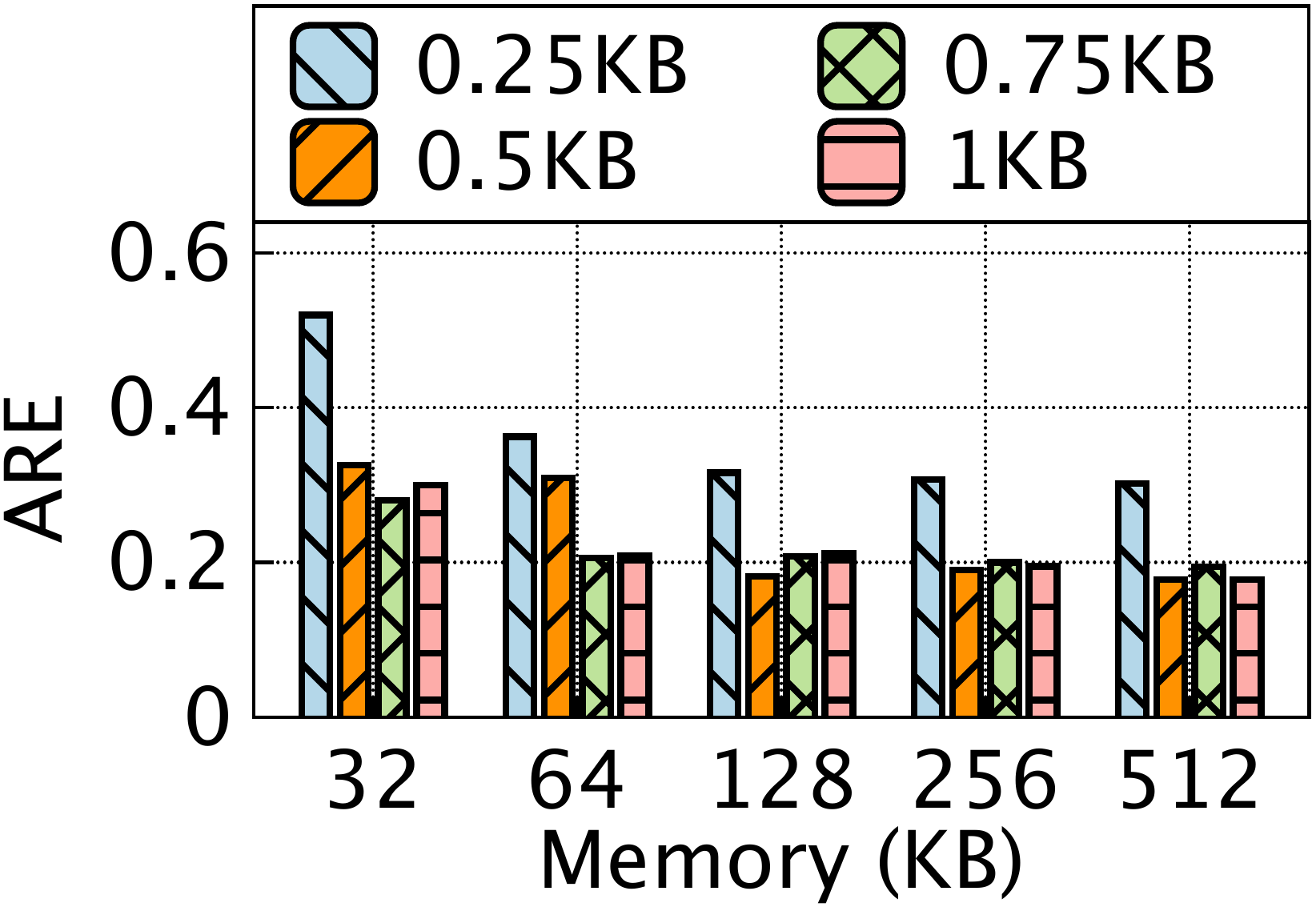}
    \label{fig:d}
    }
   \vspace{-3mm}
    \caption{Impact of varying host bitmap size.}
    \label{fig:size}
\end{figure}
\vspace{-2mm}

As illustrated in Figure~\ref{fig:size}, large host bitmap sizes (e.g., 0.75KB and 1KB) cause insufficient sketch buckets under limited total memory, leading to intense hash collisions among hosts and wrong evictions of true super hosts. For instance, with a total memory of 32KB, a 0.5KB bitmap achieves an F1-Score of 0.5, outperforming the 0.75KB and 1KB settings by 14.67\% and 19.04\%, respectively. 
Conversely, a small bitmap size (e.g., 0.25KB) provides too few bitmap cells relative to a host's cardinality, causing severe hash collisions among connections of the same host and degrading cardinality estimation accuracy. This leads to erroneous replacements and hinders correct super host identification, such that even with 512KB memory, the F1-Score remains below 1.0.
These results indicate that both excessively small and large bitmap sizes are suboptimal. Therefore, we set each host bitmap as 0.5KB.

 % \vspace{-3mm}

\subsection{Throughput}\label{AA}

%We compare the processing throughput of the four methods across varying memory sizes using the mixed MAWI and CAIDA datasets. Due to space constraints, results on the mixed MAWI dataset are detailed in Appendix B.2. The results presented in Figure~\ref{fig:th-caida} clearly show that SegSketch consistently outperforms all the other methods in terms of throughput across all the datasets. For instance, in the mixed MAWI dataset, SegSketch achieves the highest throughput of 29 Mpps, even under the minimal memory condition of 32 KB. The superior performance of SegSketch can be primarily attributed to its lightweight halved-segment hashing strategy and its compact bitmap-based design for subnet cardinality estimation. These structures significantly reduce memory overhead compared to RHHH, which relies on hierarchical estimators, SpreadSketch, which uses multi-level bitmap scanning, and Couper, which incorporates dual-layer estimators. These results clearly demonstrate that SegSketch not only enhances detection accuracy but also maintains high processing efficiency, making it highly suitable for real-time deployment in high-speed network environments.

We evaluate the throughput of all four methods across varying memory sizes using the mixed MAWI and CAIDA datasets. The results for the mixed MAWI are provided in Appendix B.4. As shown in Figure~\ref{fig:th-caida}, SegSketch consistently achieves the highest throughput% across all memory settings
, reaching 28 Mpps even under the minimal memory condition of 32KB. This performance is attributed to its lightweight halved-segment hashing and compact bitmap-based cardinality estimation, reducing overhead compared to RHHH’s hierarchical estimators, SpreadSketch’s multi-level scanning in  Multi-Resolution Bitmap, and Couper’s dual-layer design. These results confirm that SegSketch offers both high detection accuracy and processing efficiency, making it well suited for deployment in high-speed networks.
 
\vspace{-3mm}
\begin{figure}[htbp]
    \centering  
    \includegraphics[width=0.63\linewidth]{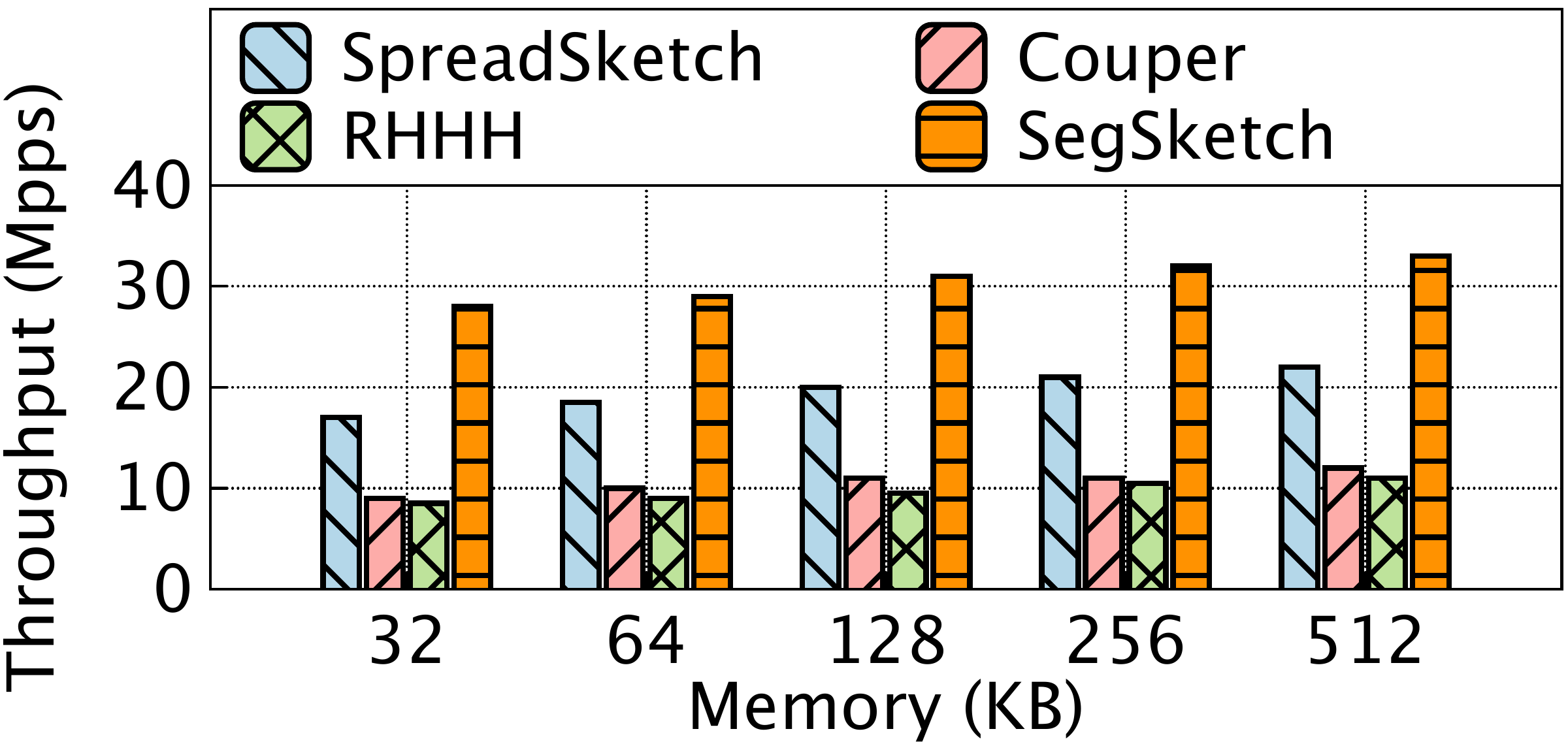}
    \vspace{-2mm}
    \caption{Throughput on the mixed CAIDA dataset.}
     %\vspace{-4mm}
    \label{fig:th-caida}
\end{figure}
\vspace{-3mm}

\vspace{-1mm}
\section{Performance on P4}
\label{sec:P4 implementation}

We implement the prototypes of SpreadSketch, Couper, RHHH and SegSketch using P4 \cite{p4} and deploy them on the Wedge 100BF-32X programmable switch \cite{tofino}. 
Due to the lack of support for complex instructions on the Tofino architecture, certain data structures such as heaps cannot be efficiently implemented in the data plane, and thus we integrate the control plane as assistance. For the halved-segment hashing strategy of SegSketch on P4, we derive a mask from the hash value of the IP segment and use it to extract the corresponding region of bits from the bitmap. A value greater than zero indicates that at least one bit within the specified region has been set, otherwise the region remains unmarked.

%We compare the P4 resource usage of SegSketch with SpreadSketch, Couper and RHHH in Table~\ref{tab:p4}. Specifically, the usage of the hash distribution unit is reduced to 11.11\% in SegSketch, lower than the other three methods. Couper consumes 22.22\% of the hash distribution unit due to its heavy reliance on hash operations for indexing two layers of register arrays. For RHHH, since multiple layers of cardinality estimators are simultaneously maintained for each candidate host, the number of hash operations is significantly higher than the other three schemes. For the on-chip memory (e.g., SRAM) usage across different algorithms, it is observed that both SegSketch and SpreadSketch exhibit lower SRAM consumption, which can be attributed to their use of highly compact registers. Furthermore, the usage of Gateways, VLIW instructuions and Stateful ALU of SegSketch is the lowest. Overall, SegSketch achieves resource-efficient deployment at programmable switches.

 %\vspace{-2mm}

\vspace{-2mm}
\setlength{\tabcolsep}{2pt} 
\renewcommand{\arraystretch}{0.9}  % 将行距缩小为原来的90%
\begin{table}[!htp]
  \caption{P4 usage percentage comparison.}
 \vspace{-1mm}
  \label{tab:p4}
  \begin{tabular}{lcccc}
    \toprule
    \textbf{Resource} & \textbf{SpreadSketch} & \textbf{Couper} & \textbf{RHHH} & \textbf{SegSketch} \\
    \midrule
    Match Crossbar    & 2.66\% & 6.17\% & 9.07\% & 4.12\% \\
    Hash Dist Unit         & 12.50\% & 18.06\% & 33.33\% & 11.11\% \\
    Gateways          & 6.77\% & 8.85\% & 15.10\% & 4.69\% \\
    VLIW Instructions &3.39\%  &  4.43\%      &7.55\%  &2.34\% \\
    SRAM             &1.67\% & 12.60\% & 3.65\% & 1.77\% \\
    TCAM             & 0.35\% & 0.35\% & 1.74\% & 0.69\% \\
    Stateful ALU      & 12.50\% & 12.50\% & 25.00\% & 8.33\% \\
    \bottomrule
  \end{tabular}
\end{table}
\renewcommand{\arraystretch}{1.0} % 恢复默认行距
\setlength{\tabcolsep}{6pt}       % 恢复默认列距
\vspace{-3mm}

Table~\ref{tab:p4} compares the P4 resource usage of SegSketch with SpreadSketch, Couper, and RHHH. SegSketch reduces the hash distribution unit usage to 11.11\%, lower than other methods. Couper incurs 18.06\% due to frequent hash operations across two register layers, while RHHH shows the highest due to maintaining multiple estimators per host. For on-chip memory, SegSketch and SpreadSketch exhibit lower SRAM usage owing to compact register usage. SegSketch also exhibits the lowest consumption of gateways, VLIW instructions, and stateful ALUs. These results highlight the resource efficiency of SegSketch for deployment on programmable switches.

%\vspace{-1mm}
\section{Related Work}
\label{sec:related work}
\subsection{Sketch-Based Cardinality Estimation}

%Existing sketch-based cardinality estimation methods fall into two main categories. The first focuses on single-host estimation  \cite{lc, mrb, hll}, employing compact data structures such as bitmaps or register arrays to estimate the cardinality of a single source or destination. The second targets multi-host estimation  \cite{gskt, vhs, rerskt, spreadsketch, cds, vbf, kjoin}, typically combining multiple single-host estimators (e.g., Linear Counting or Multi-Resolution Bitmaps  \cite{mrb}). 

Existing sketch-based cardinality estimation methods fall into two categories. The first focuses on single-host cardinality estimation \cite{lc, mrb, hll}, using compact structures such as bitmaps or register arrays to estimate the cardinality of a single source or destination. The second targets multi-host cardinality estimation \cite{gskt, vhs, rerskt, spreadsketch, cds, vbf, kjoin} by combining multiple single-host cardinality estimators such as Multi-Resolution Bitmaps \cite{mrb}.
   
%Specifically, SpreadSketch \cite{spreadsketch} maintains a candidate host key, a Multi-Resolution Bitmap, and the maximum number of leading zeros in the hash string of the IP per bucket. Replacement is determined by comparing leading-zero counts. CDS  \cite{cds} and VBF  \cite{vbf} use algebraically reversible hashes based on the Chinese Remainder Theorem to recover high-cardinality host keys without explicit storage. To address inefficiency in uniform memory allocation, where most hosts have low cardinality and only a few are large, Couper  \cite{couper} adopts a two-layer structure leveraging the coupon collector's principle. Small flows are retained in the first layer, while potential large flows are promoted to a second layer.

Specifically, SpreadSketch \cite{spreadsketch} stores candidate host key, Multi-Resolution Bitmap, and the maximum number of leading zeros in the IP hash string per bucket, with replacement decided by comparing leading-zero counts. CDS \cite{cds} and VBF \cite{vbf} employ algebraically reversible hashing based on the Chinese Remainder Theorem to recover high-cardinality hosts. To mitigate inefficiency from uniform memory allocation under skewed cardinality distribution, Couper \cite{couper} introduces a two-layer structure guided by the coupon collector’s principle, retaining small flows in the first layer and transferring large ones to the second layer.

However, these approaches struggle to distinguish malicious super hosts from benign ones that connect to many diverse IPs across the whole network, as they ignore whether connections share the same subnet address, leading to high false positive rates.

\vspace{-2mm}
\subsection{Hierarchical Methods for HHH Detection}

Hierarchical Heavy Hitters (HHHs) are flows sharing a common IP prefix whose cumulative packet counts exceed a given threshold. %The work in~ \cite{alenex} proposes an algorithm that maintains a Space Saving~ \cite{space} instance at each common prefix level, updating relevant counters in all levels per packet and estimating frequencies by traversing the hierarchy to subtract sub-prefix contributions. 

%The work of  \cite{alenex} proposes an algorithm that maintains a Space Saving \cite{space} instance per prefix level, updating the relevant counters in all layers per packet and estimating frequencies by traversing the hierarchy to subtract sub-prefix contributions. While effective, this incurs $O(H)$ update cost per packet. Other methods, such as  \cite{tkdd08} with $O(H \log(N/\varepsilon))$ and  \cite{itc09} with $O(H^{3/2}/\varepsilon)$ complexity, also suffer from high processing overhead. To address this issue, RHHH \cite{sigcomm17} reduces update complexity to $O(1)$ by randomly selecting a prefix level to update for each packet and scaling the estimates by the inverse sampling probability. 
%, making it practical for high-speed environments.

The work of \cite{alenex} maintains a Space Saving \cite{space} instance per prefix level, updating all $H$ layers per packet and estimating frequencies by subtracting sub-prefix contributions, which incurs $O(H)$ update cost per packet. Other methods, such as \cite{tkdd08} with $O(H\cdot \log(N/\varepsilon))$ and \cite{itc09} with $O(H^{3/2}/\varepsilon)$ complexity, where $\varepsilon$ is the allowed relative estimation error for each flow’s frequency, also suffer from high processing overhead. RHHH \cite{sigcomm17} reduces the complexity to $O(1)$ by randomly selecting a prefix level to update for each packet.

Although these hierarchical methods can detect super hosts with common IP prefixes by replacing counters with single-host cardinality estimators, this requires maintaining estimators for all possible prefix lengths since subnet lengths are unknown beforehand and can vary across networks. This leads to rapidly increasing memory overhead, making them unsuitable for on-chip deployment.

\vspace{-2mm}
\section{Conclusion}
\label{sec:conclusion}
%We propose SegSketch, a novel sketch for accurate and efficient super-host detection that captures both high cardinality and common subnet structure. SegSketch adopts a lightweight halved-segment hashing strategy to infer subnet prefix lengths and employs a bitmap-based estimator to measure cardinality within subnets. Theoretical analysis quantifies its estimation error. We implement SegSketch on a Barefoot Tofino switch using P4 and validate its performance through trace-driven experiments, showing superior accuracy, memory efficiency, and throughput compared to state-of-the-art methods.
    
We propose SegSketch, a novel sketch for accurate and efficient super-host detection that accounts for both high cardinality and common subnet address characteristics. SegSketch employs a lightweight halved-segment hashing strategy to infer subnet prefix lengths and utilizes bitmap-based cardinality estimators to estimate cardinality within subnet. Theoretical analysis quantifies the cardinality estimation error of SegSketch. We implement SegSketch on a Barefoot Tofino switch using P4 and conduct trace-driven experiments, confirming that SegSketch outperforms state-of-the-art solutions in detection accuracy, memory efficiency, and throughput.

\vspace{-2mm}
\section*{Acknowledgment}
We thank the anonymous reviewers for their valuable comments. This work was supported by the National Natural Science Foundation of China (U24A20245, 62132022) and the Science and Technology Innovation Program of Hunan Province (2024RC1005). This work was carried out in part using computing resources at the High Performance Computing Center of Central South University. Jiawei Huang is the corresponding author.

\bibliographystyle{ACM-Reference-Format}
\bibliography{reference}

%%
%% If your work has an appendix, this is the place to put it.
\appendix

\section{Theoretical Analysis}

\subsection{Cardinality Estimation Error Bound}
\begin{proof}

%To compute the expectation term, we analyze the process of hashing $U$ distinct flows into a bitmap of size $M$. There are $\binom{M}{t}$ ways to choose which $t$ out of $M$ bits will be set. Once the $t$ bits are selected, it should be ensured that each of them receives at least one hashed flow. The probability that all $t$ bits receive at least one of the $U$ flows is given by the inclusion-exclusion principle as
%\begin{align}
%    Pr = 
%    \sum_{i=0}^{t} (-1)^i \binom{t}{i}
%    \left( 1 - \frac{i}{t} \right)^{U}.
%\end{align}
%Furthermore, the probability that all $U$ hashed values fall into these $t$ bits is $\left(\frac{t}{M}\right)^{U}$. Thus, the probability that $U$ distinct flows occupy exactly $t$ bits in the bitmap is
%\begin{align}
%    P(U, t) 
%= \binom{M}{t} \frac{t^{U}}{M^{U}}
%\sum_{i=0}^{t}
%(-1)^{i}\binom{t}{i}
%\left(1-\frac{i}{t}\right)^{U},
%\quad
%t \le M,\ t \le U,
%\end{align}
%where $M$ is the effective bitmap size determined by the hash strategy, and $U$ is the number of flows that are not in the targeted subnet but are misclassified into it.

For a flow, the probability that it is hashed to a specific bit $j\in\{1,\dots,M\}$ in a bitmap of size $M$ is $M^{-1}$. Conversely, the probability that the flow is not hashed to this bit is $1-M^{-1}$. Assuming $R$ independent and uniformly hashed flows, the probability that none of them is hashed to bit $j$ is $(1-M^{-1})^{R}$.
Therefore, the probability that bit $j$ is set by at least one flow is
\begin{equation}
\Pr(Z_j=1)=1-(1-M^{-1})^{R}.
\end{equation}

By linearity of expectation, the expected number of set bits $T=\sum_{j=1}^{M} Z_j$ in a bitmap of size $M$ hashed by $R$ flows is
\begin{equation}
\mathbb{E}[T]=\sum_{j=1}^{M} \mathbb{E}[Z_j]
= M\Bigl[\,1-(1-M^{-1})^{R}\,\Bigr].
\label{eq:tong}
\end{equation}
%which is the estimated cardinality using Linear Counting.

Considering the estimation error $\varepsilon$ introduced by the Linear Counting algorithm, using the Markov inequality, we have
\begin{align}
\Pr \left\{ |\hat{C}(x) - C(x)| \geq \varepsilon \right\}
\leq {\varepsilon}^{-1}{\mathbb{E}\!\left[|\hat{C}(x)-C(x)|\right]}.
\label{eq:mark}
\end{align}

Let $U$ denote the number of flows that are not in the targeted subnet but are misclassified into it, and these flows induce estimation error between $\hat{C}(x)$ and $C(x)$. Based on equation (\ref{eq:tong}), the expected estimation error of subnet cardinality is
\begin{align}
    \mathbb{E}[|\hat{C}(x)-C(x)|] 
= M\left[\,1 - \bigl(1 - M^{-1}\bigr)^{U}\,\right].
\label{eq:expect}
\end{align}

Substituting equation \eqref{eq:expect} into equation \eqref{eq:mark} yields the bound in \eqref{eq:unified-bound}

as follows:
\begin{equation}
\Pr \left\{ |\hat{C}(x) - C(x)| \geq \varepsilon \right\}
\;\le\;
\varepsilon^{-1}\,
M\left[\,1 - \bigl(1 - M^{-1}\bigr)^{U}\,\right].
\label{eq:unified-bound}
\end{equation}

\iffalse
\begin{table*}
\centering
\caption{Variable values under different hashing strategies}
\vspace{-2mm}
\begin{tblr}{
  width = \linewidth, 
  colspec = {
    Q[c] 
    Q[l] 
    Q[l]
  },
  cells = {c},
  rowsep = 0.9pt,
  cell{1}{1} = {r=2}{},
  cell{1}{2} = {c=2}{},
  hline{1,3,6} = {-}{},
  hline{2} = {2-3}{},  % 第二条横线扩展到第2列和第3列
}
\textbf{Variables} & \textbf{Hashing Strategies} & \\
                   & \textbf{Full-address}       & \textbf{Host-address} \\
$\varepsilon$      & $N(x)-M\left[\,1 - \bigl(1 - M^{-1}\bigr)^{N(x)}\,\right]$
                   & $C(x)+U-M\left[\,1 - \bigl(1 - M^{-1}\bigr)^{C(x)+U}\,\right]$ \\
$M$                & $2^{32}$
                   & $2^{32 - \lfloor l/G \rfloor \cdot G}$ \\
$U$                & $N(x)-C(x)$
                   & $(N(x)-C(x))\left(\tfrac{1}{2}\right)^{\lfloor l/G \rfloor}$
\label{tab:math}
\end{tblr}
\end{table*}
\fi

Instantiation of the two cases is as follows:

\textbf{Case1: full-address hashing}.
Let $M = 2^{32}$, $U = N(x)-C(x)$, and $\varepsilon = \varepsilon_{{full}}= %\delta \cdot N(x)$.
N(x)-M\left[\,1 - \bigl(1 - M^{-1}\bigr)^{N(x)}\,\right]$. 
Substituting these into equation \eqref{eq:unified-bound} yields
\begin{align}
\Pr \left\{ |\hat{C}_{{full}}(x) - C(x)| \geq \varepsilon_{{full}} \right\}\quad\quad\quad\quad\quad\quad\quad\quad\quad\quad&\notag\\
\le\varepsilon_{{full}}^{-1}\;
2^{32}\!\left[\,1-\left(1-2^{-32}\right)^{\,N(x)-C(x)}\right]
\Biggr..
\end{align}

\textbf{Case2: host-address hashing based on common prefix inference}. 
%Let $l$ denote the true subnet prefix length and $G$ the IP segment width under the halved-segment hashing strategy. 
Each of the $\lfloor l/G\rfloor$ segments matches the subnet prefix with probability $1/2$ based on the halved-segment hashing strategy, so the probability that one of the $U$ flows is mistakenly inferred to share an identical prefix with the targeted subnet is 
\begin{equation}
\alpha = \left(\frac{1}{2}\right)^{\lfloor \frac{l}{G}\rfloor}.
\end{equation}
    
Hence, we get $M = 2^{\,32-\lfloor l/G\rfloor G}$, $U = (N(x)-C(x))\alpha = (N(x)-C(x))\left(\tfrac12\right)^{\lfloor l/G\rfloor}$, and $\varepsilon = \varepsilon_{{host}} %\delta \cdot C(x)$.
=  C(x)+U-M\left[\,1 - \bigl(1-M^{-1}\bigr)^{C(x)+U}\,\right]$. 
Substituting these into equation \eqref{eq:unified-bound} gives
\begin{align}
\Pr& \left\{|\hat{C}_{{host}}(x)-C(x)| \geq \varepsilon_{{host}} \right\}
 \notag \\
\le\; &\varepsilon_{{host}}^{-1}\,
2^{\,32-\left\lfloor \frac{l}{G} \right\rfloor  G}
\Biggl[1-\left(1-2^{\left\lfloor \frac{l}{G}\right\rfloor  G-32}\right)^{(N(x)-C(x))\left(\tfrac12\right)^{\left\lfloor \frac{l}{G}\right\rfloor}}
\Biggr].
\end{align}

\end{proof}

\subsection{Comparison between Expected Cardinality Estimation Error of Two Hashing Strategies}
\begin{proof}
Based on Theorem 1, the difference between the expected estimation error of subnet cardinality through hashing the full IP addresses and that obtained by hashing only the host addresses is
\begin{equation}
\begin{aligned}
&\mathbb{E}\!\left[|\hat{C}_{{full}}(x)-C(x)|\right]
-
\mathbb{E}\!\left[|\hat{C}_{{host}}(x)-C(x)|\right] \\
&=
2^{32}\!\left[1-\left(1-2^{-32}\right)^{\,N(x)-C(x)}\right]\\
&-
2^{\,32-\left\lfloor \frac{l}{G} \right\rfloor  G}
\left[
1-
\left(1-2^{\left\lfloor \frac{l}{G}\right\rfloor  G-32}\right)^{(N(x)-C(x))\left(\tfrac12\right)^{\left\lfloor \frac{l}{G}\right\rfloor}}
\right].
\end{aligned}
\end{equation}

Let $Q = N(x) - C(x)$ and $r = \left\lfloor \frac{l}{G} \right\rfloor$, where $r,G\in\mathbb{N}_{+}$ and $rG<32$.
Then,

\begin{align}
\quad&\mathbb{E}\!\left[|\hat{C}_{{full}}(x)-C(x)|\right]
-
\mathbb{E}\!\left[|\hat{C}_{{host}}(x)-C(x)|\right] \notag\\
&= 2^{32 - r G}
\Bigl[
(2^{r G} - 1)
- 2^{r G} \bigl(1 - 2^{-32}\bigr)^{Q}
+ \bigl(1 - 2^{r G-32}\bigr)^{\,Q\left(\tfrac{1}{2}\right)^{r}}
\Bigr].
\label{eq:14}
\end{align}

Based on the Taylor expansion, we have
\begin{align}
&(1 - 2^{-32})^Q 
= 1 - Q \cdot 2^{-32} + \frac{Q(Q-1)}{2}\,(2^{-32})^2+ o\!\left((2^{-32})^2\right), \label{eq:taylo2}\\[1ex]
&(1 - 2^{\,r G-32})^{\,Q\left(\tfrac{1}{2}\right)^{r}}
= 1 - Q\left(\tfrac{1}{2}\right)^{r} \cdot 2^{\,r G-32} \notag\\
&+ \frac{Q\left(\tfrac{1}{2}\right)^{r}\!\Bigl(Q\left(\tfrac{1}{2}\right)^{r}-1\Bigr)}{2}\,(2^{\,r G-32})^2+ o\!\left((2^{\,r G-32})^2\right).
\label{eq:taylo}
\end{align}

Since $o\!\left((2^{-32})^2\right)$ and $o\!\left((2^{\,r\cdot G-32})^2\right)$ are negligible compared with the leading terms, 
we truncate the higher-order terms and approximate equation~\eqref{eq:taylo2} and equation~\eqref{eq:taylo} as
\begin{align}
&(1 - 2^{-32})^Q 
\approx 1 - Q \cdot 2^{-32} + \frac{Q(Q-1)}{2}\,(2^{-32})^2, \label{eq:taylor32}\\
&(1 - 2^{\,r G-32})^{\,Q\left(\tfrac{1}{2}\right)^{r}}
\approx 1 - Q\left(\tfrac{1}{2}\right)^{r} \cdot 2^{\,r G-32} \notag\\
&+ \frac{Q\left(\tfrac{1}{2}\right)^{r}\!\Bigl(Q\left(\tfrac{1}{2}\right)^{r}-1\Bigr)}{2}\,(2^{\,r G-32})^2.
\label{eq:18}
\end{align}

Substituting equations~\eqref{eq:taylor32} and~\eqref{eq:18} into equation~\eqref{eq:14}, and after simplification, we obtain
\begin{align}
&\mathbb{E}\!\left[ |\hat{C}_{{full}}(x) - C(x)| \right] 
- \mathbb{E}\!\left[ |\hat{C}_{{host}}(x) - C(x)| \right] \\ \notag
&\approx Q^{2}\,2^{-33}\!\left(2^{\,rG-2r}-1\right)
\;+\;
Q\!\left(1-2^{-r}-2^{\,rG-r-33}+2^{-33}\right)
> 0.
\end{align}

\begin{figure}[htbp]
    \centering
    \includegraphics[width=0.63\linewidth]{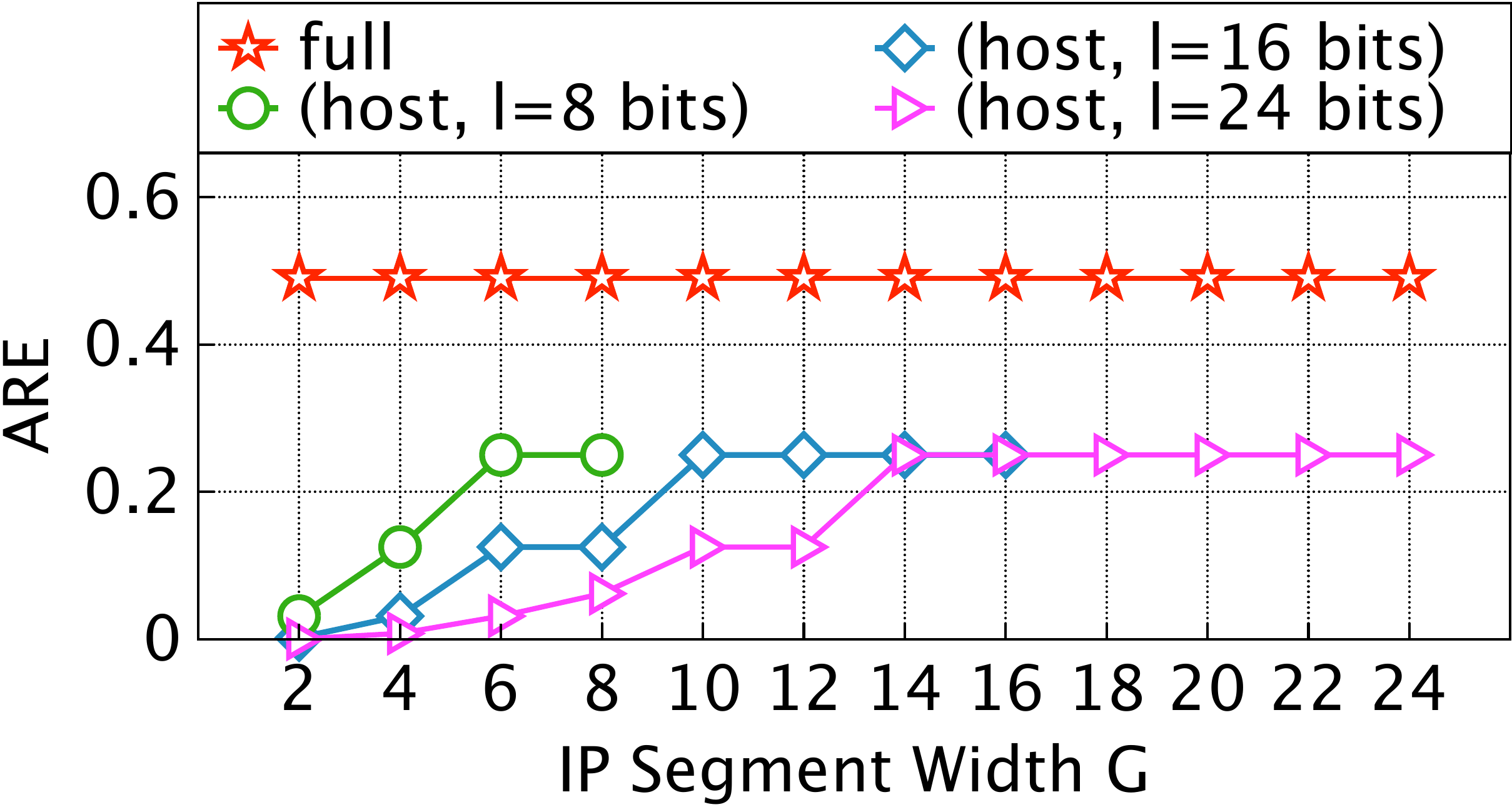}
    \vspace{-1mm}
    \caption{ARE of subnet cardinality estimation through full-address and host-address hashing. $l$ represents subnet length.}
    \vspace{-3mm}
    \label{fig:math}
\end{figure}

Figure~\ref{fig:math} compares the Average Relative Error (ARE) of $\mathbb{E} \left[ {|\hat{C}_{{host}}(x) - C(x)|}/{C(x)} \right]$ under varying IP segment width $G$ with $\mathbb{E} \left[{|\hat{C}_{{full}}(x) - C(x)|}/{C(x)} \right]$. The increase of ARE for host-address hashing as $G$ increases indicates that estimating subnet cardinality though host-address hashing yields lower estimation error with finer IP segmentation. Moreover, for all values of $G$, host-address hashing consistently achieves lower ARE than full-address hashing.

\end{proof}

\section{Experimental Results}

\subsection{Impact of Threshold Scaling Factor $\theta$ on Super Host Detection}

The threshold scaling factor $\theta$ adjusts the detection threshold for identifying super hosts. Specifically, it ensures that the threshold is proportional to the number of distinct connections a host can make within a subnet. Tuning $\theta$ can strike a balance between precision and recall, with the results shown in Table 3.

\begin{table}[ht]
\centering
\caption{Impact of $\theta$ on detection performance.}
\begin{tabular}{ccccc}
\toprule
\textbf{Memory (KB)} & \textbf{$\theta$} & \textbf{Precision} & \textbf{Recall} & \textbf{F1-Score} \\
\midrule
 & 0.35 & 0.45 & 0.36 & 0.40 \\
32   & 0.50 & 0.90 & 0.35 & 0.50 \\
   & 0.65 & 0.91 & 0.34 & 0.50 \\
\midrule
 & 0.35 & 0.500 & 0.75 & 0.60 \\
64   & 0.50 & 1.00  & 0.73 & 0.84 \\
   & 0.65 & 1.00  & 0.38 & 0.55 \\
\midrule
 & 0.35 & 0.53  & 0.91 & 0.67 \\
128    & 0.50 & 1.00  & 0.90 & 0.95 \\
    & 0.65 & 1.00  & 0.52 & 0.68 \\
\midrule
 & 0.35 & 0.57  & 1.00 & 0.73 \\
256    & 0.50 & 1.00  & 1.00 & 1.00 \\
    & 0.65 & 1.00  & 0.66 & 0.80 \\
\midrule
 & 0.35 & 0.59  & 1.00 & 0.74 \\
512    & 0.50 & 1.00  & 1.00 & 1.00 \\
    & 0.65 & 1.00  & 0.68 & 0.81 \\
\bottomrule
\end{tabular}
\end{table}

  \vspace{-2mm}
Based on the above results, we recommend using smaller $\theta$ to reduce false negatives and prioritize recall, and larger $\theta$ to reduce false positives and prioritize precision. Considering these factors, we set $\theta$ as 0.5.

\subsection{Accuracy of Super Host Detection}

\textbf{Super spreader detection}. We further evaluate the detection performance of SegSketch on the mixed dataset constructed from MAWI2021 and UNSW-NB15, with results presented in Figure~\ref{fig:mawiii}. SegSketch again consistently outperforms all other schemes under all memory configurations. Under the smallest memory budget of 32KB, it achieves higher Precision by $3.53\times$, $4.45\times$, and $28.60\times$, higher Recall by $1.60\times$, $1.12\times$, and $2.45\times$, and higher F1-Score by $2.45\times$, $2.57\times$, and $5.29\times$, while reducing ARE by 88.57\%, 18.18\%, and 65.38\%, compared to SpreadSketch, Couper, and RHHH, respectively.

The advantage of SegSketch becomes more pronounced under lower memory due to its efficient subnet-aware design, which enables accurate super-host detection with minimal resource usage. As memory increases, all methods show improved performance. However, SegSketch maintains a clear lead due to its ability to distinguish hosts targeting a single subnet versus those spreading connections across the whole network. In contrast, SpreadSketch and Couper suffer from high false positives, while RHHH performs poorly under limited memory because of its costly multi-layer estimator structure.

    \vspace{-2mm}
\begin{figure}[htbp]
    \centering
    % Case1-1
    \subfigure[Precision]{
        \includegraphics[width=0.465\linewidth]{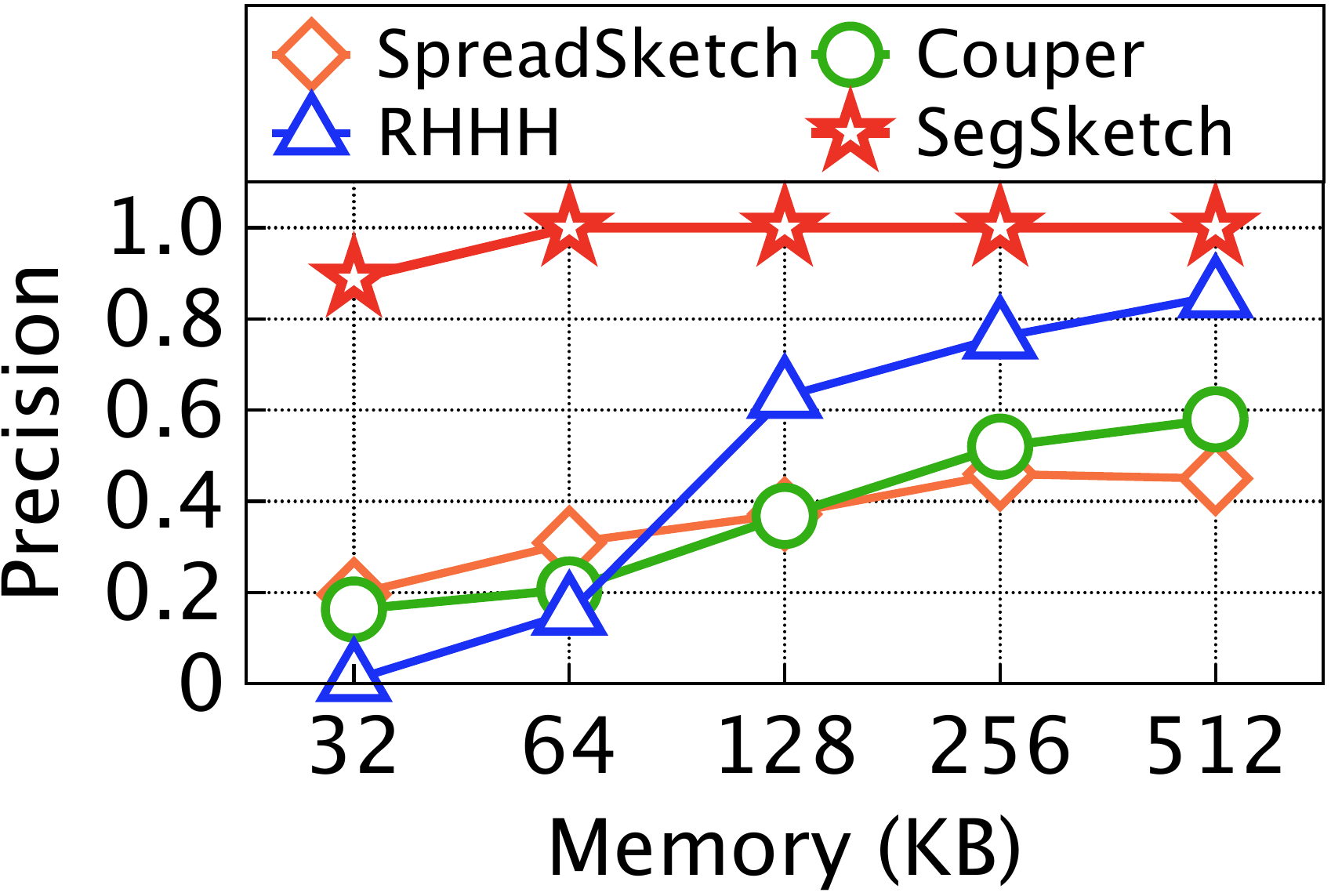}
     \label{fig:a}
    }
    \subfigure[Recall]{
        \includegraphics[width=0.465\linewidth]{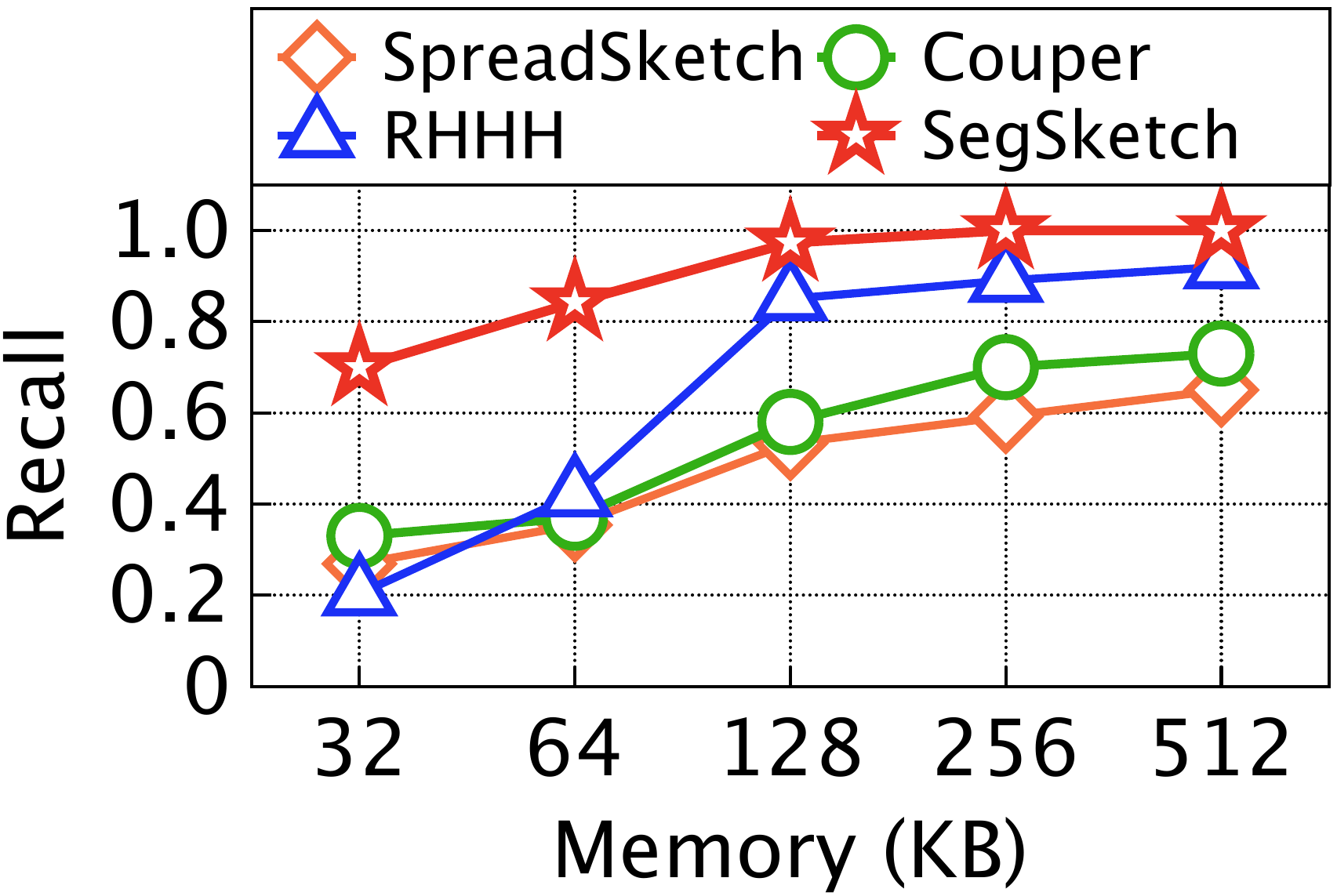}
     \label{fig:b}
    }
    \\
    % Case2
    \subfigure[F1-Score]{
        \includegraphics[width=0.465\linewidth]{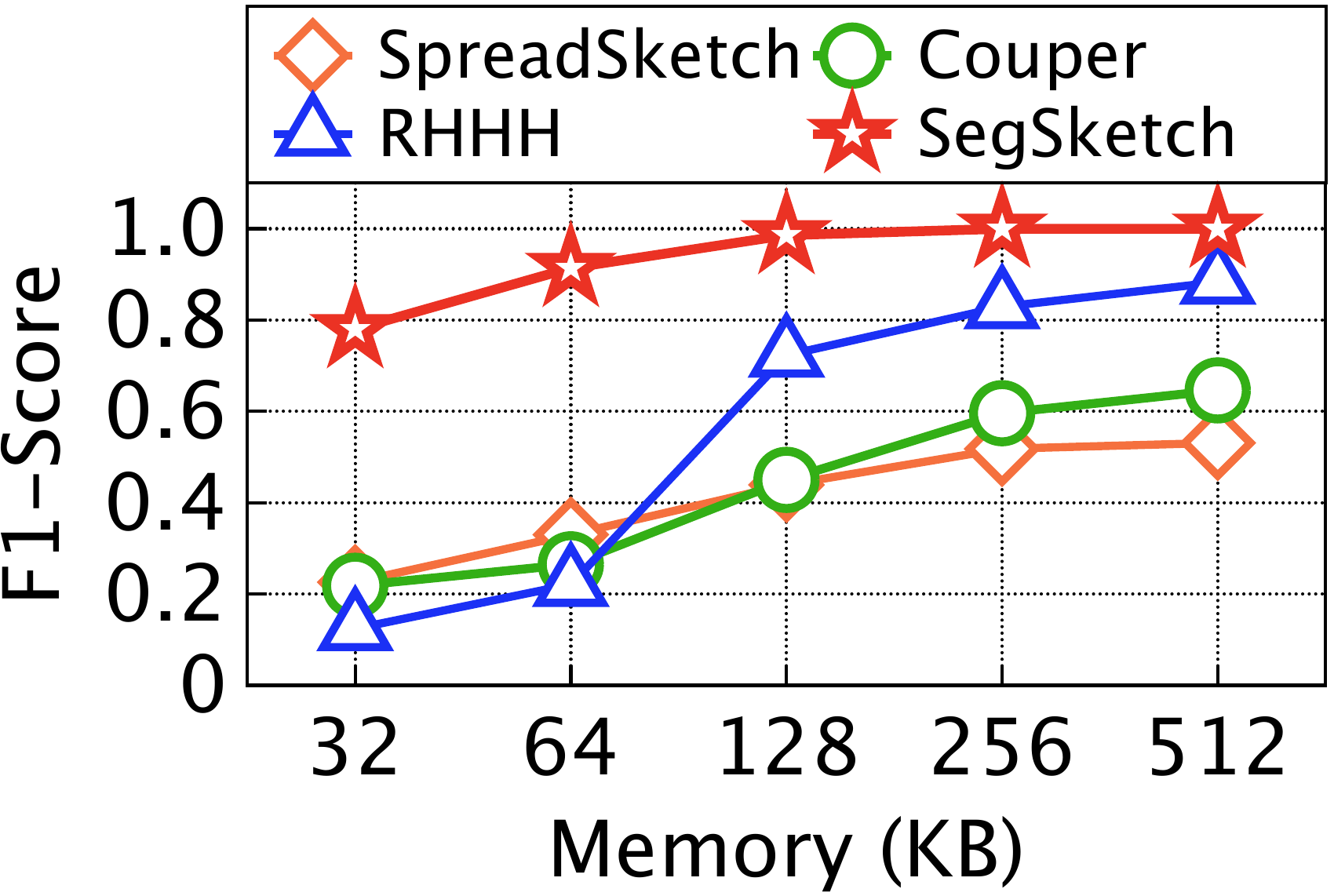}
    \label{fig:c}

    }
    % Case3
    \subfigure[ARE]{
        \includegraphics[width=0.465\linewidth]{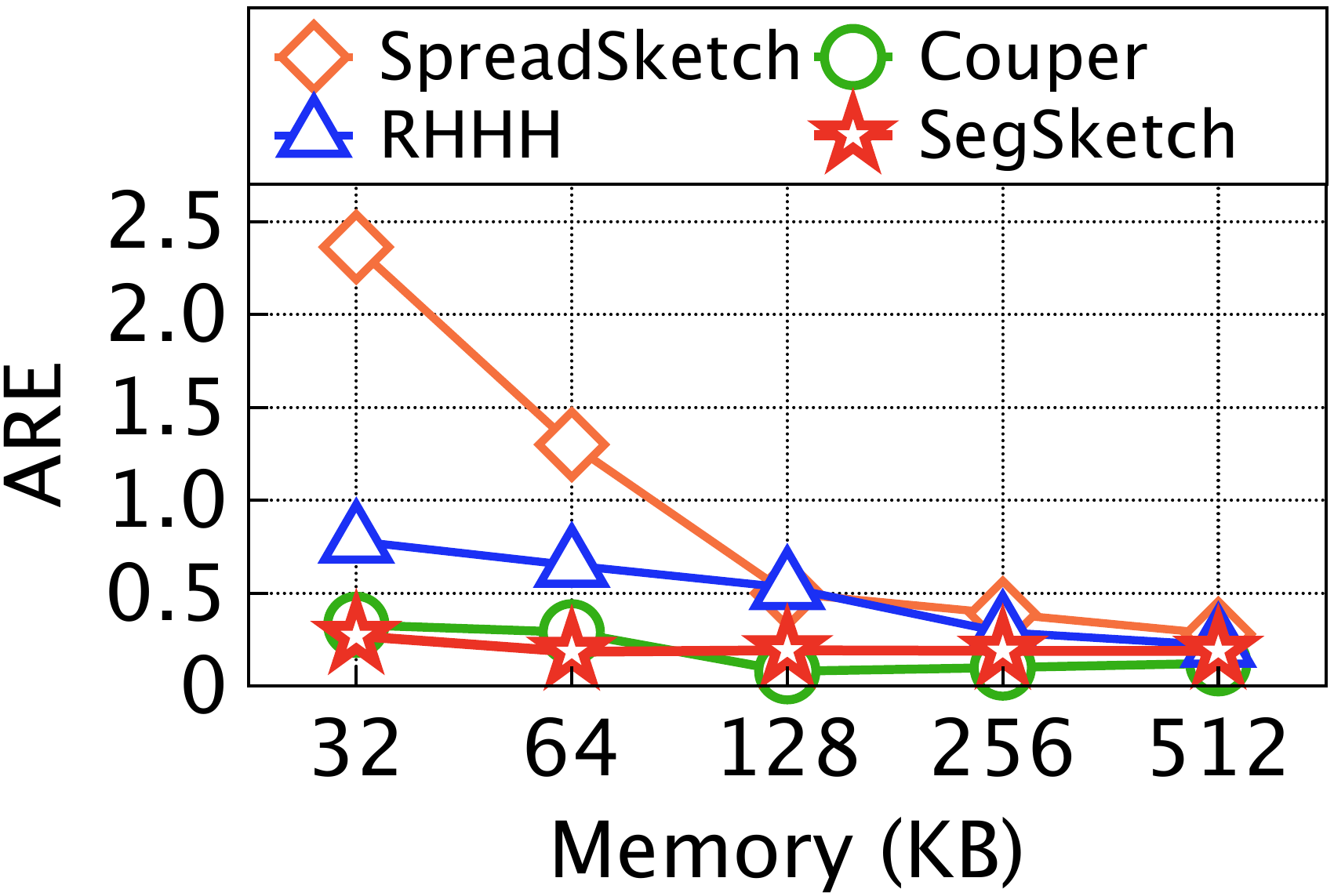}
    \label{fig:d}
    }
    \vspace{-1mm}
    \caption{Super spreader detection performance on the dataset mixing UNSW-NB15 with MAWI2021.}
   %\vspace{-3mm}
    \label{fig:mawiii}
\end{figure}

  \vspace{-3mm}
\begin{figure}[htbp]
    \centering
    % Case1-1
    \subfigure[Precision]{
        \includegraphics[width=0.465\linewidth]{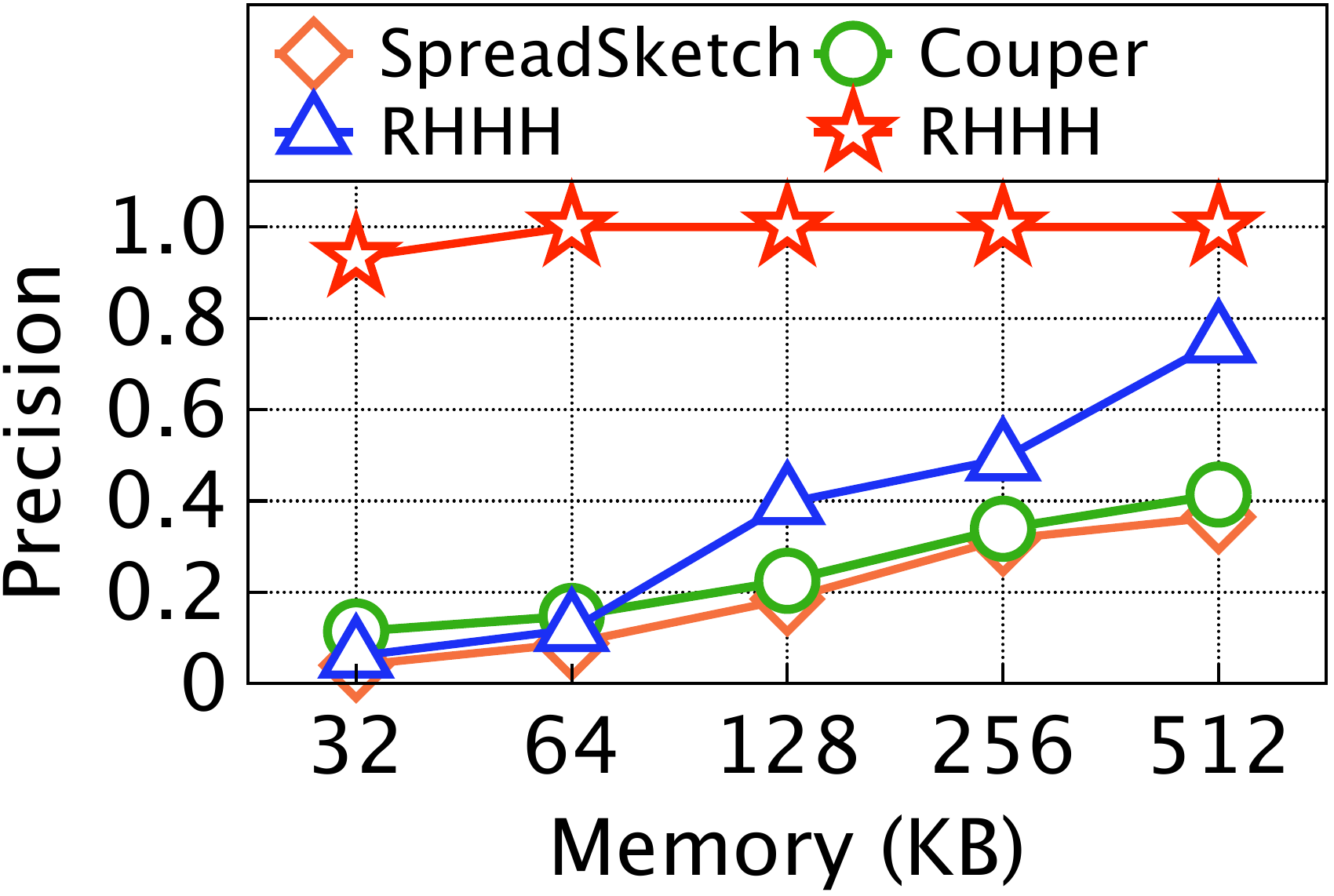}
     \label{fig:a}
    }
    \subfigure[Recall]{
        \includegraphics[width=0.465\linewidth]{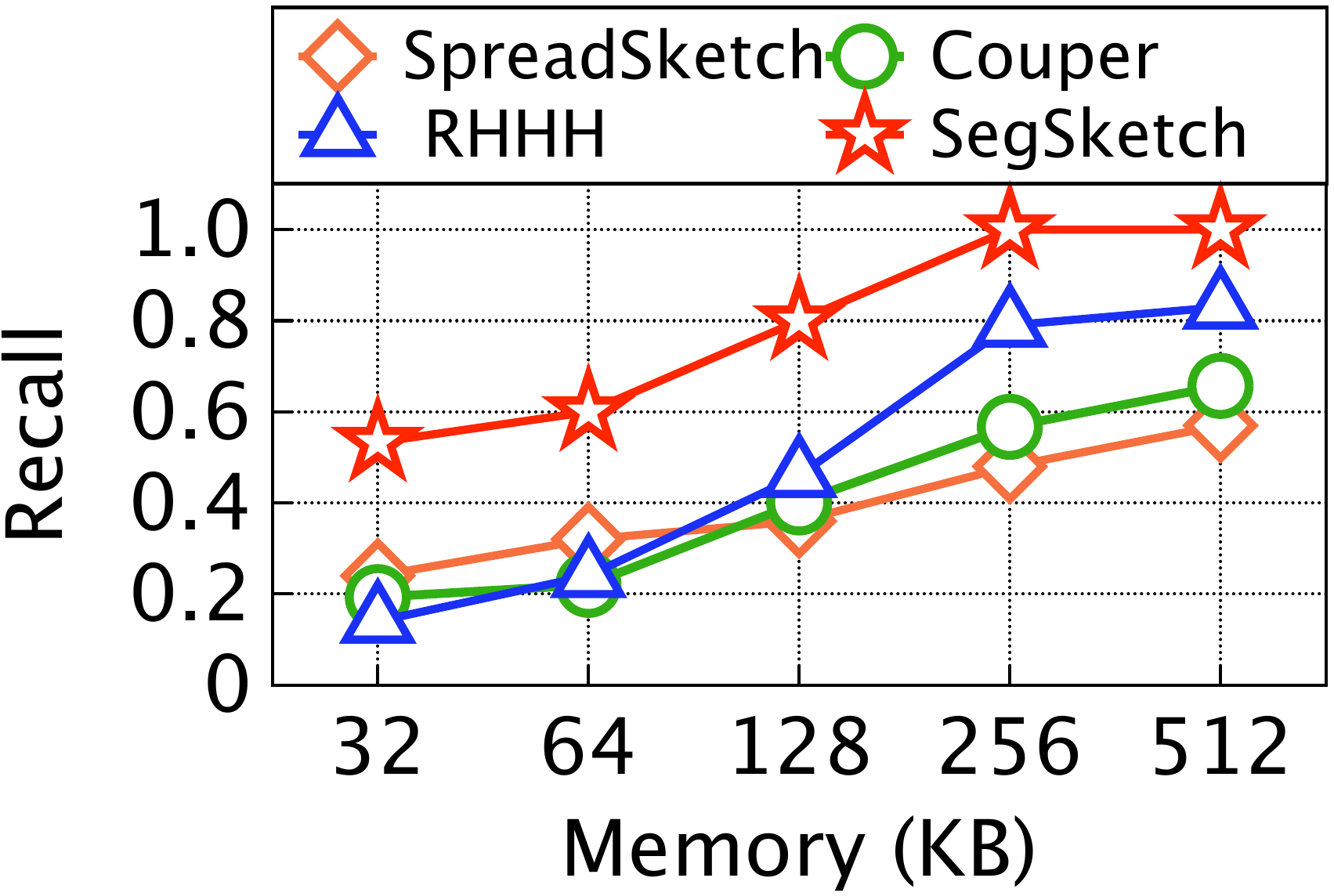}
     \label{fig:b}
    }
    \\
    % Case2
    \subfigure[F1-Score]{
        \includegraphics[width=0.465\linewidth]{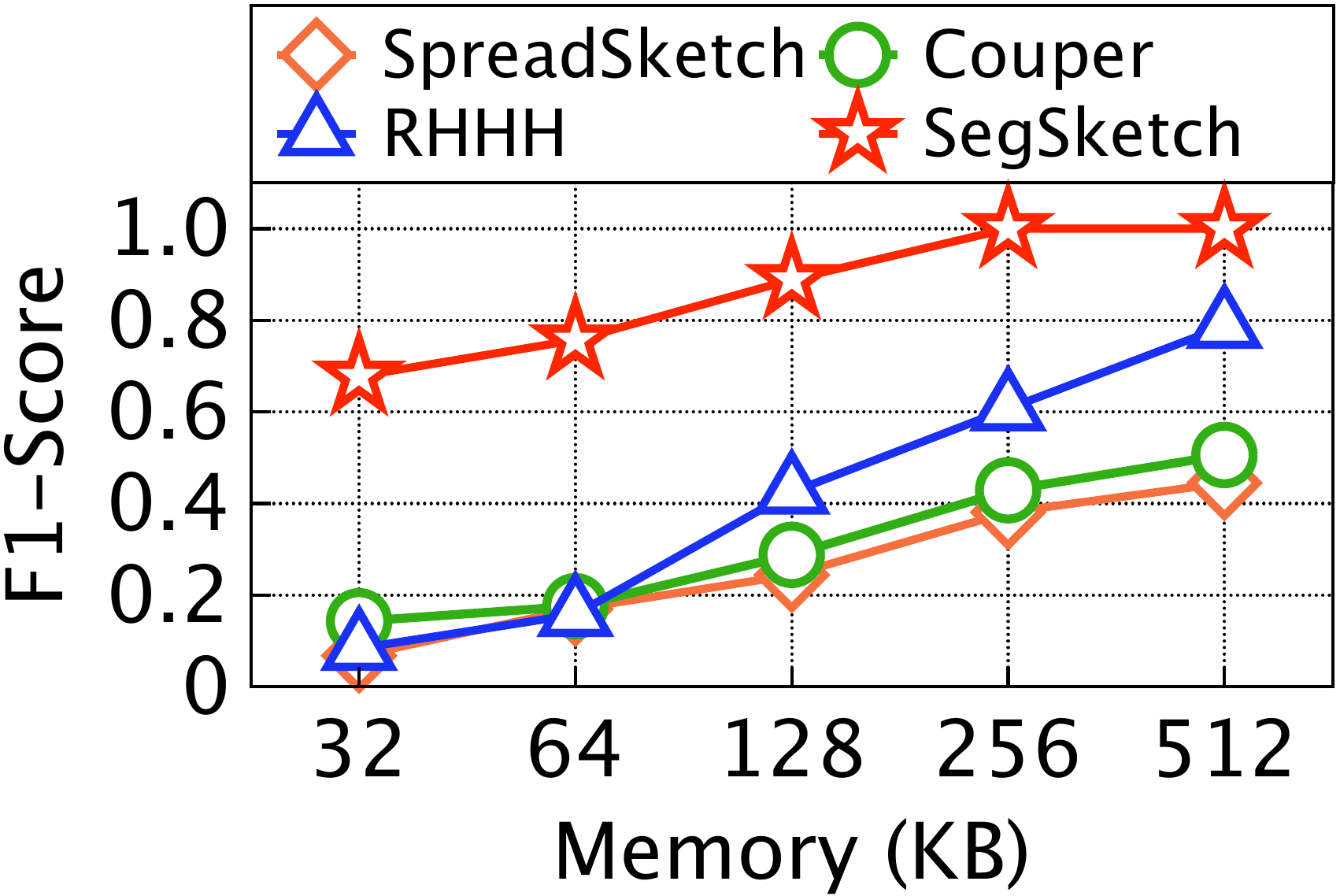}
    \label{fig:c}

    }
    % Case3
    \subfigure[ARE]{
        \includegraphics[width=0.465\linewidth]{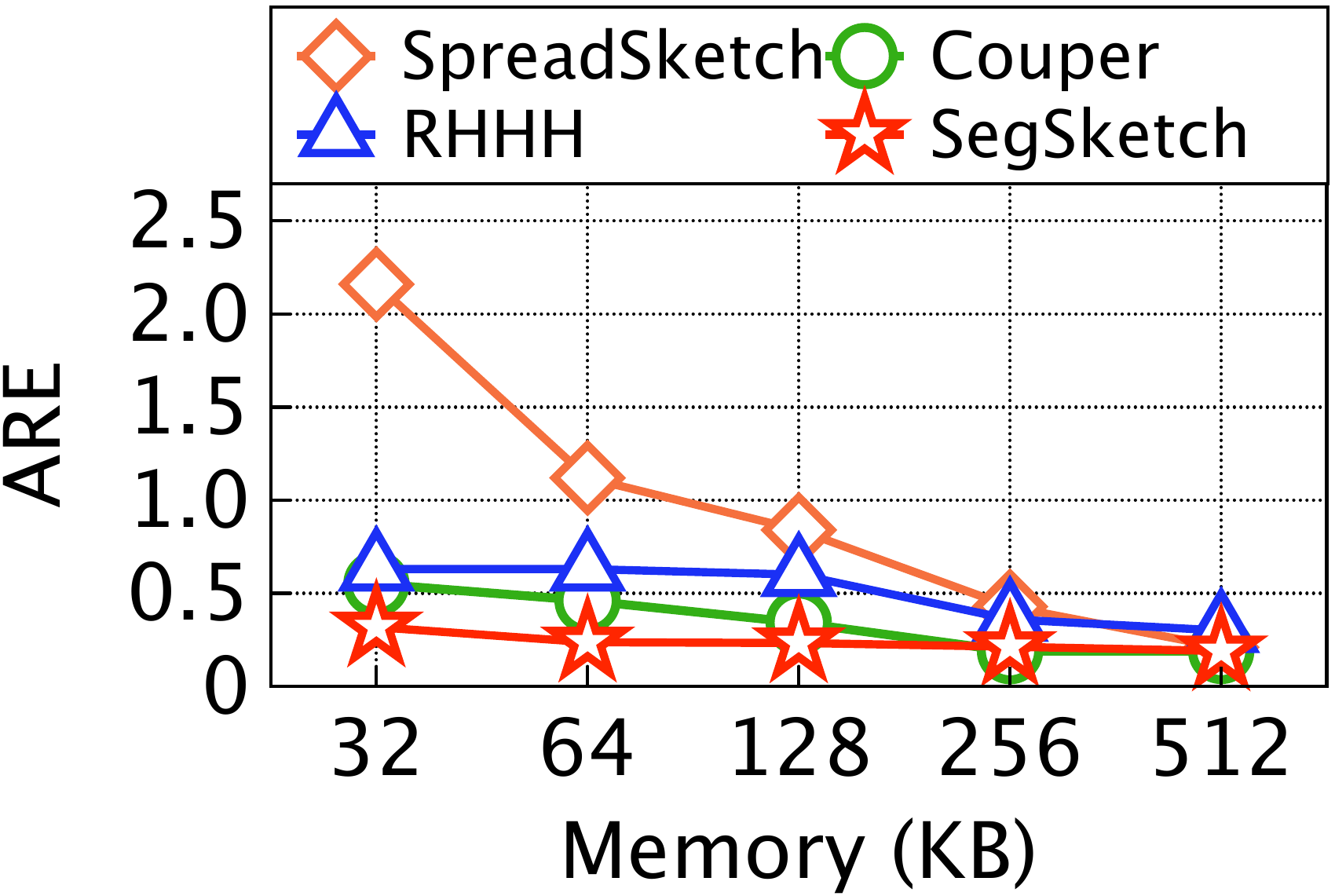}
    \label{fig:d}
    }
    \vspace{-1mm}
    \caption{Super receiver detection performance on the dataset mixing UNSW-NB15 with MAWI2021.}
    \label{fig:MAWI}
\end{figure}

\textbf{Super receiver detection}. For super receiver detection, we switch the host key to the destination IP address and conduct evaluation on the mixed MAWI dataset. Results in Figure~\ref{fig:MAWI} exhibit that, under 32KB memory, SegSketch improves F1-Score by $8.97\times$, $3.74\times$, and $7.07\times$ over SpreadSketch, Couper, and RHHH, respectively. These improvements highlight SegSketch’s advantage.

The primary performance improvement of SegSketch is still attributed to its subnet cardinality estimation mechanism, which reduces false positives by filtering out benign high-cardinality hosts that communicate with many distinct IP addresses across the whole network.

\subsection{Impact of IP Segment Width on Super Host Detection}

As shown in Figure \ref{fig:g1}, smaller $G$ enables finer segmentation and more accurate prefix estimation, leading to improved detection accuracy under limited memory. F1-Score increases as $G$ decreases from 8 to 2. Figure \ref{fig:g2} shows that ARE remains stable across different $G$, indicating that prefix estimation error has limited impact on subnet cardinality estimation. Given the trade-off between accuracy and computational overhead, we set $G=4$.

\vspace{-2mm}
\begin{figure}[htbp]
    \centering
    \subfigure[F1-Score of super host detection]{
        \includegraphics[width=0.72\linewidth]{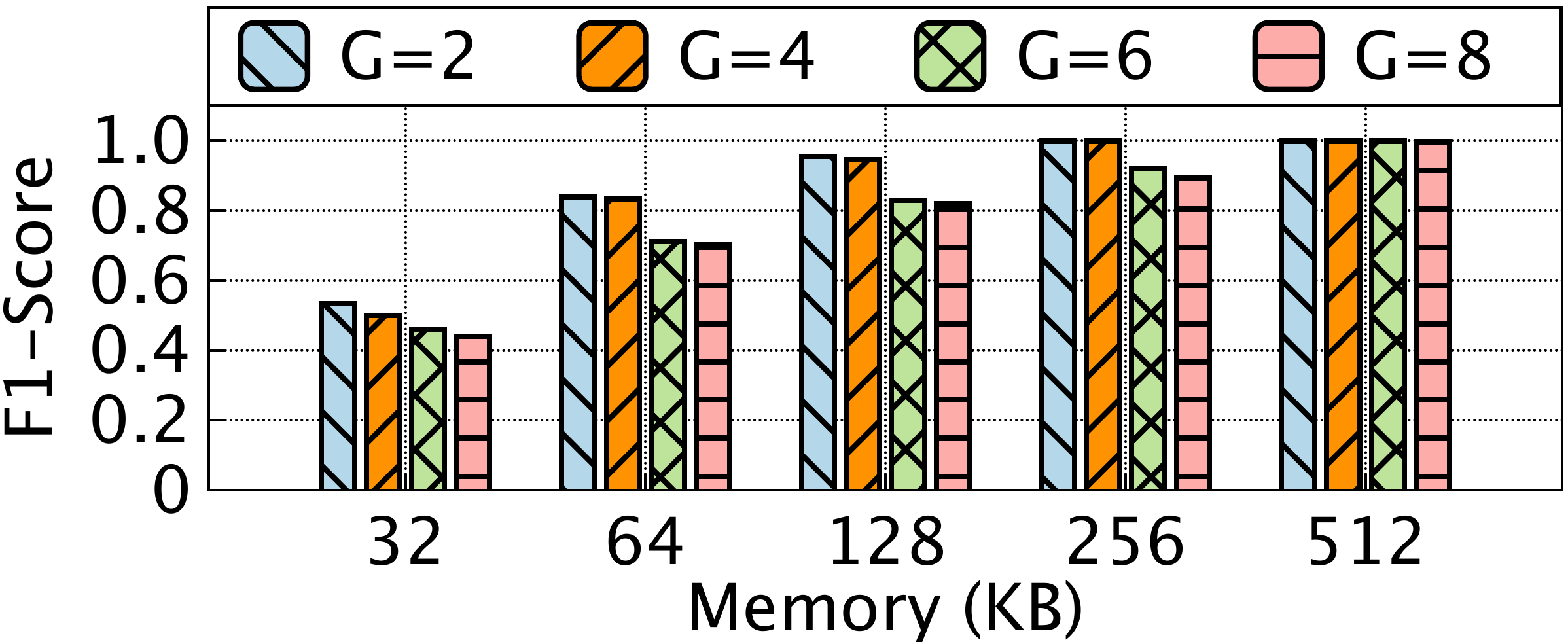}
  
     \label{fig:g1}
    }        
    \\
    % Case2
    \subfigure[ARE of super host detection]{
        \includegraphics[width=0.72\linewidth]{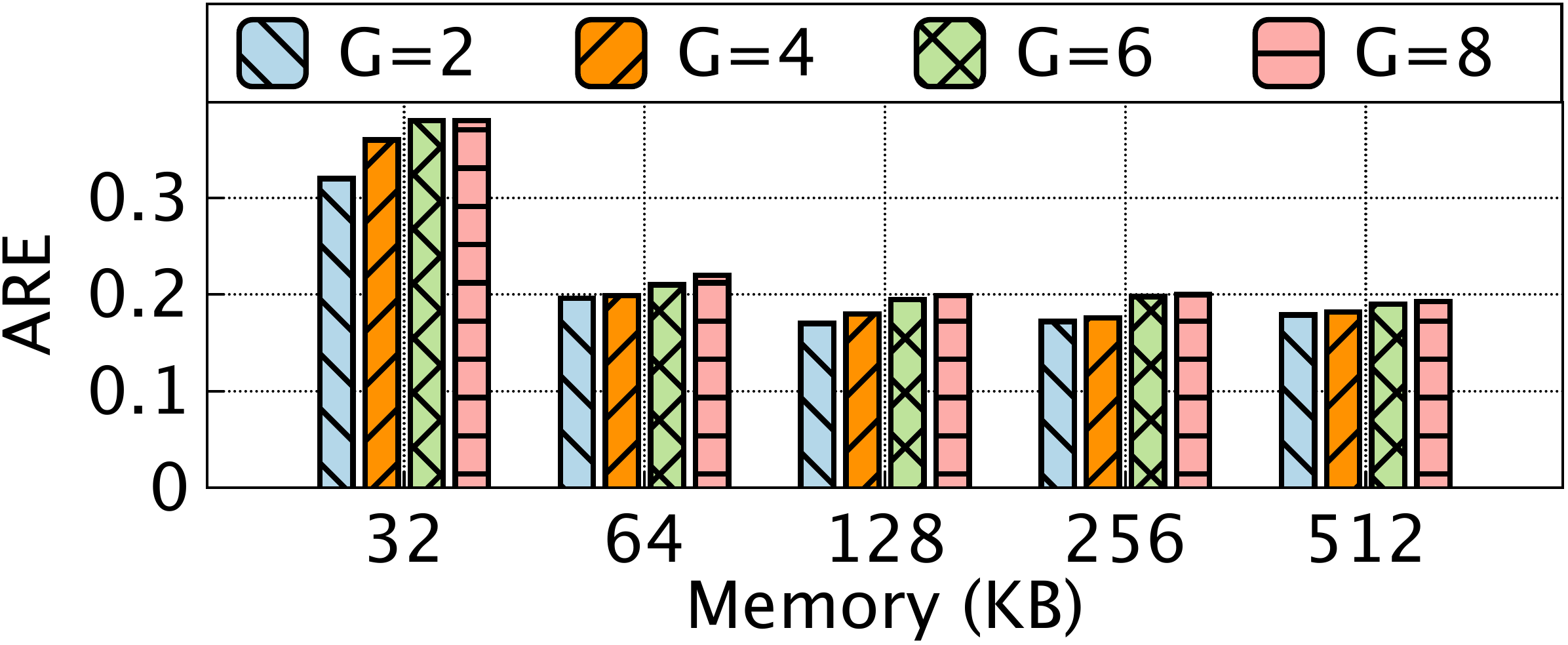}
 
    \label{fig:g2}

    }
  %\vspace{-3mm}
    \caption{Impact of varying IP segment width $G$.}
    %\vspace{-2mm}
    \label{fig:gg}
\end{figure}

\vspace{-2mm}
\subsection{Throughput}

We further evaluate the throughput of all four methods on the mixed MAWI dataset, with results shown in Figure~\ref{fig:th-mawi}. Consistent with the results on the mixed CAIDA dataset, SegSketch achieves the highest throughput across all memory settings. Under the tightest memory constraint of 32KB, it still maintains over 29 Mpps, significantly outperforming SpreadSketch, Couper, and RHHH.

\begin{figure}[htbp]
    \centering  
    \includegraphics[width=0.63\linewidth]{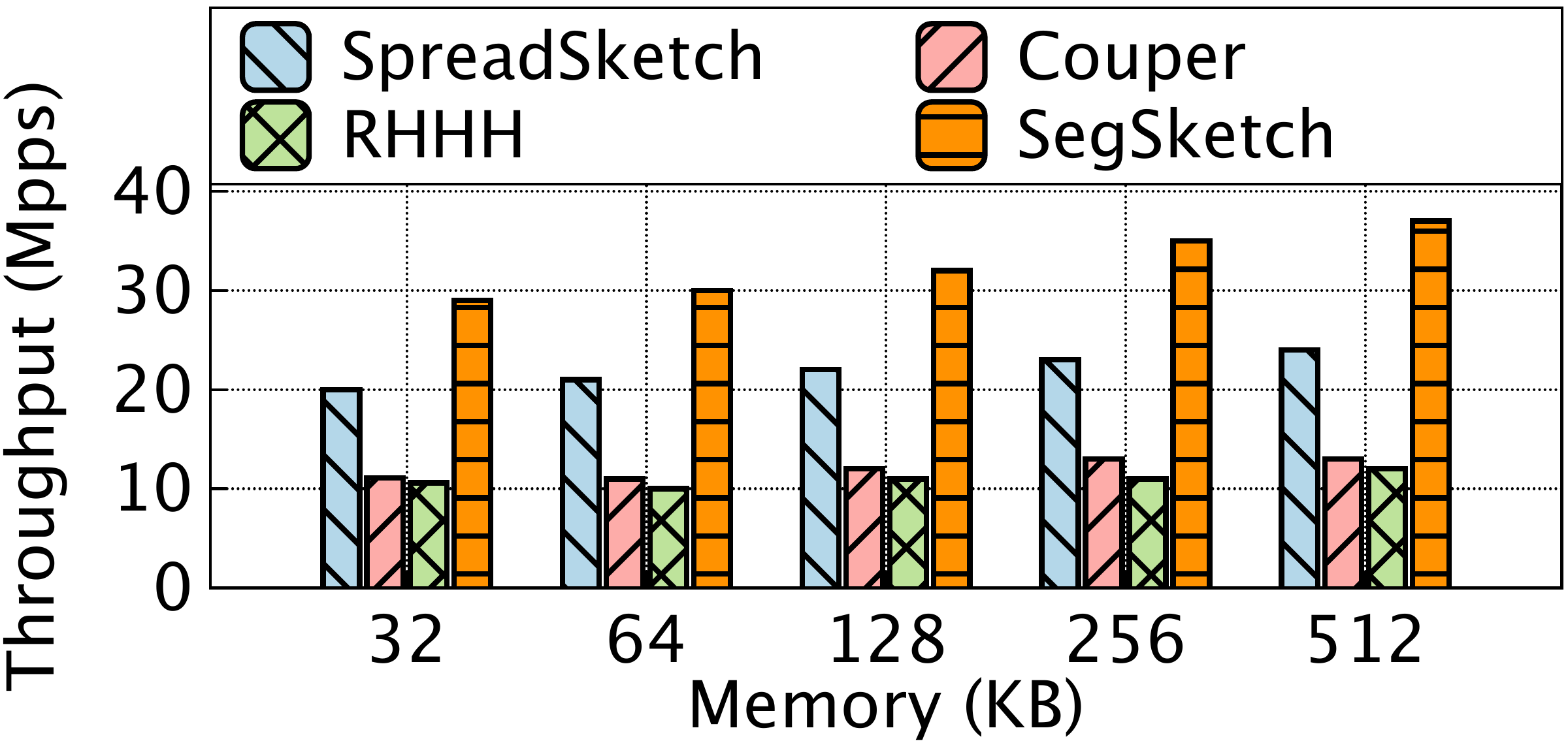}
    \caption{Throughput on the mixed MAWI dataset.}
    \label{fig:th-mawi}
\end{figure}

The efficiency of SegSketch stems from its computationally lightweight design, where the halved-segment hashing and the bitmap-based cardinality estimation enables fast updates. Conversely, SpreadSketch incurs extra cost from operations in Multi-Resolution Bitmaps, Couper requires maintaining dual-layer estimators, and RHHH suffers from complex multi-layer estimator updates. These observations further validate that SegSketch is not only memory-efficient and accurate, but also well-suited for online deployment in high-speed network environments.

%\section{EVALUATION}

%\subsection{Detection with Different Thresholds}

%\subsection{Deep Diving into SegSketch’s Operation}

\end{sloppypar}
\end{document}